\def\and{\mathrm{and}}
\newtheorem{lemma}{Lemma}
\newtheorem{prop}{Proposition}
\newtheorem{thm}{Theorem}
\newcommand{\be}{\begin{equation}}
\newcommand{\ee}{\end{equation}}
\newcommand{\bea}{\begin{eqnarray}}
\newcommand{\eea}{\end{eqnarray}}
\newcommand{\beas}{\begin{eqnarray*}}
\newcommand{\eeas}{\end{eqnarray*}}
\newcommand{\ba}{\begin{array}}
\newcommand{\ea}{\end{array}}
\newcommand{\nbox}{{\,\lower0.9pt\vbox{\hrule \hbox{\vrule height 0.2 cm \hskip 0.19 cm \vrule height 0.2 cm}\hrule}\,}}
\def\href#1#2{#2}
\begin{document}

\begin{titlepage}
\hfill
\vbox{
    \halign{#\hfil         \cr
           } % end of \halign
      }  % end of \vbox

\hbox to \hsize{{}\hss \vtop{ \hbox{}

}}

\vspace*{20mm}
\begin{center}

{\large \textbf{A construction of quarter BPS coherent states and Brauer}}

{\normalsize \vspace{5mm} }

{\large \textbf{algebras} }

{\normalsize \vspace{5mm} }

{\large \textbf{} }

{\Large \vspace{ 20mm} }

{\normalsize {Hai Lin${}^{1}$, Keyou Zeng${}^{2}$}  }

{\normalsize \vspace{10mm} }

{\small \emph{${}^1$\textit{Yau Mathematical Sciences Center, Tsinghua University,
Beijing 100084, P. R. China
}} }

{\normalsize \vspace{0.2cm} }

{\small \emph{$^2$\textit{Department of Physics, Tsinghua University,
Beijing 100084, P. R. China
\\
}} }

{\normalsize \vspace{0.4cm} }

\end{center}

\begin{abstract}

BPS coherent states closely resemble semiclassical states and they have gravity dual descriptions in terms of semiclassical geometries. The half BPS coherent states have been well studied, however less is known about quarter BPS coherent states. Here we provide a construction of quarter BPS coherent states. They are coherent states built with two matrix fields, generalizing the half BPS case. These states are both the eigenstates of annihilation operators and in the kernel of dilatation operator. Another useful labeling of quarter BPS states is by representations of Brauer algebras and their projection onto a subalgebra $\mathbb{C}[S_n\times S_m]$. Here, the Schur-Weyl duality for the Walled Brauer algebra plays an important role in organizing the operators. One interesting subclass of these Brauer states are labeled by representations involving two Young tableaux. We obtain the overlap between quarter BPS Brauer states and quarter BPS coherent states, where the Schur polynomials are used. We also derive superposition formulas transforming a truncated version of quarter BPS coherent states to quarter BPS Brauer states. The entanglement entropy of Brauer states as well as the overlap between Brauer states and squeezed states are also computed.

\end{abstract}

\end{titlepage}

\vskip 1cm

\section{Introduction}

\label{sec_Introduction}\renewcommand{\theequation}{1.\arabic{equation}} %
\setcounter{equation}{0}

The gauge/gravity correspondence \cite%
{Maldacena:1997re,Gubser:1998bc,Witten:1998qj} has provided a remarkable way
to study quantum gravity by quantum field theory on the boundary of the
spacetime. It nontrivially relates a quantum system without gravity to a
quantum theory with gravity. This correspondence reveals the emergence of
spacetime geometry from the degrees of freedom on the boundary. The bulk
spacetime dynamically emerges from the boundary quantum mechanical system
\cite{Rangamani:2016dms,VanRaamsdonk:2010pw,Horowitz:2006ct,Koch:2009gq}.
This duality further provides us a way to explore interesting quantitative
features of non-perturbative effects in string theory and quantum gravity,
since it allows us to perform calculations pertaining to the gravity side by
working in the quantum field theory side.

\vspace{1pt}

Based on very general arguments involving the supersymmetry algebra, one can
calculate exactly the correlation functions of some special class of
operators in $\mathcal{N}=4$ SYM, namely the BPS operators. The properties
of being protected from quantum correction make these quantities valuable in
the study of the field theory and their relation with the dual gravity side.
People now have a rather clear understanding of the dynamics of the half BPS
operators and the dual gravity picture. In the gravity dual, there are
backreacted geometries that correspond to these BPS states in the field
theory side \cite{Corley:2001zk,Berenstein:2004kk,Lin:2004nb}.

\vspace{1pt}

Coherent state \cite{Zhang:1990fy,Quantum information} is a very important
concept in quantum mechanics, often describing a state that most closely
resembles the behavior of a semiclassical state. Coherent states arise in a
wide range of physical systems and have applications in different fields
such as quantum optics. It was realized previously \cite{Berenstein:2005aa}
that coherent states also play an important role in the context of
gauge/gravity duality. The half BPS coherent states have been constructed,
and their various properties have been studied. And in \cite%
{Berenstein:2017abm}, the half BPS coherent states in $\mathcal{N}=4$ SYM
are related with the phenomenon of topology change in quantum gravity.

\vspace{1pt}

However, the analog of the half BPS coherent states for the quarter BPS
generalization has not been completely known in the literature. In this
paper, we will extend the previous definition of half BPS coherent states to
give a construction of quarter BPS coherent states, and study its relation
with other operators in the quarter BPS sector, such as the Brauer operators
\cite{Kimura:2007wy,Kimura:2010tx}. Then it is natural to proceed to
consider more complicated quarter BPS states. However, a systematic
understanding of quarter BPS operators is not an easy task. In this paper,
we will consider the large $N$ limit where the analysis gets simplified
since the dilation operator \cite{Beisert:2003tq,Kimura:2010tx} can be
simplified. And the construction of quarter BPS operators provides us
necessary ingredients to construct quarter BPS coherent states, which is a
main topic in this paper.

\vspace{1pt}

The quarter BPS states \cite{DHoker:2003csh,Ryzhov:2001bp} play important
roles in gauge/gravity duality. Apart from the multi-trace basis, the
quarter BPS states have representation bases, including the Brauer basis,
the restricted Schur basis, and the flavor symmetry basis. They have been
reviewed in for example \cite{Ramgoolam:2008yr,Koch:2009gq}. In addition to
these labelings, the quarter BPS coherent states serve as another labeling
of the operators or states in the Hilbert space. These states live in the
same Hilbert space, hence one can superpose them and compute transition
probabilities between states, and such operations have been performed in
\cite{Berenstein:2017abm,Diaz:2015tda,Brown:2006zk,Lin:2017dnz}. Different
states can be distinguished from each other, by carefully observing
correlation functions \cite%
{Skenderis:2007yb,Christodoulou:2016nej,Skenderis:2006di,Balasubramanian:2007qv}%
.

\vspace{1pt}

Schur-Weyl duality and its generalizations provide us a powerful set of
tools to organize the gauge theory operators and to relate them with
interesting configurations in the dual gravity theory. In many previous
examples, the representation of the symmetric group is used to construct
gauge invariant operators that have interesting geometric interpretation
\cite{Lin:2004nb,Corley:2001zk,Berenstein:2017abm,Lin:2017dnz}. In this
paper, we make use of the generalization of Schur-Weyl duality involving
Walled Brauer algebras, see \cite{Ramgoolam:2008yr,Kimura:2007wy}. The
Brauer algebras and Walled Brauer algebras \cite%
{Brauer:1937,Benkart:1994,Halverson:1996} can be regarded as a
generalization of the group algebra of symmetric group and play a similar
role in constructing gauge invariant operators. We will call the operators
labeled by representations of Brauer algebra, Brauer operators, and the
corresponding states in the Hilbert space, Brauer states. A subclass of
Brauer operators gives us useful examples of quarter BPS operators. Brauer
states share many features that are similar to Young tableau states in \cite%
{Corley:2001zk,Berenstein:2017abm,Lin:2017dnz}.

\vspace{1pt}

We will consider three sets of labelings of the states: the trace product
basis, the coherent states, and the Brauer states. Their relations will be
studied. We will calculate the overlap of these different states. It is
interesting that we obtain many results involving the Brauer states which
are very similar with our previous results of Young tableau states in \cite%
{Lin:2017dnz}. Besides, we can write superposition formulas transforming
between quarter BPS Brauer states and quarter BPS coherent states that
resemble our previous superposition formulas that describe topology change
\cite{Berenstein:2017abm,Lin:2017dnz}.

\vspace{1pt}

The organization of this paper is as follows. In Section 2, we describe
quarter BPS operators of the $\mathcal{N}=4$ SYM in the large $N$ limit, and
based on these results, we construct the quarter BPS coherent states. Then
in Section 3, we analyze the Brauer states and their relation to other
states including quarter BPS coherent states. Afterwards in Section 4, we
analyze the squeezed states that generalize the coherent states. In Section
5, we discuss our results and draw conclusions. Finally, we include
Appendices A and B for more details on Brauer algebras.

\vspace{1pt}

\section{Construction of quarter BPS coherent states}

\renewcommand{\theequation}{2.\arabic{equation}} \setcounter{equation}{0} %
\renewcommand{\thethm}{2.\arabic{thm}} \setcounter{thm}{0} %
\renewcommand{\theprop}{2.\arabic{prop}} \setcounter{prop}{0}

\label{sec_Young tableau states and entanglement}

\vspace{1pt}

\subsection{Trace product basis}

To begin with, we describe the general form of a multi-trace operator built
from two complex scalar fields. We can make use of symmetric group, since
the trace structure can be captured by a permutation $\alpha \in S_{n+m}$.
We consider operators of the form
\begin{equation}
\text{tr}(\alpha Z^{\otimes n}\otimes Y^{\otimes m})=Z_{i_{\alpha
(1)}}^{i_{1}}Z_{i_{\alpha (2)}}^{i_{2}}\cdots Z_{i_{\alpha
(n)}}^{i_{n}}Y_{i_{\alpha (n+1)}}^{i_{n+1}}Y_{i_{\alpha
(n+m)}}^{i_{n+m}},\quad \alpha \in S_{n+m}.
\end{equation}%
For example, for the case $n=m=2$, $\alpha =1$ corresponds to $(\text{tr}%
Z)^{2}(\text{tr}Y)^{2}$, $\alpha =(1234)$ corresponds to $\text{tr}%
(Z^{2}Y^{2})$, and $\alpha =(1324)$ corresponds to $\text{tr}(ZYZY)$. Note
that if two permutations $\alpha ,\alpha ^{\prime }$ are conjugate to each
other by an element $h\in S_{n}\times S_{m}$, they correspond to the same
state $\text{tr}(\alpha Z^{\otimes n}\otimes Y^{\otimes m})=\text{tr}(\alpha
^{\prime }Z^{\otimes n}\otimes Y^{\otimes m})$. Therefore we make use of $%
S_{n}\times S_{m}$ equivalence class of $S_{n+m}$: Two elements $\alpha
,\alpha ^{\prime }\in S_{n+m}$ are in the same $S_{n}\times S_{m}$
equivalence class $[\alpha ]=[\alpha ^{\prime }]$ if and only if $\alpha
=h\alpha h^{-1}$ for some $h\in S_{n}\times S_{m}$. As a special case, we
note that for $m=0$, this basis gives us all multi-trace operators, and is a
basis of the half BPS operators, see \cite{Berenstein:2017abm}. More
precisely, the basis is labeled by conjugacy class of $S_{n}$, which is just
given by a sequence $(w_{1},w_{2},\dots )$ where $w_{s}$ means that there
are $w_{s}$ cycles of length $s$ in the conjugacy class. The Brauer
operators, which we will introduce later, can be expanded by the above
basis. See Appendix A of \cite{Kimura:2007wy} for some examples. Therefore
in the following, we will consider all states labeled by $S_{n}\times S_{m}$
equivalence class of $S_{n+m}$.

\vspace{1pt}

The Hilbert space of all two-matrix multi-trace operators has a tensor
product structure. For the case of half BPS operators, the Hilbert space has
a tensor product structure given by the momentum number $k$ and $\mathcal{H}%
=\bigotimes_{k}\mathcal{H}_{k}$, where each $\mathcal{H}_{k}$ is created by $%
a_{k}^{\dagger }\leftrightarrow \text{tr}(\frac{Z}{\sqrt{N}})^{k}$
corresponding to a single trace operator. Write the operator by a
permutation, then a single trace corresponds to a permutation that has only
one cycle. For the more general two-matrix case, we also expect that a
factor of the tensor product is created by a single trace operator. A
general single trace operator can be written as
\begin{equation}
\text{tr}(\prod_{j}Z^{n_{j}}Y^{m_{j}}).
\end{equation}%
Note that since $\text{tr}(Z^{k_{1}}Y^{k_{2}}\cdots Z^{k_{n}})=\text{tr}%
(Z^{k_{1}+k_{n}}Y^{k_{2}}\cdots )$, we can always take a state into the
standard form $\text{tr}(\prod_{j}Z^{n_{j}}Y^{m_{j}})$ where $%
n_{j},m_{j}\geq 1$. To label these states, we define the following sets
\begin{eqnarray}
K_{0} &=&\{\vec{k}=(k_{1},0)\text{ or }(0,k_{2})\mid k_{1},k_{2}\geq 1\}, \\
K_{2} &=&\{\vec{k}=(k_{1},k_{2})\mid k_{1},k_{2}\geq 1\}, \\
K_{4} &=&\{\vec{k}=(k_{1},k_{2},k_{3},k_{4})\mid k_{1},k_{2},k_{3},k_{4}\geq
1\}, \\
&&\cdots  \notag \\
K_{2p} &=&\{\vec{k}=(k_{1},\dots ,k_{2p})\mid k_{i}\geq 1,\;i=1,\dots
,2p\},~\cdots
\end{eqnarray}%
However, there is some redundancy in the above labeling since the traces
have cyclic invariant property. For example, consider $\vec{k}=(1,1,3,1)$
and $\vec{k}^{\prime }=(3,1,1,1)$, they are different as vectors in $K_{4}$.
However
\begin{equation}
\text{tr}(ZYZ^{3}Y)=\text{tr}(Z^{3}YZY),
\end{equation}%
by cyclic property of trace. More generally, if two $\vec{k}$ are related by
a cyclic permutation, they actually define the same multi-trace operator.
Consider a group action of the abelian group $\mathbb{Z}_{p}$ on $K_{2p}$
for $p\geq 1$. Write $\lambda $ for the generator of $\mathbb{Z}_{p}$, or in
other words $\mathbb{Z}_{p}=\langle \lambda \rangle $ with $\lambda ^{p}=id$%
. Then define the action of $\lambda $ on $K_{2p}$

\begin{equation}
\lambda \cdot (k_{1},k_{2},\dots
,k_{2p-1},k_{2p})=(k_{2p-1},k_{2p},k_{1},k_{2},\dots ,k_{2p-3},k_{2p-2}).
\end{equation}%
And we define the quotient $\tilde{K}_{2p}=K_{2p}/\mathbb{Z}_{p}~$where two $%
\vec{k},\vec{k}^{\prime }$ are equivalent to each other if and only if $\vec{%
k}=\lambda ^{l}\cdot \vec{k}^{\prime }$ for some $l$. Then we define the
equivalence class $[\vec{k}]$, where from the above $[\vec{k}^{\prime }]=[%
\vec{k}]$. Note that $\mathbb{Z}_{1}=\{id\}$ is trivial, therefore $K_{2}=%
\tilde{K_{2}}$, and also $K_{0}=\tilde{K_{0}}$.$~$Then we define the set
\begin{equation}
\tilde{K}=K_{0}\cup K_{2}\cup \tilde{K}_{4}\cup \tilde{K}_{6}\cup \cdots
\end{equation}%
Then the Hilbert space can be written as
\begin{equation}
\mathcal{H}=\bigotimes_{[\vec{k}]\in \tilde{K}}\mathcal{H}_{[\vec{k}]},
\end{equation}%
where $\mathcal{H}_{[\vec{k}]}$ is spanned by $\left[ \text{tr}\left( \frac{Z%
}{\sqrt{N}}\right) ^{k_{1}}\left( \frac{Y}{\sqrt{N}}\right) ^{k_{2}}\cdots %
\right] ^{w_{[\vec{k}]}},w_{[\vec{k}]}=0,1,2,\dots $. Now we identify the
operator:
\begin{equation}
a_{[\vec{k}]}^{\dagger }\;\leftrightarrow ~O_{[\vec{k}]}=\;\text{tr}\left[
\left( \frac{Z}{\sqrt{N}}\right) ^{k_{1}}\left( \frac{Y}{\sqrt{N}}\right)
^{k_{2}}\cdots \right] .  \label{identify_basis}
\end{equation}%
A complete trace product basis is then provided by%
\begin{equation}
\prod_{\lbrack \vec{k}]\in \tilde{K}}a_{[\vec{k}]}^{\dagger w_{[\vec{k}%
]}}\rvert 0\rangle .  \label{basis_01}
\end{equation}

Using this notation, the half BPS case corresponds to $\bigotimes_{[\vec{k}%
]\in K_{0}}\mathcal{H}_{[\vec{k}]}$. In the half BPS case, states $%
\prod_{k}a_{k}^{\dagger w_{k}}\rvert 0\rangle $ with $\sum_{k}kw_{k}=n$
correspond to multi-trace operators labeled by a equivalence class of $S_{n}$%
. In the two matrix case, states $\prod_{[\vec{k}]\in \tilde{K}}a_{[\vec{k}%
]}^{\dagger w_{[\vec{k}]}}\rvert 0\rangle $ with $\sum_{[\vec{k}%
]}((\sum_{i}k_{2i+1})w_{[\vec{k}]})=n,\;\sum_{[\vec{k}]}((\sum_{i}k_{2i})w_{[%
\vec{k}]})=m$ correspond to multi-trace operators labeled by a $S_{n}\times
S_{m}$ equivalence class of $S_{n+m}$. That is, a sequence%
\begin{equation}
\mathbf{w}=(w_{[\vec{k}]})_{[\vec{k}]\in \tilde{K}}
\end{equation}%
with $\sum_{[\vec{k}]}((\sum_{i}k_{2i+1})w_{[\vec{k}]})=n,\;\sum_{[\vec{k}%
]}((\sum_{i}k_{2i})w_{[\vec{k}]})=m$ uniquely corresponds to a $S_{n}\times
S_{m}$ equivalence class of $S_{n+m}$. And each $w_{[\vec{k}]}$ is the
number of cycles of type $[\vec{k}]$ in the permutation. We use notation $%
\mathbf{w}$ to distinguish our previous notation $\vec{w}=(w_{1},w_{2},\dots
)$ in the half BPS case, see for example \cite%
{Berenstein:2017abm,Lin:2017dnz}.

The inner product is a very important structure for the Hilbert space. Let $%
\alpha ,\beta \in S_{n+m}$, with $n,m\geq 1$. we consider the inner product
\begin{eqnarray}
\langle \text{tr}(\alpha Z^{\otimes n}\otimes Y^{\otimes m})^{\dagger }\text{%
tr}(\beta Z^{\otimes n}\otimes Y^{\otimes m})\rangle &=&\sum_{\sigma \in
S_{n}\times S_{m}}\prod_{k=1}^{n}\delta _{i_{\beta (\sigma
(k))}}^{i_{k}}\delta _{i_{\alpha ^{-1}(k)}}^{i_{\sigma
(k)}}\prod_{h=n+1}^{n+m}\delta _{i_{\beta (\sigma (h))}}^{i_{h}}\delta
_{i_{\alpha ^{-1}(h)}}^{i_{\sigma (h)}}  \notag \\
&=&\sum_{\sigma \in S_{n}\times S_{m}}\text{tr}(\alpha ^{-1}\sigma
^{-1}\beta \sigma )  \notag \\
&=&\sum_{\sigma \in S_{n}\times S_{m}}N^{C(\alpha ^{-1}\sigma ^{-1}\beta
\sigma )},
\end{eqnarray}%
where in the second line, we sum over $\sigma \in S_{n}\times S_{m}$. And in
the last line we use $C(\alpha )$ to represent the number of cycles in the
permutation $\alpha $. Note that this expression contains all orders of $N$,
and highest order of $N$ appears only when $\alpha ^{-1}\sigma ^{-1}\beta
\sigma =1$, $C(1)=n+m$. Therefore $N^{m+n}$ appear only when $\alpha ,\beta $
are in the same equivalence class, that is $[\alpha ]=[\beta ]$. Using
appropriate normalization:
\begin{eqnarray}
\langle \text{tr}(\alpha Z^{\otimes n}\otimes Y^{\otimes m})^{\dagger }\text{%
tr}(\beta Z^{\otimes n}\otimes Y^{\otimes m})\rangle &=&\sum_{\sigma \in
S_{n}\times S_{m}}N^{C(\alpha ^{-1}\sigma ^{-1}\beta \sigma )}  \notag \\
&=&N^{m+n}(\delta _{\lbrack \alpha ],[\beta ]}(\sum_{\substack{ \sigma \in
S_{n}\times S_{m}  \\ \alpha =\sigma ^{-1}\alpha \sigma }}1)+O(1/N)).  \notag
\\
&&
\end{eqnarray}%
And in the last line, $\delta _{\lbrack \alpha ],[\beta ]}$ equals $1$ only
when $\alpha ,\beta $ are in the same $S_{n}\times S_{m}$ equivalence class
of $S_{n+m}$, which is also when $\text{tr}(\alpha Z^{\otimes n}Y^{\otimes
m})=\text{tr}(\beta Z^{\otimes n}Y^{\otimes m})$.

The coefficient $\sum_{\substack{ \sigma \in S_{n}\times S_{m}  \\ \alpha
=\sigma ^{-1}\alpha \sigma }}1$ is determined by the $S_{n}\times S_{m}$
equivalence class of $\alpha $. The equivalence class $[\alpha ]$ can be
represented by the sequence $\mathbf{w}$, then after some calculation, we
have
\begin{equation}
\sum_{\substack{ \sigma \in S_{n}\times S_{m}  \\ \alpha =\sigma ^{-1}\alpha
\sigma }}1=\prod_{[\vec{k}]\in K}N([\vec{k}],w_{[\vec{k}]}),
\end{equation}%
where
\begin{equation}
N([\vec{k}],w_{[\vec{k}]})=%
\begin{cases}
k^{w_{k}}w_{k}!~~~~~[\vec{k}]\in K_{0},\vec{k}=(k,0)\text{ or }\vec{k}=(0,k)
\\
w_{[\vec{k}]}!~~~~~[\vec{k}]\notin K_{0}%
\end{cases}%
.
\end{equation}%
Therefore we have
\begin{equation}
\langle \text{tr}(\alpha (\frac{Z}{\sqrt{N}})^{\otimes n}\otimes (\frac{Y}{%
\sqrt{N}})^{\otimes m})^{\dagger }\text{tr}(\beta (\frac{Z}{N})^{\otimes
n}\otimes (\frac{Y}{\sqrt{N}})^{\otimes m})\rangle =\delta _{\lbrack \alpha
],[\beta ]}\prod_{[\vec{k}]\in \tilde{K}}N([\vec{k}],w_{[\vec{k}]})+O(1/N),
\label{matrix_inner}
\end{equation}%
where we have worked in the large $N$ limit. The above formula completely
determines the inner product in the Hilbert space of quarter BPS operators.
Using our notation (\ref{basis_01}), the inner product can be written as
\begin{equation}
\langle 0\lvert a_{[\vec{k}]}^{w_{[\vec{k}]}}a_{[\vec{k}]}^{\dagger w_{[\vec{%
k}]}}\rvert 0\rangle =N([\vec{k}],w_{[\vec{k}]})+O(1/N),
\end{equation}%
and we get the commutation relation for the operators
\begin{equation}
\lbrack a_{[\vec{k}]},a_{[\vec{k}]^{\prime }}^{\dagger }]=N([\vec{k}%
],1)\delta _{\lbrack \vec{k}],[\vec{k}^{\prime }]}+O(1/N).
\end{equation}

The trace product basis is very useful for the computation of inner products
of other operators, since we can expand other operators in terms of trace
product basis.

\subsection{Coherent states in a general form}

In the rest of this Section, we work in the infinite $N$ limit. With the
notation of Sec. 2.1, a general coherent state $\rvert Coh\rangle ~$is
\begin{equation}
\exp (\sum_{[\vec{k}]\in \tilde{K}}c_{[\vec{k}]}a_{[\vec{k}]}^{\dagger
})\rvert 0\rangle .
\end{equation}%
The condition for coherent state is
\begin{equation}
a_{[\vec{k}]}\rvert Coh\rangle =c_{[\vec{k}]}\rvert Coh\rangle \;\text{ for }%
[\vec{k}]\in \tilde{K},
\end{equation}%
and we need to check this. Using the commutation relations%
\begin{equation}
\lbrack a_{[\vec{k}]},a_{[\vec{k}^{\prime }]}^{\dagger }]=%
\begin{cases}
k\delta _{\lbrack \vec{k}],[\vec{k}^{\prime }]}~~[\vec{k}]\in K_{0},[\vec{k}%
]=(k,0)\text{ or }[\vec{k}]=(0,k) \\
\delta _{\lbrack \vec{k}],[\vec{k}^{\prime }]}~~[\vec{k}]\notin K_{0}%
\end{cases}%
,
\end{equation}%
\begin{equation}
\lbrack a_{[\vec{k}]},a_{[\vec{k}^{\prime }]}]=0,
\end{equation}%
we have that
\begin{eqnarray}
a_{[\vec{k}]}\rvert Coh\rangle &=&a_{[\vec{k}]}\prod_{[\vec{l}]\in \tilde{K}%
}\exp (c_{[\vec{l}]}a_{[\vec{l}]}^{\dagger })\rvert 0\rangle  \notag \\
&=&\prod_{\substack{ \lbrack \vec{l}]\in \tilde{K}  \\ \lbrack \vec{l}]\neq
\lbrack \vec{k}]}}\exp (c_{[\vec{l}]}a_{[\vec{l}]}^{\dagger })\left( a_{[%
\vec{k}]}\sum_{j}\frac{(c_{[\vec{k}]}a_{[\vec{k}]}^{\dagger })^{j}}{j!}%
\right) \rvert 0\rangle  \notag \\
&=&\prod_{\substack{ \lbrack \vec{l}]\in \tilde{K}  \\ \lbrack \vec{l}]\neq
\lbrack \vec{k}]}}\exp (c_{[\vec{l}]}a_{[\vec{l}]}^{\dagger })\left( N([\vec{%
k}],1)c_{[\vec{k}]}\sum_{j}\frac{j(c_{[\vec{k}]}a_{[\vec{k}]}^{\dagger
})^{j-1}}{{j!}}\right) \rvert 0\rangle  \notag \\
&=&N([\vec{k}],1)c_{[\vec{k}]}\rvert Coh\rangle ,
\end{eqnarray}%
where in the third line we have used $[a_{[\vec{k}]},a_{[\vec{k}]}^{\dagger
j}]=N([\vec{k}],1)ja_{[\vec{k}]}^{\dagger j-1}$.~And thus we have proved
that the state $\exp (\sum_{[\vec{k}]\in \tilde{K}}c_{[\vec{k}]}a_{[\vec{k}%
]}^{\dagger })\rvert 0\rangle $ is a coherent state.

The coherent states have overlaps with the trace product states. Write a
general trace product state as
\begin{equation}
\prod_{\lbrack \vec{k}]\in \tilde{K}}a_{[\vec{k}]}^{\dagger w_{[\vec{k}%
]}}\rvert 0\rangle .
\end{equation}%
We consider overlap
\begin{eqnarray}
\langle 0\lvert \prod_{\lbrack \vec{k}]\in \tilde{K}}a_{[\vec{k}]}^{w_{[\vec{%
k}]}}\rvert Coh\rangle &=&\langle 0\lvert \prod_{\lbrack \vec{k}]\in \tilde{K%
}}a_{[\vec{k}]}^{w_{[\vec{k}]}}\prod_{[\vec{k}]\in K}\exp (c_{[\vec{k}]}a_{[%
\vec{k}]}^{\dagger })\rvert 0\rangle  \notag \\
&=&\prod_{[\vec{k}]\in \tilde{K}}\sum_{j}\frac{(c_{[\vec{k}]})^{j}}{j!}%
\langle 0\lvert a_{[\vec{k}]}^{w_{[\vec{k}]}}a_{[\vec{k}]}^{\dagger j}\rvert
0\rangle  \notag \\
&=&\prod_{[\vec{k}]\in \tilde{K}}\frac{(c_{[\vec{k}]})^{w_{[\vec{k}]}}}{w_{[%
\vec{k}]}!}N([\vec{k}],w_{[\vec{k}]}),
\end{eqnarray}%
where
\begin{equation}
N([\vec{k}],w_{[\vec{k}]})=%
\begin{cases}
k^{w_{k}}w_{k}!~~~[\vec{k}]\in K_{0},[\vec{k}]=(k,0)\text{ or }[\vec{k}%
]=(0,k) \\
w_{[\vec{k}]}!~~~[\vec{k}]\notin K_{0}%
\end{cases}%
.
\end{equation}%
We will use notation $k\leftrightarrow (k,0),\bar{k}\leftrightarrow (0,k)$,
and define
\begin{equation}
c_{k}=\frac{\Lambda _{k}}{k},\quad c_{\bar{k}}=\frac{\Lambda _{\bar{k}}}{%
\bar{k}}.
\end{equation}%
We use this definition to make our results to be compatible with our
previous half BPS case \cite{Lin:2017dnz}. Then the above results can be
written as
\begin{equation}
\langle 0\lvert \prod_{\lbrack \vec{k}]\in \tilde{K}}a_{[\vec{k}]}^{w_{[\vec{%
k}]}}\rvert Coh\rangle =\prod_{k=1}^{\infty }\Lambda _{k}^{w_{k}}\prod_{\bar{%
k}=1}^{\infty }\Lambda _{\bar{k}}^{w_{\bar{k}}}\prod_{[\vec{k}]\in \tilde{K}%
-K_{0}}(c_{[\vec{k}]})^{w_{[\vec{k}]}}.  \label{coh_trace_product_02}
\end{equation}%
\vspace{1pt}\vspace{1pt}

We then consider inner product between coherent states. To simplify
notation, we define
\begin{equation}
\rvert Coh_{1}\rangle =\exp (\sum_{[\vec{k}]\in \tilde{K}}\alpha _{\lbrack
\vec{k}]}a_{[\vec{k}]}^{\dagger })\rvert 0\rangle ,\quad \rvert
Coh_{2}\rangle =\exp (\sum_{\vec{k}\in \tilde{K}}\beta _{\lbrack \vec{k}%
]}a_{[\vec{k}]}^{\dagger })\rvert 0\rangle .
\end{equation}%
And we also define
\begin{equation}
\alpha _{k}=\frac{\Lambda _{k}}{k},\quad \alpha _{\bar{k}}=\frac{\Lambda _{%
\bar{k}}}{\bar{k}},\quad \beta _{k}=\frac{B_{k}}{k},\quad \beta _{\bar{k}}=%
\frac{B_{\bar{k}}}{\bar{k}}.
\end{equation}%
Then the overlap between two coherent states is
\begin{eqnarray}
\langle Coh_{2}\lvert Coh_{1}\rangle &=&\sum_{\{w_{[\vec{k}]}\}}\langle
0\lvert \prod_{\lbrack \vec{k}]\in \tilde{K}}\frac{(\beta _{\lbrack \vec{k}%
]}^{\ast })^{w_{[\vec{k}]}}}{w_{[\vec{k}]}!}a_{[\vec{k}]}^{w_{[\vec{k}%
]}}\rvert Coh_{1}\rangle  \notag \\
&=&\sum_{\{w_{[\vec{k}]}\}}\prod_{[\vec{k}]\in \tilde{K}}\frac{(\beta
_{\lbrack \vec{k}]}^{\ast })^{w_{[\vec{k}]}}}{w_{[\vec{k}]}!}\frac{(\alpha
_{\lbrack \vec{k}]})^{w_{[\vec{k}]}}}{w_{[\vec{k}]}!}N([\vec{k}],w_{[\vec{k}%
]})  \notag \\
&=&\sum_{\{w_{[\vec{k}]}\}}\prod_{k}\frac{(B_{k}^{\ast }\Lambda _{k})^{w_{k}}%
}{k^{w_{k}}w_{k}!}\prod_{\bar{k}}\frac{(B_{\bar{k}}^{\ast }\Lambda _{\bar{k}%
})^{w_{\bar{k}}}}{\bar{k}^{w_{\bar{k}}}w_{\bar{k}}!}\prod_{[\vec{k}]\in
\tilde{K}-K_{0}}\frac{(\beta _{\lbrack \vec{k}]}^{\ast }\alpha _{\lbrack
\vec{k}]})^{w_{[\vec{k}]}}}{w_{[\vec{k}]}!}  \notag \\
&=&\exp (\sum_{k}\frac{1}{k}B_{k}^{\ast }\Lambda _{k}+\sum_{\bar{k}}\frac{1}{%
\bar{k}}B_{\bar{k}}^{\ast }\Lambda _{\bar{k}}+\sum_{[\vec{k}]\in \tilde{K}%
-K_{0}}\beta _{\lbrack \vec{k}]}^{\ast }\alpha _{\lbrack \vec{k}]}).
\label{Inner_general_Coh}
\end{eqnarray}%
We can consider the case when
\begin{equation}
\Lambda _{k}=x_{1}^{k}+x_{2}^{k}+\dots ,\;B_{k}=y_{1}^{k}+y_{2}^{k}+\dots
,\;\Lambda _{\bar{k}}=x_{\bar{1}}^{\bar{k}}+x_{\bar{2}}^{\bar{k}}+\dots
,\;B_{\bar{k}}=y_{\bar{1}}^{\bar{k}}+y_{\bar{2}}^{\bar{k}}+\dots .
\end{equation}%
Then
\begin{eqnarray}
\exp (\sum_{k}\frac{1}{k}B_{k}^{\ast }\Lambda _{k}) &=&\exp (\sum_{k}\frac{1%
}{k}(y_{1}^{\ast k}+y_{2}^{\ast k}+\dots )(x_{1}^{k}+x_{2}^{k}+\dots ))
\notag \\
&=&\prod_{i,j}\exp (\sum_{k}\frac{(x_{i}y_{j}^{\ast })^{k}}{k})  \notag \\
&=&\prod_{i,j}\frac{1}{1-x_{i}y_{j}^{\ast }}.
\end{eqnarray}%
\qquad This result agrees perfectly with our previous results about half BPS
operators \cite{Lin:2017dnz}.

\vspace{1pt}

\subsection{Quarter BPS coherent states and dilatation operator}

In \cite{DHoker:2003csh,Ryzhov:2001bp}, a systematic construction of quarter
BPS operators is given, and the construction is based on the following
symmetrized trace operator
\begin{equation}
A_{r_{1}\cdots r_{p}}=\text{tr}(W_{(r_{1}}\cdots W_{r_{p})})
\end{equation}%
where $r_{i}=1,2$ and we assume that $W_{1}=Z,W_{2}=Y$. Since all the
indices are symmetrized, and the operator $A_{r_{1}\cdots r_{p}}$ only
depends on how many $r_{i}$ are $1$ and how many $r_{i}$ are $2$, we write
\begin{equation}
A_{(n,m)}=A_{\underbrace{1,\dots ,1}_{n},\underbrace{2,\dots ,2}_{m}}
\end{equation}%
with $n+m=p$.

By definition
\begin{equation}
A_{r_{1}\cdots r_{p}}=\sum_{\sigma \in S_{p}}\frac{1}{p!}\text{tr}%
(W_{r_{\sigma (1)}}\cdots W_{r_{\sigma (p)}}).
\end{equation}%
Therefore $A_{(n,m)}$ is a linear combination of our previously defined
operators in Section 2.1:
\begin{equation}
A_{(n,m)}=\sum_{[\vec{k}]\in \tilde{K}^{(n,m)}}C([\vec{k}])\text{tr}%
(Z^{k_{1}}Y^{k_{2}}Z^{k_{3}}Y^{k_{4}}\dots )
\end{equation}%
where $C([\vec{k}])$ is some constant to be determined, and
\begin{equation}
\tilde{K}^{(n,m)}=\{[\vec{k}]\in \tilde{K}\mid
\sum_{i}k_{2i-1}=n,\;\sum_{i}k_{2i}=m\}.
\end{equation}

However, the labeling $\tilde{K}$ is inconvenient for our later study, since
we will consider the action of symmetric group on the operators, which takes
a complicated form using the labeling $\tilde{K}$. Therefore, we will
introduce a new but equivalent labeling of single trace operators. We see
that the sequence of matrices $W_{r}=Z,Y$ resemble a $1d$ lattice, where
each lattice site has spin up and down. So we define the set
\begin{equation}
V_{n,m}=\{f\in \mathrm{Map}(\{1,\dots ,n+m\},\{\uparrow ,\downarrow \})\mid
n=\#\{i\mid f(i)=\uparrow \},\;m=\#\{i\mid f(i)=\downarrow \}\}.
\end{equation}%
There is a left action of the group $S_{n+m}$ on $V_{n,m}$:
\begin{equation}
(\sigma f)(i)=f(\sigma ^{-1}(i)),\;\forall \sigma \in S_{n+m}.
\end{equation}%
Here we use $\sigma ^{-1}$ to define this action to make the action
compatible with the group structure. We can check that
\begin{equation}
((\sigma _{1}\sigma _{2})f)(i)=f(\sigma _{2}^{-1}\sigma
_{1}^{-1}(i))=(\sigma _{1}(\sigma _{2}f))(i).
\end{equation}%
The cyclic group $\mathbb{Z}_{n+m}$ is a subgroup of $S_{n+m}$, therefore $%
\mathbb{Z}_{n+m}\subset S_{n+m}$ acts on $V_{n,m}$. Denote $\lambda $ the
generator of $\mathbb{Z}_{n+m}$, which satisfies $\lambda ^{n+m}=1$. We can
write the group action as
\begin{equation}
(\lambda ^{k}f)(i)=f(i-k).
\end{equation}%
The cyclic property of trace tells us that lattice configurations that
differ only by a cyclic permutation should be considered as equivalent.
Hence we define the quotient $\tilde{V}_{n,m}=V_{n,m}/\mathbb{Z}_{n+m}$,$~$%
where we identify $f\sim \lambda ^{k}f$, and write $[f]$ for the
corresponding equivalence class in $\tilde{V}_{n,m}$. For each $[f]\in
\tilde{V}_{n,m}$, we have a corresponding operator%
\begin{equation}
O_{[f]}=\text{tr}(\prod_{i=1}^{n+m}W_{f_{i}})  \label{O_f}
\end{equation}%
where we identify $W_{\uparrow }=Z,W_{\downarrow }=Y$. By the cyclic
property of the trace, the definition of $O_{[f]}$ does not depend on the
choice of representative $f$. The advantage of using $\tilde{V}$ to label
the single trace operator is that we can easily write down the action of
symmetric group $S_{n+m}$ on the operators. First, the action of $S_{n+m}$
on $V_{n,m}$ induces an action on the quotient $\tilde{V}_{n,m}$, and
further induces an action on the operator labeled by $\tilde{V}_{n,m}$:
\begin{equation}
\sigma O_{[f]}=O_{[\sigma f]},\quad \forall \sigma \in S_{n+m}.
\end{equation}

Having shown the advantage of using $\tilde{V}$, we still need to show why
this labeling is equivalent to our previous one $\tilde{K}$. To see this, we
establish a bijection between the two sets: $\varphi :\tilde{K}^{(n,m)}%
\xrightarrow{\sim}\tilde{V}_{n,m}$. On the one hand, for a $[\vec{k}]\in
\tilde{K}^{(n,m)}$, and write $\vec{k}=(k_{1},k_{2},\dots ,k_{2j})$, we
define the corresponding $\varphi ([\vec{k}])=[f]=[\uparrow
^{k_{1}}\downarrow ^{k_{2}}\uparrow ^{k_{3}}\downarrow ^{k_{4}}\cdots ]$.
The $f=\uparrow ^{k_{1}}\downarrow ^{k_{2}}\uparrow ^{k_{3}}\downarrow
^{k_{4}}\cdots $ is defined by the following:
\begin{equation}
f(1)=\cdots f(k_{1})=\uparrow ,\;f(k_{1}+1)=\cdots f(k_{1}+k_{2})=\downarrow
,\;f(k_{1}+k_{2}+1)=\dots f(k_{1}+k_{2}+k_{3})=\uparrow ,\dots .
\label{V_K_relation}
\end{equation}%
On the other hand, for every $[f]\in \tilde{V}_{(n,m)}$, we can define the
corresponding $[\vec{k}]=\varphi ^{-1}([f])\in \tilde{K}^{(n,m)}$ by the
following:
\begin{equation}
\varphi ^{-1}([f])\in \tilde{K}^{(n,m)}=%
\begin{cases}
\lbrack (n,0)],~~~~{\text{ if }}[f]=[\uparrow ^{n}]\in \tilde{V}_{n,0} \\
\lbrack (0,m)],~~~{\text{ if }}[f]=[\downarrow ^{m}]\in \tilde{V}_{0,m} \\
\lbrack (k_{1},k_{2},k_{3},k_{4},\dots )],~~~{\text{ if }}[f]=[\uparrow
^{k_{1}}\downarrow ^{k_{2}}\uparrow ^{k_{3}}\downarrow ^{k_{4}}\cdots ]\in
\bigcup_{n,m\geq 1}\tilde{V}_{n,m}%
\end{cases}%
.
\end{equation}

The above formula is complete since $[\uparrow ^{n}]$ is the only element in
$\tilde{V}_{n,0}$, and $[\downarrow ^{m}]$ is the only element in $\tilde{V}%
_{0,m}$, while every element in $\bigcup_{n,m\geq 1}\tilde{V}_{n,m}$ can be
written in the form $[\uparrow ^{k_{1}}\downarrow ^{k_{2}}\cdots ]$.

Then we can move on to consider the dilatation operator, it will turn out
that our above notation will give a simple expression for the operator.
According to \cite{Beisert:2003tq,Kimura:2010tx}, the dilatation operator
can be expanded in power series of the coupling constant as
\begin{equation}
\hat{D}=\sum_{l=0}(\frac{g_{YM}^{2}}{16\pi ^{2}})^{l}\hat{D}_{2l},
\end{equation}%
where $\hat{D}_{2l}$ is the $l$ loop dilatation operator. We also consider
the $N\rightarrow \infty $ limit. Since the coupling constants are related
by $g_{YM}^{2}N=\lambda $, where $\lambda $ is the~'t Hooft coupling, we can
write
\begin{equation}
\hat{D}=\hat{D}_{0}+\frac{\lambda }{16\pi ^{2}N}\hat{D}_{2}.
\end{equation}%
Therefore, for one loop dilatation operator acting on some operator, we only
consider results in the $N\rightarrow \infty $ limit, or in other words, we
will consider:
\begin{equation}
\frac{\lambda }{16\pi ^{2}N}\hat{D}_{2}O(Z,Y).
\end{equation}

The zero loop operator can be written as $\hat{D}_{0}=\text{tr}(Z\frac{d}{dZ}%
+Y\frac{d}{dY})$. And one can easily calculate that the action of the zero
loop dilation operator on a general multi-trace operator $\text{tr}(\alpha
Z^{\otimes n}\otimes Y^{\otimes m})$ for $\alpha \in S_{n+m}$ is:
\begin{equation}
\hat{D}_{0}\text{tr}(\alpha Z^{\otimes n}\otimes Y^{\otimes m})=(n+m)\text{tr%
}(\alpha Z^{\otimes n}\otimes Y^{\otimes m}).
\end{equation}%
We can write the above formula in our previous notation
\begin{equation}
\hat{D}_{0}\prod_{[\vec{k}]\in \tilde{K}}a_{[\vec{k}]}^{\dagger w_{[\vec{k}%
]}}\rvert 0\rangle =\left( \sum_{[\vec{k}]\in \tilde{K}}w_{[\vec{k}%
]}(\sum_{i}k_{i})\right) \prod_{[\vec{k}]\in \tilde{K}}c_{[\vec{k}%
]}^{\dagger w_{[\vec{k}]}}\rvert 0\rangle ,
\end{equation}%
or we can write
\begin{equation}
\lbrack \hat{D}_{0},a_{[\vec{k}]}^{\dagger }]=\sum_{i}k_{i}.
\end{equation}

Now for the one loop operator:
\begin{equation}
\hat{D}_{2}=-2\text{tr}([Z,Y][\frac{d}{dZ},\frac{d}{dY}]),
\end{equation}%
our aim is to calculate
\begin{equation}
\frac{\lambda }{16\pi ^{2}N}\hat{D}_{2}O_{[f]}=\frac{\lambda }{16\pi ^{2}N}%
\hat{D}_{2}\text{tr}(\prod_{i=1}^{n+m}W_{f_{i}}).
\end{equation}%
In the $N\rightarrow \infty $ limit, we write $\hat{D}_{2}^{\prime }=\frac{1%
}{N}\hat{D}_{2}$, which can be shown to satisfy the Leibniz rule. To write
down the $\hat{D}_{2}^{\prime }$ we first need to introduce some special
permutation operators that will be useful, namely the swap operator $%
P_{i,i+1}\in S_{n+m}$, which is defined by:
\begin{equation}
P_{i,i+1}(i)=i+1,\;P_{i,i+1}(i+1)=i,\;P_{i,i+1}(j)=j,\text{ for }j\neq i,i+1,
\label{swap_operator}
\end{equation}%
where we identify the indices $n+m+1\sim 1.$ Then using the notations (\ref%
{O_f})-(\ref{swap_operator}), the one loop dilatation operator can be easily
written as
\begin{equation}
\frac{1}{2}\hat{D}_{2}^{\prime }=\sum_{i=1}^{n+m}(\mathbb{I}-P_{i,i+1}).
\label{D_2_01}
\end{equation}

Another advantage of using $\tilde{V}$ for the labeling is that the
symmetrized trace operator $A_{(n,m)}$ can be written in a simple form:
\begin{equation}
A_{(n,m)}=\frac{1}{(n+m)!}\sum_{\sigma \in S_{n+m}}\sigma O_{[f_{0}]},
\end{equation}%
where $[f_{0}]\in \tilde{V}_{n,m}$ can be given by $f_{0}(1)=\cdots
=f_{0}(n)=\uparrow ,f_{0}(n+1)=\cdots =f_{0}(n+m)=\downarrow $.

Here as before, we use $a^{\dagger }$ to represent the corresponding
creation operator, for example
\begin{equation}
a_{[f]}^{\dagger }\text{ }\leftrightarrow O_{[f]},\quad a_{(n,m)}^{\dagger
}\leftrightarrow A_{(n,m)}.
\end{equation}%
These are analogous to the relation (\ref{identify_basis}).

\vspace{1pt}

\begin{prop}
The following operator defined by
\begin{equation}
O_{(n,m)}=\sum_{\substack{ l_{i}\geq 0  \\ l_{1}+l_{2}+\dots +l_{m}=n}}\text{%
tr}(Z^{l_{1}}YZ^{l_{2}}Y\cdots Z^{l_{m}}Y)
\end{equation}%
for $m>0$, and $O_{(n,0)}=\text{tr}(Z^{n})$ for $m=0$, is proportional to
the symmetrized operator $A_{(n,m)}$.
\end{prop}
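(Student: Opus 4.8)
The plan is to show that $A_{(n,m)}$, which by definition symmetrizes over all orderings of $n$ copies of $Z$ and $m$ copies of $Y$ inside a single trace, collapses to the sum $O_{(n,m)}$ after one uses the cyclic invariance of the trace to fix a canonical representative. First I would write $A_{(n,m)} = \frac{1}{(n+m)!}\sum_{\sigma\in S_{n+m}}\text{tr}(W_{r_{\sigma(1)}}\cdots W_{r_{\sigma(n+m)}})$ where $(r_1,\dots,r_{n+m})=(\underbrace{1,\dots,1}_n,\underbrace{2,\dots,2}_m)$. Each term is a single-trace word with exactly $n$ letters $Z$ and $m$ letters $Y$; by the cyclic property of the trace, such a word is determined by its cyclic equivalence class, i.e.\ by an element $[f]\in\tilde V_{n,m}$. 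So $A_{(n,m)}$ is a nonnegative-coefficient combination of the $O_{[f]}$, and the key computation is to show that \emph{every} class $[f]$ appears with the \emph{same} coefficient; then $A_{(n,m)}$ is a constant times $\sum_{[f]\in\tilde V_{n,m}}O_{[f]}$, and it remains only to identify that sum with $O_{(n,m)}$.

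For the equal-coefficient claim, the cleanest route is the orbit–stabilizer / transitivity argument already implicit in the text: the full symmetric group $S_{n+m}$ acts transitively on $V_{n,m}$ (one can carry any filling of the $1d$ lattice by $n$ up-spins and $m$ down-spins to any other), and hence on $\tilde V_{n,m}$, via $\sigma O_{[f]}=O_{[\sigma f]}$. The word appearing for a given ordering $\sigma$ is exactly $O_{[\sigma f_0]}$ with $f_0=\uparrow^n\downarrow^m$, so $A_{(n,m)}=\frac{1}{(n+m)!}\sum_{\sigma\in S_{n+m}}O_{[\sigma f_0]}$. Grouping the sum over $\sigma$ by the value of $[\sigma f_0]$: the number of $\sigma$ with $[\sigma f_0]=[f]$ is $|S_{n+m}|/|\tilde V_{n,m}|$ independently of $[f]$ — more precisely it equals the size of the stabilizer of $[f]$ in $S_{n+m}$, which by transitivity is the same for all $[f]$ and equals $(n+m)!/|\tilde V_{n,m}|$. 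Hence $A_{(n,m)} = \frac{1}{|\tilde V_{n,m}|}\sum_{[f]\in\tilde V_{n,m}}O_{[f]}$, giving the same coefficient $1/|\tilde V_{n,m}|$ for every class. (Alternatively, one can avoid stabilizer bookkeeping: each fixed word $\text{tr}(\prod_{i}W_{f_i})$ arises from precisely $n!\,m!$ orderings $\sigma$ — permuting the $Z$'s among themselves and the $Y$'s among themselves — times the cyclic-redundancy factor, and these factors are uniform across all fillings, yielding the same conclusion.)

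Finally I would match $\sum_{[f]\in\tilde V_{n,m}}O_{[f]}$ to $O_{(n,m)}$. For $m=0$ this is immediate: $\tilde V_{n,0}=\{[\uparrow^n]\}$ and $O_{[\uparrow^n]}=\text{tr}(Z^n)=O_{(n,0)}$. For $m>0$, I claim there is a bijection between $\tilde V_{n,m}$ and the index set $\{(l_1,\dots,l_m): l_i\ge 0,\ \sum_i l_i = n\}$ of $O_{(n,m)}$: use cyclic invariance to rotate each class so that position $n+m$ carries a $\downarrow$, then read off the word as $Z^{l_1}Y Z^{l_2}Y\cdots Z^{l_m}Y$ with $l_i$ the number of $Z$'s in the $i$-th block between consecutive $Y$'s. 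One must check this is well defined and bijective — the only subtlety is when a word has a nontrivial cyclic symmetry (e.g.\ $\text{tr}((Z^a Y)^d)$ with $d\mid \gcd$), in which case several tuples $(l_1,\dots,l_m)$ give the same trace but the corresponding term in $O_{(n,m)}$ is then counted with exactly that multiplicity, matching the single class $[f]$ weighted appropriately; so the two sums still agree term by term. Thus $A_{(n,m)}=\frac{1}{|\tilde V_{n,m}|}O_{(n,m)}$, which is the asserted proportionality.

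\textbf{Main obstacle.} The delicate point is the interplay between the $S_{n+m}$-symmetrization and the cyclic ($\mathbb{Z}_{n+m}$) redundancy of the trace: showing rigorously that all classes $[f]$ receive equal weight, and that the bookkeeping of cyclic self-symmetries in $\sum_{[f]}O_{[f]}$ exactly reproduces the multiplicities hidden in the constrained sum defining $O_{(n,m)}$. The transitivity-of-$S_{n+m}$-on-$V_{n,m}$ observation is what makes this go through cleanly, and I would lean on it rather than on a term-by-term multinomial count.
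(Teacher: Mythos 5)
There is a genuine gap: your central claim that every class $[f]$ appears in $A_{(n,m)}$ with the same coefficient is false, and the orbit--stabilizer argument you use to justify it does not apply. The point is that $S_{n+m}$ does not act in a well-defined way on the cyclic quotient $\tilde V_{n,m}=V_{n,m}/\mathbb{Z}_{n+m}$ (the cyclic group is not normalized by $S_{n+m}$): for $n=m=2$ the transposition $\sigma=(23)$ sends $\uparrow\uparrow\downarrow\downarrow$ into the class of $\uparrow\downarrow\uparrow\downarrow$ but fixes $\downarrow\uparrow\uparrow\downarrow$, which lies in the same class as $\uparrow\uparrow\downarrow\downarrow$. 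Counting directly instead, the number of $\sigma$ with $\sigma f_0\in[f]$ is $n!\,m!\cdot|[f]|=n!\,m!\,(n+m)/|Stab_{f}(\mathbb{Z}_{n+m})|$, which varies with $[f]$ through the cyclic symmetry of the word; your parenthetical alternative ("times the cyclic-redundancy factor, and these factors are uniform across all fillings") fails for the same reason. Concretely, $A_{(2,2)}=\tfrac{2}{3}\,\text{tr}(Z^2Y^2)+\tfrac{1}{3}\,\text{tr}(ZYZY)$, so the two classes do not receive equal weight. Likewise $\sum_{[f]}O_{[f]}\neq O_{(n,m)}$: the constrained sum over tuples $(l_1,\dots,l_m)$ counts the class $[f]$ with multiplicity $m/|Stab_{f}(\mathbb{Z}_{n+m})|$, e.g.\ $O_{(2,2)}=2\,\text{tr}(Z^2Y^2)+\text{tr}(ZYZY)$. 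Your final formula $A_{(n,m)}=\frac{1}{|\tilde V_{n,m}|}O_{(n,m)}$ is therefore wrong as an equation: for $n=m=2$ it gives the constant $\tfrac12$ where the correct value is $\tfrac13$.

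The proposition is nonetheless true, and the correct argument --- the one the paper gives --- is precisely the stabilizer bookkeeping you tried to bypass. One computes that the coefficient of $O_{[f]}$ in $A_{(n,m)}$ is $\frac{n!m!}{(n+m)!}\frac{(n+m)}{|Stab_{f}(\mathbb{Z}_{n+m})|}$ and that its coefficient in $O_{(n,m)}$ is $\frac{m}{|Stab_{f}(\mathbb{Z}_{n+m})|}$; the nontrivial $[f]$-dependence is the \emph{same} factor $1/|Stab_{f}(\mathbb{Z}_{n+m})|$ in both expansions, so it cancels in the ratio and yields $A_{(n,m)}=\frac{n!m!}{(n+m)!}\frac{(n+m)}{m}O_{(n,m)}$. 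Your two uniformity claims err in compensating directions, which is why your conclusion is only off by a constant, but as written the proof does not establish proportionality: the one thing that actually has to be checked --- that the cyclic-stabilizer factors in the two expansions coincide class by class --- is exactly the step you omitted.
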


\begin{proof}
	For $m > 0 $, we can write $O_{(n,m)}$ as
	\begin{equation}
	O_{(n,m)} = \sum_{[f] \in \tilde{V}_{n,m}}c_{[f]}O_{[f]}
			\end{equation}
	and we need to find the coefficient $c_{[f]}$.
	
	First, we see that for any $[f] \in \tilde{V}_{(n,m)}$, $c_{[f]} \geq 1$. This is because we can describe a $[f] \in \tilde{V}_{(n.,m)}$ by saying that there are first $l_1$ $Z$s followed by a $Y$ and then $l_2$ $Z$s followed by a $Y$ and so on.
	
	Second, for a generic sequence $l_1,l_2,\dots,l_{m}$, the new sequence $l_1' = l_{\lambda(1)},l_2' = l_{\lambda(2)}, \dots,$ $l_{m}' = l_{\lambda(m)}$ for a $\lambda \in \mathbb{Z}_{m}$ corresponds to the same operator. For each $\lambda \in \mathbb{Z}_{m}$ we have a new but equivalent sequence, thus if we forget the over counting, a generic operator $O_{[f]}$ is counted $m$ times in the summation. However, there is possibility of over counting since it is possible that for some $l_1,l_2,\dots,l_{m}$ and some $\lambda$ we have $l_1 = l_{\lambda(1)},l_2 = l_{\lambda(2)}, \dots, l_{m} = l_{\lambda(m)}$. Therefore for general cases, the results should be $m$ divided by the number of group elements in $\mathbb{Z}_{m}$ that keeps the sequence $l_1,\dots,l_m$ unchanged. Remember that we use $[f] \in \tilde{V}_{(n,m)}$ to label the operator, and if the sequence $l_1,\dots,l_m$ corresponds to $[f]$, then the number of group elements in $\mathbb{Z}_{m}$ that fixes the sequence $l_1,\dots,l_m$ is the same as $|Stab_{f}(\mathbb{Z}_{n+m})|$, where
	\begin{equation}
	Stab_{f}(\mathbb{Z}_{n+m}) = \{\sigma \in \mathbb{Z}_{n+m}\mid \sigma f = f\}.
	\end{equation}
	Therefore, for any $[f]\in \tilde{V}_{n,m}$, the coefficient $c_{[f]}$ can be written as
	\begin{equation}
	c_{[f]} = \frac{m}{|Stab_{f}(\mathbb{Z}_{n+m})|}.
	\end{equation}
	And we can write $O_{(n,m)}$ as
	\begin{equation}
	O_{(n,m)} = \sum_{[f] \in \tilde{V}_{(n,m)}}\frac{m}{|Stab_{f}(\mathbb{Z}_{n+m})|} O_{[f]}.
	\end{equation}
	
	For the case $m = 0$, $O_{(n,0)} = \text{tr}(Z^n)$ by definition.

	Now we consider the symmetrized operator $A_{(n,m)}$, which can be written in the form
	\begin{equation}
	A_{(n,m)}  = \sum_{[f] \in \tilde{V}_{n,m}}c'_{[f]} O_{[f]}.
	\end{equation}
We can also write the definition of $A_{(n,m)}$
\begin{equation}
A_{(n,m)} = \frac{1}{(n+m)!}\sum_{\sigma \in S_{n+m}} \sigma O_{[f_0]}.
\end{equation}
For any $f$, we can always find a $\sigma_f$ such that $\sigma_f f_0 = f$. Define $I \subset \{1,2,\dots,n+m\}$ by
\begin{equation}
I = \{i \mid f(i) = \uparrow\}.
\end{equation}
And similarly define  $J \subset \{1,2,\dots,n+m\}$ by
\begin{equation}
J =\{j\mid f(j) = \downarrow\}.
\end{equation}
Then we define $S_I\subset S_{n+m}$ to be the group of permutation of indices in $I$ and similarly for $S_J$. Then for any $\pi \in S_{I}\times S_J$, we have
\begin{equation}
\pi f = f, \quad  \Rightarrow \quad (\pi\sigma_f)f_0 = f.
\end{equation}
Therefore each $f$ is counted $|S_I|\times|S_J|$ times. Then consider the equivalence class $[f]$. We need to count the number of elements in the equivalence class, which is given by $(n+m)/|Stab_{f}(\mathbb{Z}_{n+m})|$. The coefficient $c'_{[f]}$ is then the product of the overall $\frac{1}{(n+m)!}$ with the above two factors, which is
\begin{equation}
c'_{[f]} = \frac{1}{(n+m)!}|S_I|\times|S_J|\frac{n+m}{|Stab_{f}(\mathbb{Z}_{n+m})|} =\frac{n!m!}{(n+m)!} \frac{(n+m)}{|Stab_{f}(\mathbb{Z}_{n+m})|}.
\end{equation}
And we have
\begin{equation}
A_{(n,m)} = \sum_{[f] \in \tilde{V}_{n,m}} \frac{n!m!}{(n+m)!} \frac{(n+m)}{|Stab_{f}(\mathbb{Z}_{n+m})|} O_{[f]}.
\label{Symmetrized_expression}
\end{equation}
Therefore, for $m > 0$, we have
\begin{equation}
A_{(n,m)} =  \frac{n!m!}{(n+m)!} \frac{(n+m)}{m} O_{(n,m)}.
\end{equation}
Here, the formula is not symmetric in $n$ and $m$ since the definition of $O_{(n,m)}$ is not symmetric in $n$ and $m$.

And for $m = 0$, $A_{(n,0)} = \text{tr}(Z^n) = O_{(n,0)}$.
\end{proof}

From the calculation in the above proof, or specifically from the equation (%
\ref{Symmetrized_expression}), we can easily express $a_{(n,m)}^{\dagger }$
by linear combination of $a_{[f]}^{\dagger }$ as follows:
\begin{equation}
a_{(n,m)}^{\dagger }=\sum_{[f]\in \tilde{V}_{n,m}}\frac{n!m!}{(n+m)!}\frac{%
(n+m)}{|Stab_{f}(\mathbb{Z}_{n+m})|}a_{[f]}^{\dagger }.
\end{equation}%
Then we can also write down the commutation relations
\begin{equation}
\lbrack a_{(n,m)},a_{(n^{\prime },m^{\prime })}^{\dagger }]=\delta
_{nn^{\prime }}\delta _{mm^{\prime }}\sum_{[f]\in \tilde{V}_{n,m}}\left(
\frac{n!m!}{(n+m)!}\frac{(n+m)}{|Stab_{f}(\mathbb{Z}_{n+m})|}\right)
^{2}\quad \text{ for }n,m\geq 1
\end{equation}%
and
\begin{equation}
\lbrack a_{(n,0)},a_{(n^{\prime },0)}^{\dagger }]=n\delta _{nn^{\prime
}},\quad \lbrack a_{(0,m)},a_{(0,m^{\prime })}^{\dagger }]=m\delta
_{mm^{\prime }}.
\end{equation}

It's also easy to see that the symmetrized trace operator $A_{(n,m)}$ is in
the kernel of $\frac{1}{2}\hat{D}_{2}^{\prime }$. Note that%
\begin{equation}
(\mathbb{I}-P_{i,i+1})\sum_{\sigma \in S_{n+m}}\sigma =\sum_{\sigma \in
S_{n+m}}\sigma -\sum_{\sigma \in S_{n+m}}P_{i,i+1}\sigma =\sum_{\sigma \in
S_{n+m}}\sigma -\sum_{\sigma \in S_{n+m}}\sigma =0.
\end{equation}%
Therefore $(\mathbb{I}-P_{i,i+1})A_{(n,m)}=0$ and we have
\begin{equation}
\frac{1}{2}\hat{D}_{2}^{\prime }A_{(n,m)}=0.
\end{equation}

\begin{prop}
The quarter BPS coherent states
\begin{equation}
\rvert Coh^{\frac{1}{4}}\rangle =\exp (\sum_{n,m\geq
0}c_{(n,m)}a_{(n,m)}^{\dagger })\rvert 0\rangle
\end{equation}%
where $c_{(n,m)}$ are complex coefficients, are annihilated by $\hat{D}%
_{2}^{\prime }$ and are eigenstates of the annihilation operators $a_{(n,m)}$%
.
\end{prop}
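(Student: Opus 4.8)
\medskip
\noindent\emph{Proof strategy.} The plan is to treat the two claims separately, since each reduces to material already at hand. For the eigenvalue claim, I would first note that $\rvert Coh^{\frac{1}{4}}\rangle$ is a special case of the general coherent state of Section 2.2: expanding each $a_{(n,m)}^{\dagger}$ as the linear combination of the $a_{[f]}^{\dagger}$, $[f]\in\tilde{V}_{n,m}$, exhibited just above, and using the bijections $\varphi:\tilde{K}^{(n,m)}\xrightarrow{\sim}\tilde{V}_{n,m}$ (so that, running over all $(n,m)$, the sets $\tilde{V}_{n,m}$ reassemble to $\tilde{K}$, the $n=0$ and $m=0$ pieces playing the role of the $K_{0}$ part), the exponent $\sum_{n,m\ge0}c_{(n,m)}a_{(n,m)}^{\dagger}$ becomes $\sum_{[\vec{k}]\in\tilde{K}}\tilde{c}_{[\vec{k}]}a_{[\vec{k}]}^{\dagger}$. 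Hence by Section 2.2 the state is a simultaneous eigenvector of every $a_{[\vec{k}]}$, so of every linear combination of them, in particular of $a_{(n,m)}$. Equivalently, and more directly, I would repeat the manipulation of Section 2.2 verbatim: since $[a_{(n,m)},a_{(n',m')}]=0$ and $[a_{(n,m)},a_{(n',m')}^{\dagger}]=\delta_{nn'}\delta_{mm'}\lambda_{(n,m)}$ is the c-number computed above, one has $[a_{(n,m)},(a_{(n,m)}^{\dagger})^{j}]=j\,\lambda_{(n,m)}\,(a_{(n,m)}^{\dagger})^{j-1}$, and commuting $a_{(n,m)}$ through $\exp(c_{(n,m)}a_{(n,m)}^{\dagger})$, past which all remaining factors pass freely, gives $a_{(n,m)}\rvert Coh^{\frac{1}{4}}\rangle=\lambda_{(n,m)}c_{(n,m)}\rvert Coh^{\frac{1}{4}}\rangle$.

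For annihilation by $\hat{D}_{2}^{\prime}$, I would combine three facts established above: (i) in the $N\to\infty$ limit $\hat{D}_{2}^{\prime}$ obeys the Leibniz rule, i.e.\ it acts as a derivation on products of trace operators; (ii) $\hat{D}_{2}^{\prime}A_{(n,m)}=0$, which follows from $(\mathbb{I}-P_{i,i+1})A_{(n,m)}=0$ together with $\frac{1}{2}\hat{D}_{2}^{\prime}=\sum_{i}(\mathbb{I}-P_{i,i+1})$; and (iii) $\hat{D}_{2}^{\prime}$ annihilates the identity operator, $\hat{D}_{2}^{\prime}\rvert0\rangle=0$. Under $a_{(n,m)}^{\dagger}\leftrightarrow A_{(n,m)}$, a monomial $\prod_{i}a_{(n_{i},m_{i})}^{\dagger}\rvert0\rangle$ corresponds to the product $\prod_{i}A_{(n_{i},m_{i})}$, so by (i)--(iii) $\hat{D}_{2}^{\prime}\bigl(\prod_{i}A_{(n_{i},m_{i})}\bigr)=\sum_{i}A_{(n_{1},m_{1})}\cdots(\hat{D}_{2}^{\prime}A_{(n_{i},m_{i})})\cdots=0$; by linearity $\hat{D}_{2}^{\prime}$ kills every polynomial in the $a_{(n,m)}^{\dagger}$ applied to $\rvert0\rangle$. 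Finally, since $\hat{D}_{2}$, hence $\hat{D}_{2}^{\prime}$, preserves the numbers of $Z$'s and of $Y$'s---the factor $[Z,Y]$ restoring exactly what $[\frac{d}{dZ},\frac{d}{dY}]$ removes---it acts within each finite-dimensional graded subspace, on which $\exp(\sum_{n,m\ge0}c_{(n,m)}a_{(n,m)}^{\dagger})$ is a finite sum of such monomials; applying $\hat{D}_{2}^{\prime}$ term by term then yields $\hat{D}_{2}^{\prime}\rvert Coh^{\frac{1}{4}}\rangle=0$.

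The only genuinely nontrivial ingredient is (i), the Leibniz property of $\hat{D}_{2}^{\prime}$ in the planar limit: because $\hat{D}_{2}$ carries the double derivative $[\frac{d}{dZ},\frac{d}{dY}]$, its action on a product of traces a priori also produces trace-joining and trace-splitting terms in which the two derivatives land on different factors, and one must check these are suppressed by a power of $1/N$ so that only the single-factor (derivation) contributions survive as $N\to\infty$. This is precisely the assertion, quoted above, that $\hat{D}_{2}^{\prime}=\frac{1}{N}\hat{D}_{2}$ satisfies the Leibniz rule, which I would take as given; with it in place the rest is bookkeeping. A secondary point worth stating explicitly is the well-definedness of the exponential and the legitimacy of applying $\hat{D}_{2}^{\prime}$ term by term, which the grading remark handles cleanly.
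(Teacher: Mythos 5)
Your proposal is correct and follows essentially the same route as the paper: the annihilation by $\hat{D}_{2}^{\prime}$ is obtained by expanding the exponential into monomials in the $a_{(n,m)}^{\dagger}$ and invoking the Leibniz rule together with $\hat{D}_{2}^{\prime}A_{(n,m)}=0$, while the eigenvalue property follows from expanding $a_{(n,m)}$ in the $a_{[f]}$ basis (equivalently, from the c-number commutators $[a_{(n,m)},a_{(n',m')}^{\dagger}]$ recorded after Proposition 2.1) and commuting through the exponential. The only cosmetic difference is that the paper treats the $m=0$ and $n=0$ modes separately, where the c-number is $n$ (resp.\ $m$) rather than the sum-over-$\tilde{V}_{n,m}$ formula, a distinction your ``c-number computed above'' implicitly absorbs.
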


\begin{proof}
The quarter BPS coherent states are
\begin{equation}
\rvert Coh^{\frac{1}{4}}\rangle =\exp (\sum_{n,m\geq
0}c_{(n,m)}a_{(n,m)}^{\dagger })\rvert 0\rangle
\end{equation}%
where $c_{(n,m)}$ are complex coefficients. We can then easily prove that exponential of the symmetrized operator is
still in the kernel of $\hat{D}_{2}^{\prime }$
\begin{eqnarray}
\frac{1}{2}\hat{D}_{2}^{\prime }\exp (\sum_{n,m\geq
0}c_{(n,m)}a_{(n,m)}^{\dagger })\rvert 0\rangle &=&\frac{1}{2}\hat{D}_{2}^{\prime }\prod_{n,m\geq 0}\sum_{l\geq 0}\frac{1}{l!%
}(c_{(n,m)}a_{(n,m)}^{\dagger })^{l}\rvert 0\rangle   \notag \\
&=&\frac{1}{2}\hat{D}_{2}^{\prime }\sum_{l_{(n,m)}}\prod_{n,m\geq 0}\frac{1}{%
l_{(n,m)}!}(c_{(n,m)}a_{(n,m)}^{\dagger })^{l_{(n,m)}}\rvert 0\rangle
\notag \\
&=&\sum_{l_{(n,m)}}\frac{1}{2}\hat{D}_{2}^{\prime }\prod_{n,m\geq 0}\frac{1}{%
l_{(n,m)}!}(c_{(n,m)}a_{(n,m)}^{\dagger })^{l_{(n,m)}}\rvert 0\rangle =0,\notag \\
\end{eqnarray}%
where in the last line we have used the fact that $\frac{1}{2}\hat{D}%
_{2}^{\prime }a_{(n,m)}^{\dagger }=0$ and the action of $\frac{1}{2}\hat{D}%
_{2}^{\prime }$ satisfies Leibniz rule.

Now we show that the state defined above is the eigenstate
of the annihilation operators. First, consider the case when $n,m \geq 1$
\begin{equation}
a_{(n,m)} \exp(\sum_{n,m \geq 0} c_{(n,m)} a^\dagger_{(n,m)}) \rvert 0
\rangle \\
=\left( \sum_{[f] \in \tilde{V}_{n,m}} \frac{n!m!}{(n+m)!} \frac{(n+m)}{%
|Stab_{f}(\mathbb{Z}_{n+m})|} a_{[f]} \right) \exp(\sum_{n,m \geq 0}
c_{(n,m)} a^\dagger_{(n,m)}) \rvert 0 \rangle.
\end{equation}
And for $[f]\in \tilde{V}_{(n,m)}$
\begin{eqnarray}
&&a_{[f]}\exp (\sum_{n,m\geq 0}c_{(n,m)}a_{(n,m)}^{\dagger })\rvert 0\rangle
\\
&=&a_{[f]}\exp (\sum_{n,m\geq 0}\sum_{[f]\in \tilde{V}_{n,m}}c_{(n,m)}\frac{%
n!m!}{(n+m)!}\frac{(n+m)}{|Stab_{f}(\mathbb{Z}_{n+m})|}a_{[f]}^{\dagger
})\rvert 0\rangle \\
&=&\frac{n!m!}{(n+m)!}\frac{(n+m)}{|Stab_{f}(\mathbb{Z}_{n+m})|}%
c_{(n,m)}\exp (\sum_{n,m\geq 0}c_{(n,m)}a_{(n,m)}^{\dagger })\rvert 0\rangle.
\end{eqnarray}
Combining the above two formulas, we have
\begin{equation}
a_{(n,m)}\exp(\sum_{n,m \geq 0} c_{(n,m)} a^\dagger_{(n,m)}) \rvert 0
\rangle = c_{(n,m)}\sum_{[f] \in \tilde{V}_{n,m}} \left( \frac{n!m!}{(n+m)!}
\frac{(n+m)}{|Stab_{f}(\mathbb{Z}_{n+m})|} \right)^2 \exp(\sum_{n,m \geq 0}
c_{(n,m)} a^\dagger_{(n,m)}) \rvert 0 \rangle.
\end{equation}

Then we consider the case $m=0$. In this
situation, $V_{n,0}$ contains only one element $f=\uparrow ^{n}$, and $%
|Stab_{\uparrow ^{n}}(\mathbb{Z}_{n})|=n$. Then we have $a_{(n,0)}^{\dagger
}=a_{n}^{\dagger }$ which corresponds to our previous half BPS operator. The derivation is formally the same.
\begin{eqnarray}
a_{(n,0)}\exp (\sum_{n,m\geq 0}c_{(n,m)}a_{(n,m)}^{\dagger })\rvert 0\rangle
&=&\exp (\sum_{n\geq 1,m\geq 0}c_{(n,m)}a_{(n,m)}^{\dagger })a_{n}\exp
(\sum_{k}c_{(k,0)}a_{k}^{\dagger })\rvert 0\rangle  \notag \\
&=&\exp (\sum_{n\geq 1,m\geq 0}c_{(n,m)}a_{(n,m)}^{\dagger })nc_{(n,0)}\exp
(\sum_{k}c_{(k,0)}a_{k}^{\dagger })\rvert 0\rangle  \notag \\
&=&nc_{(n,0)}\exp (\sum_{n,m\geq 0}c_{(n,m)}a_{(n,m)}^{\dagger })\rvert
0\rangle.
\end{eqnarray}
Similarly, for $n=0$, we have
\begin{equation}
a_{(0,m)}\exp (\sum_{n,m\geq 0}c_{(n,m)}a_{(n,m)}^{\dagger })\rvert 0\rangle
=mc_{(0,m)}\exp (\sum_{n,m\geq 0}c_{(n,m)}a_{(n,m)}^{\dagger })\rvert
0\rangle .
\end{equation}

\end{proof}

The stabilizer subgroup $Stab_{f}(\mathbb{Z}_{n+m})$ plays an important role
in our above results. Basically, it determines the normalization of the
symmetrized operator $A_{(n,m)}$. It's easy to see that they satisfy the
following formula
\begin{equation}
\sum_{\lbrack f]\in \tilde{V}_{n,m}}\frac{n!m!}{(n+m)!}\frac{(n+m)}{%
|Stab_{f}(\mathbb{Z}_{n+m})|}=1.
\end{equation}

\vspace{1pt}

\vspace{1pt}

\subsection{Inner products with quarter BPS coherent states}

The truncated coherent state is
\begin{equation}
\rvert Coh^{\frac{1}{4}}(x,y,c)\rangle =\exp (\sum_{n}\frac{\Lambda _{n}}{n}%
a_{(n,0)}^{\dagger }+\sum_{m}\frac{\Lambda _{m}^{\prime }}{m}%
a_{(0,m)}^{\dagger }+ca_{(1,1)}^{\dagger })\rvert 0\rangle ,
\end{equation}%
with
\begin{equation}
\Lambda _{n}=x_{1}^{n}+x_{2}^{n}+\cdots ,\quad \Lambda _{m}^{\prime
}=y_{1}^{m}+y_{2}^{m}+\cdots .
\end{equation}%
Using equation (\ref{Inner_general_Coh}) for the overlap of coherent states
in a general form, we can directly write the overlap between two truncated
coherent states as follows:
\begin{equation}
\langle Coh^{\frac{1}{4}}(x,y,c)\rvert Coh^{\frac{1}{4}}(x^{\prime
},y^{\prime },c^{\prime })\rangle =\prod_{i,j}\frac{1}{1-x_{i}^{\prime
}x_{j}^{\ast }}\prod_{i,j}\frac{1}{1-y_{i}^{\prime }y_{j}^{\ast }}\exp
(c^{\prime }c^{\ast }).
\end{equation}%
Then for the special case where $x^{\prime }=x,y^{\prime }=y,c^{\prime }=c$,
we get the norm-squared of a truncated coherent state:
\begin{equation}
\langle Coh^{\frac{1}{4}}(x,y,c)\rvert Coh^{\frac{1}{4}}(x,y,c)\rangle
=\prod_{i,j}\frac{1}{1-x_{i}x_{j}^{\ast }}\prod_{i,j}\frac{1}{%
1-y_{i}y_{j}^{\ast }}\exp (|c|^{2}).
\end{equation}

We consider the overlap of the coherent states with trace product basis.
Using notation $a_{[f]}^{\dagger }$ in this section, then a general basis
can be written as
\begin{equation}
\prod_{\lbrack f]\in \tilde{V}}a_{[f]}^{\dagger w_{[f]}}\rvert 0\rangle ,
\end{equation}%
where $\tilde{V}=\bigcup_{n,m}\tilde{V}_{n,m}$.

Then we consider
\begin{equation}
\langle 0\lvert \prod_{\lbrack f]\in \tilde{V}}a_{[f]}^{w_{[f]}}\rvert Coh^{%
\frac{1}{4}}\rangle ,
\end{equation}%
where $\rvert Coh^{\frac{1}{4}}\rangle =\exp
(\sum_{n,m}c_{(n,m)}a_{(n,m)}^{\dagger })\rvert 0\rangle $. Since $\rvert
Coh^{\frac{1}{4}}\rangle $ is just a special case of our previous $\rvert
Coh\rangle =\exp (\sum_{[f]\in \tilde{V}}c_{[f]}a_{[f]}^{\dagger })\rvert
0\rangle $, we still use the equation (\ref{coh_trace_product_02}). And we
have that
\begin{equation}
\langle 0\lvert \prod_{\lbrack f]\in \tilde{V}}a_{[f]}^{w_{[f]}}\rvert
Coh\rangle =\prod_{n}(nc_{[\uparrow ^{n}]})^{w_{[\uparrow
^{n}]}}\prod_{m}(mc_{[\downarrow ^{m}]})^{w_{[\downarrow
^{m}]}}\prod_{[f]\in \cup _{n,m\geq 1}\tilde{V}_{n,m}}(c_{[f]})^{w_{[f]}},
\label{Overlap_general}
\end{equation}%
where $\uparrow ^{n}$ is defined to map every $i\in \{1,\dots ,n\}$ to $%
\uparrow $, and similarly for $\downarrow ^{m}$. We can write $\rvert Coh^{%
\frac{1}{4}}\rangle $ as
\begin{equation}
\rvert Coh^{\frac{1}{4}}\rangle =\exp (\sum_{n,m\geq 0}\sum_{[f]\in \tilde{V}%
_{n,m}}c_{(n,m)}\frac{n!m!}{(n+m)!}\frac{(n+m)}{|Stab_{f}(\mathbb{Z}_{n+m})|}%
a_{[f]}^{\dagger })\rvert 0\rangle .
\end{equation}%
Therefore, replacing $c_{[f]}$ by $c_{(n,m)}\frac{n!m!}{(n+m)!}\frac{(n+m)}{%
|Stab_{f}(\mathbb{Z}_{n+m})|}$ in (\ref{Overlap_general}), we have
\begin{eqnarray}
&&\langle 0\lvert \prod_{\lbrack f]\in \tilde{V}}a_{[f]}^{w_{[f]}}\rvert
Coh^{\frac{1}{4}}\rangle  \notag \\
&=&\prod_{n}(nc_{(n,0)})^{w_{[\uparrow
^{n}]}}\prod_{m}(mc_{(0,m)})^{w_{[\downarrow ^{m}]}}\prod_{[f]\in \cup
_{n,m\geq 1}\tilde{V}_{n,m}}\left( c_{(n,m)}\frac{n!m!}{(n+m)!}\frac{(n+m)}{%
|Stab_{f}(\mathbb{Z}_{n+m})|}\right) ^{w_{[f]}}.  \notag
\label{Overlap_general_02} \\
&&
\end{eqnarray}%
In some situations, the above formula will give zero, and whether the above
overlap equals zero depends on which $c_{(n,m)}$ equals zero.

For the special case of half BPS states, we have
\begin{equation}
w_{[f]}=0\text{ for }[f]\notin \tilde{V}_{n,0},
\end{equation}%
and the coherent state is
\begin{equation}
\rvert Coh^{\frac{1}{2}}\rangle =\exp (\sum_{n}c_{(n,0)}a_{[\uparrow
^{n}]}^{\dagger })\rvert 0\rangle .
\end{equation}%
We have the overlap
\begin{equation}
\langle 0\lvert \prod_{n}a_{[\uparrow ^{n}]}^{w_{[\uparrow ^{n}]}}\rvert
Coh^{\frac{1}{2}}\rangle =\prod_{n}(nc_{(n,0)})^{w_{[\uparrow ^{n}]}}.
\end{equation}%
Using notation in \cite{Lin:2017dnz}
\begin{equation}
a_{[\uparrow ^{n}]}^{\dagger }=a_{n}^{\dagger },\;w_{[\uparrow
^{n}]}=w_{n},\;c_{(n,0)}=\frac{\Lambda _{n}}{n},
\end{equation}%
this reduces to our previous results in \cite{Lin:2017dnz},
\begin{equation}
\langle 0\lvert \prod_{n}a_{n}^{w_{n}}\rvert Coh^{\frac{1}{2}}\rangle
=\prod_{n}(\Lambda _{n})^{w_{n}}.
\end{equation}

Till now, we have established the basic ingredients related to the quarter
BPS coherent states and studied their properties including inner product. In
the following section we will consider other interesting operators and look
at their relations.

\vspace{1pt}

\section{Brauer states and their relation to quarter BPS coherent states}

\renewcommand{\theequation}{3.\arabic{equation}} \setcounter{equation}{0} %
\renewcommand{\thethm}{3.\arabic{thm}} \setcounter{thm}{0} %
\renewcommand{\theprop}{3.\arabic{prop}} \setcounter{prop}{0} %
\renewcommand{\thelemma}{3.\arabic{lemma}} \setcounter{lemma}{0}

\subsection{Brauer operators and trace product basis}

The Brauer algebra \cite{Brauer:1937} is a natural generalization of the
symmetric group algebra. The Walled Brauer algebra \cite%
{Benkart:1994,Koike:1989,Turaev:1989} is a subalgebra of Brauer algebra.
Kimura and Ramgoolam constructed basis of gauge invariant operators of two
matrix fields, labelled by representations of Brauer algebras \cite%
{Kimura:2007wy,Kimura:2010tx}. The Walled Brauer algebra $B_{N}(n,m)$ is
very natural to multi-matrix case. The Schur-Weyl duality for Walled Brauer
algebra is
\begin{equation}
V^{\otimes n}\bigotimes \bar{V}^{\otimes m}=\bigoplus_{\gamma }V_{\gamma
}^{U(N)}\bigotimes V_{\gamma }^{B_{N}(n,m)}
\end{equation}%
with $V$ and $\bar{V}$ corresponding to the two matrices. In the above
formula, $\gamma $ runs over the set of all staircases. A staircase is
defined to be a sequence of integers $(\gamma _{1},\gamma _{2},\dots ,\gamma
_{r})$ such that $\gamma _{1}\geq \gamma _{2}\geq \dots \geq \gamma _{r}$.
The sets of positive integers and negative integers determine two partitions
respectively, hence we can equivalently write a staircase as $\gamma
=(l,\gamma ^{+},\gamma ^{-})$ where $l$ is lying in between $0$ and $\min
(n,m)$. And $\gamma ^{+},\gamma ^{-}$ are two Young tableaux. For more
detail of the Walled Brauer algebra, see \cite{Kimura:2007wy,Kimura:2010tx}.

We will consider $b$ to be an element of a basis of the Brauer algebra $%
B_{N}(n,m)$. Although the basis can be more general in our following
definitions, we will consider the basis to be the set of $(n,m)$ diagrams $%
\mathcal{D}_{n,m}$ defined in Appendix \ref{appendix brauer algebra}. The
dimension of the $U(N)$ irreducible representation associated with the label
$\gamma $, is denoted as $t_{\gamma }=\dim \gamma $. And the dual element $%
b^{\ast }$ is defined and computed in \cite{Kimura:2007wy}, which is as
follows
\begin{equation}
b^{\ast }=\frac{1}{N^{n+m}}\Sigma ^{-1}(\Omega _{n+m}^{-1}(\Sigma (b))^{-1})
\end{equation}%
where the map $\Sigma :B_{N}(n,m)\rightarrow \mathbb{C}[S_{n+m}]$ is defined
in (3.19) in \cite{Kimura:2007wy}. The operator $\Omega _{n+m}$ is defined
by $\Omega _{n+m}=\sum_{\sigma \in S_{n+m}}N^{C(\sigma )-(n+m)}\sigma $. And
its inverse is
\begin{equation}
\Omega _{n+m}^{-1}=\frac{N^{n+m}}{(n+m)!}\sum_{T}\frac{d_{T}^{2}}{\dim T}%
\sum_{\sigma \in S_{n+m}}\chi _{T}(\sigma )\sigma .
\end{equation}

There are projection operators $P^{\gamma }$ which can be expressed as
\begin{equation}
P^{\gamma }=t_{\gamma }\sum_{b}\chi ^{\gamma }(b)b^{\ast }=t_{\gamma
}\sum_{b}\chi _{\gamma }(b^{\ast })b,  \label{Projector P}
\end{equation}%
where $\chi ^{\gamma }$ is the irreducible representation labeled by
staircase $\gamma $.

The irreducible representation $V_{\gamma }^{B_{N}(n,m)}$ of $B_{N}(n,m)$
can be further decomposed into a number of representations of the subalgebra
$\mathbb{C}[S_{n}\times S_{m}]$. We can introduce the operators $%
Q_{A,i,j}^{\gamma }$ as follows:
\begin{equation}
Q_{A,ij}^{\gamma }=t_{\gamma }\sum_{b}\chi _{A,ij}^{\gamma }(b^{\ast })b.
\end{equation}%
Here $A$ labels irreducible representations of $\mathbb{C}[S_{n}\times
S_{m}] $ and $i,j$ run over the multiplicity of the relevant decomposition.
The $\chi _{A,ij}^{\gamma }$ is the restricted character. For more detail,
see \cite{Kimura:2008ac,Kimura:2010tx}. We see that $P^{\gamma
}=\sum_{A,i}Q_{A,ii}^{\gamma }$. The operators $P^{\gamma },Q_{A,ij}^{\gamma
}$ have the property
\begin{equation}
hP^{\gamma }h^{-1}=P^{\gamma },~hQ_{A,ij}^{\gamma }h^{-1}=Q_{A,ij}^{\gamma
},\quad \text{ for all }h\in S_{n}\times S_{m}\subset B_{N}(n,m).
\end{equation}%
By the equation (3.37) in \cite{Kimura:2007wy}, we have
\begin{equation}
h\Sigma (P^{\gamma })h^{-1}=\Sigma (P^{\gamma }),~h\Sigma (Q_{A,ij}^{\gamma
})h^{-1}=\Sigma (Q_{A,ij}^{\gamma }),\quad \text{ for all }h\in S_{n}\times
S_{m}\subset B_{N}(n,m).
\end{equation}%
Here, the inclusion $S_{n}\times S_{m}\subset B_{N}(n,m)~$means that we can
take from the algebra $B_{N}(n,m)$ a subset $S_{n}\times S_{m}$ which is
also a group. The Brauer operators take the form
\begin{equation}
O_{A,ij}^{\gamma }(Z,Y)=\text{tr}(Q_{A,ij}^{\gamma }Z^{\otimes n}Y^{T\otimes
m}).  \label{Brauer_01}
\end{equation}%
There are also operators
\begin{equation}
O^{\gamma }(Z,Y)=\text{tr}(P^{\gamma }Z^{\otimes n}Y^{T\otimes m}).
\label{Brauer_02}
\end{equation}%
The inner products of the Brauer operators are described in Appendix \ref%
{Appendix A}.

Properties of the map $\Sigma $ is needed to compute various quantities, and
an important formula is (3.36) in \cite{Kimura:2007wy}, which is
\begin{equation}
\text{tr}(\Sigma (b)Z^{\otimes n}Y^{\otimes m})=\text{tr}(bZ^{\otimes
n}Y^{T\otimes m}).  \label{Sigma map}
\end{equation}

We will consider the inner product between $O_{A,ij}^{\gamma }(Z,Y)$ and our
previous trace product basis. Using (\ref{Projector P}-\ref{Sigma map}), the
inner product can be computed%
\begin{eqnarray}
\langle (\text{tr}(\alpha Z^{\otimes n}\otimes Y^{\otimes m}))^{\dagger
}O_{A,ij}^{\gamma }(Z,Y)\rangle &=&\langle (\text{tr}(\alpha Z^{\otimes
n}\otimes Y^{\otimes m}))^{\dagger }\text{tr}(Q_{A,ij}^{\gamma }Z^{\otimes
n}Y^{T\otimes m})\rangle  \notag \\
&=&t_{\gamma }\sum_{b}\chi _{\gamma }(Q_{A,ij}^{\gamma }b^{\ast })\langle (%
\text{tr}(\alpha Z^{\otimes n}\otimes Y^{\otimes m}))^{\dagger }\text{tr}%
(bZ^{\otimes n}Y^{T\otimes m})\rangle  \notag \\
&=&t_{\gamma }\sum_{b}\chi _{\gamma }(Q_{A,ij}^{\gamma }b^{\ast })\langle (%
\text{tr}(\alpha Z^{\otimes n}\otimes Y^{\otimes m}))^{\dagger }\text{tr}%
(\Sigma (b)Z^{\otimes n}Y^{\otimes m})\rangle  \notag \\
&=&t_{\gamma }\sum_{b}\chi _{\gamma }(Q_{A,ij}^{\gamma }b^{\ast })\sum_{h\in
S_{n}\times S_{m}}\text{tr}(\alpha ^{-1}h^{-1}\Sigma (b)h)  \notag \\
&=&\sum_{h\in S_{n}\times S_{m}}\text{tr}(\alpha ^{-1}h^{-1}\Sigma
(Q_{A,ij}^{\gamma })h)  \notag \\
&=&n!m!\text{tr}(\alpha ^{-1}\Sigma (Q_{A,ij}^{\gamma })).
\end{eqnarray}%
Similarly, we have formula for the operators $O^{\gamma }(Z,Y)$
\begin{equation}
\langle (\text{tr}(\alpha Z^{\otimes n}\otimes Y^{\otimes m}))^{\dagger
}O^{\gamma }(Z,Y)\rangle =n!m!\text{tr}(\alpha ^{-1}\Sigma (P^{\gamma })).
\label{Brauer_trace_product_02}
\end{equation}

Our next goal is to find more explicit expressions of our above results.
However, $t_{\gamma }$, $\chi _{\gamma }$, $b^{\ast }$ all depend on $N$,
therefore we need more formulas for the $N$ dependence of these quantities.

The $t_{\gamma }=\dim \gamma $ is the dimension of the $U(N)$ irreducible
representation associated with the label $\gamma $. The dimension can be
computed as follows. First $\gamma =(l,\gamma ^{+},\gamma ^{-})$ can always
be represented as a sequence $(\gamma _{1},\gamma _{2},\dots ,\gamma _{N})$
with $\gamma _{1}\geq \gamma _{2}\geq \dots \geq \gamma _{N}$. We define $%
\lambda (\gamma )=(\lambda _{1},\dots ,\lambda _{N})$ with $\lambda
_{i}=\gamma _{i}-\gamma _{N}+1$. By definition, $\lambda (\gamma )$ is a
partition, therefore it corresponds to an irreducible representation of $%
U(N) $ and we know the dimension formula for this representation, and we
define%
\begin{equation}
\dim \gamma =\dim \lambda (\gamma )=\prod_{(i,j)\in \lambda (\gamma )}\frac{%
N-i+j}{h_{ij}}.
\end{equation}%
We have that $\dim \gamma =O(N^{n+m-2l})$.\vspace{1pt}$~$It's interesting to
know the leading order behavior of $t_{\gamma }$. Therefore we define $%
\tilde{t}_{\gamma }$ as%
\begin{equation}
\tilde{t}_{\gamma }=\lim_{N\rightarrow \infty }\frac{1}{N^{n+m-2l}}t_{\gamma
},
\end{equation}%
where $\gamma =(l,\gamma ^{+},\gamma ^{-})$. We will give an explicit
expression for the coefficient $\tilde{t}_{\gamma }$.

From the above definition, we see that $\dim \gamma $ does not explicitly
depend on $l$. The $\dim \gamma $ only depends on $l$ through its dependence
on $\gamma ^{+},\gamma ^{-}$, or in other words, $\dim \gamma $ is a
function $f(N,\gamma ^{+},\gamma ^{-})$ of variables $N,\gamma ^{+},\gamma
^{-}$. Then we rewrite the expression of $\tilde{t}_{\gamma }$:
\begin{equation}
\tilde{t}_{\gamma }=\lim_{N\rightarrow \infty }\frac{1}{N^{|\gamma
^{+}|+|\gamma ^{-}|}}f(N,\gamma ^{+},\gamma ^{-}).
\end{equation}

We find that $\tilde{t}_{\gamma }$ only depends on $\gamma ^{+},\gamma ^{-}$,%
\begin{equation}
\tilde{t}_{\gamma }=\tilde{t}(\gamma ^{+},\gamma ^{-}).
\end{equation}%
\vspace{1pt}We find a formula to calculate the coefficient $\tilde{t}%
_{\gamma }$ explicitly in the case $\gamma =(0,\gamma ^{+},\gamma ^{-})$.
First, using equation (4.18) in \cite{Kimura:2007wy}:
\begin{equation}
\frac{d_{R}^{2}d_{S}^{2}}{\dim R\bar{S}}=\frac{m!^{2}n!^{2}}{(m+n)!}%
\sum_{T\vdash (m+n)}\frac{d_{T}^{2}}{\dim T}g(R,S;T)
\end{equation}%
where in this notation, $\gamma =(0,R,S)$, and $\dim R\bar{S}=\dim \gamma $.
The $d_{R}$ for a Young tableau $R\vdash n$ is the dimension of the
irreducible representation of $S_{n}$ labeled by $R$. The $\dim T$ for $%
T\vdash (n+m)$ is the dimension of the irreducible representation of $U(N)$
labeled by $T$. In our notation, we have
\begin{equation}
\frac{d_{\gamma ^{+}}^{2}d_{\gamma ^{-}}^{2}}{\dim \gamma }=\frac{%
m!^{2}n!^{2}}{(m+n)!^{2}}\sum_{T\vdash (m+n)}\frac{d_{T}^{2}}{\dim T}%
g(\gamma ^{+},\gamma ^{-};T).
\end{equation}%
And we use formulas%
\begin{equation}
\dim T=\prod_{(i,j)\in T}\frac{N-i+j}{h_{ij}},\;\quad
d_{T}=(n+m)!\prod_{(i,j)\in T}\frac{1}{h_{ij}}.
\end{equation}%
This gives us%
\begin{eqnarray}
d_{\gamma ^{+}}^{2}d_{\gamma ^{-}}^{2} &=&\frac{m!^{2}n!^{2}}{(m+n)!}%
\sum_{T\vdash (m+n)}\dim \gamma \frac{d_{T}^{2}}{\dim T}g(\gamma ^{+},\gamma
^{-};T)  \notag \\
&=&\frac{m!^{2}n!^{2}}{(m+n)!^{2}}\sum_{T\vdash (m+n)}\dim \gamma d_{T}\frac{%
(n+m)!}{\prod_{(i,j)\in T}(N-i+j)}g(\gamma ^{+},\gamma ^{-};T)  \notag \\
&=&\lim_{N\rightarrow \infty }\frac{m!^{2}n!^{2}}{(m+n)!^{2}}\sum_{T\vdash
(m+n)}d_{T}\dim \gamma \frac{(n+m)!}{N^{n+m}}(1+O(1/N))g(\gamma ^{+},\gamma
^{-};T)  \notag \\
&=&\frac{m!^{2}n!^{2}}{(m+n)!}\tilde{t}_{\gamma }\sum_{T\vdash
(m+n)}d_{T}g(\gamma ^{+},\gamma ^{-};T).
\end{eqnarray}%
Then we use equation (4.10) in \cite{Kimura:2007wy}:
\begin{equation}
\frac{(m+n)!}{m!n!}d_{R}d_{S}=\sum_{T}g(R,S;T)d_{T}.
\end{equation}%
We find that
\begin{equation}
\tilde{t}_{(0,\gamma ^{+},\gamma ^{-})}=\frac{d_{\gamma ^{+}}d_{\gamma ^{-}}%
}{n!m!}=\frac{d_{\gamma ^{+}}d_{\gamma ^{-}}}{|\gamma ^{+}|!|\gamma ^{-}|!}.
\end{equation}%
\vspace{1pt}From this we have%
\begin{equation}
\tilde{t}_{\gamma }=\tilde{t}(\gamma ^{+},\gamma ^{-})=\frac{d_{\gamma
^{+}}d_{\gamma ^{-}}}{|\gamma ^{+}|!|\gamma ^{-}|!}.
\end{equation}%
For $\gamma =(l,\gamma ^{+},\gamma ^{-})$, $|\gamma ^{+}|=n-l,|\gamma
^{-}|=m-l$, therefore%
\begin{equation}
\tilde{t}_{\gamma }=\tilde{t}_{(l,\gamma ^{+},\gamma ^{-})}=\frac{d_{\gamma
^{+}}d_{\gamma ^{-}}}{(n-l)!(m-l)!}.
\end{equation}

\vspace{1pt}

\subsection{Results for $l=0$ Brauer states}

As a warm up, we first consider the case $l=0$. However, many results
obtained in this subsection can be useful for the analysis of more general $%
l\neq 0$ case. The trace product operator is normalized as $N^{-(n+m)/2}%
\text{tr}(\alpha Z^{\otimes n}\otimes Y^{\otimes m}))$, and the Brauer basis
is normalized as $N^{-(n+m)/2}O^{\gamma }(Z,Y)$. Therefore we should
consider
\begin{equation}
\frac{1}{N^{n+m}}\langle (\text{tr}(\alpha Z^{\otimes n}\otimes Y^{\otimes
m}))^{\dagger }O^{\gamma }(Z,Y)\rangle .
\end{equation}

The projection operator has an expression
\begin{equation}
P^{\gamma }=t_{\gamma }\frac{1}{N^{n+m}}\sum_{b}\chi ^{\gamma }(\Sigma
^{-1}(\Omega _{n+m}^{-1}(\Sigma (b))^{-1}))b.
\end{equation}%
In the above expression, $t_{\gamma },\chi ^{\gamma },\Omega _{n+m}$ all
depend on $N$. And we will analyze the $N$ dependence of each term. The $%
\Omega _{n+m}^{-1}$ has the expression
\begin{equation}
\Omega _{n+m}^{-1}=\frac{N^{n+m}}{(n+m)!^{2}}\sum_{T\vdash (n+m)}\frac{%
d_{T}^{2}}{\dim T}\sum_{\sigma \in S_{n+m}}\chi _{T}(\sigma )\sigma ,
\end{equation}%
where the $\dim T$, $d_{T}$ are%
\begin{equation}
\dim T=\prod_{(i,j)\in T}\frac{N-i+j}{h_{ij}},\;\quad
d_{T}=(n+m)!\prod_{(i,j)\in T}\frac{1}{h_{ij}}.
\end{equation}%
Therefore%
\begin{equation}
\frac{d_{T}}{\dim T}=\frac{(n+m)!}{\prod_{(i,j)\in T}(N-i+j)}=\frac{(n+m)!}{%
N^{n+m}}(1+O(1/N)).
\end{equation}%
Hence we have for $\Omega _{n+m}^{-1}$
\begin{equation}
\Omega _{n+m}^{-1}=\frac{1}{(n+m)!}\sum_{\sigma \in S_{n+m}}\sum_{T\vdash
(n+m)}d_{T}\chi _{T}(\sigma )(1+O(1/N))\sigma .
\end{equation}

If we assume that $N\geq (n+m)$, then the above summation is over all Young
tableaux with $T\vdash (n+m)$. From the representation theory of symmetric
groups \cite{James Kerber,Sagan}, we know that $\sum_{T\vdash
(n+m)}d_{T}\chi _{T}$ is just the character of the regular representation $%
\sum_{T\vdash (n+m)}d_{T}\chi _{T}(\sigma )=\chi _{\text{reg}}(\sigma
)=(n+m)!\delta (\sigma )$, where $\delta (\sigma )=1$ if and only if $\sigma
=1$ and $\delta (\sigma )=0$ otherwise. Therefore we have
\begin{equation}
\Omega _{n+m}^{-1}=(1+O(1/N)).
\end{equation}

According to Theorem 7.20 in \cite{Halverson:1996}, the character $\chi
^{\gamma }$ has expression
\begin{equation}
\chi ^{\gamma }(\zeta )=N^{h}\sum_{\substack{ \lambda \vdash m^{\prime }  \\ %
\pi \vdash n^{\prime }}}\left( \sum_{\delta \vdash (l-h)}g(\delta ,\gamma
^{+};\lambda )g(\delta ,\gamma ^{-};\pi )\right) \chi _{S_{n^{\prime
}}}^{\lambda }(\zeta ^{+})\chi _{S_{m^{\prime }}}^{\pi }(\zeta ^{-})
\label{character Brauer}
\end{equation}%
where $\gamma =(l,\gamma ^{+},\gamma ^{-})$, with $\gamma ^{+}\vdash
n-l,\gamma ^{-}\vdash m-l$. And $\zeta =(h,\zeta ^{+},\zeta ^{-})$, with $%
\zeta ^{+}\vdash n^{\prime }=n-h,\zeta ^{-}\vdash m^{\prime }=m-h$. The $%
\zeta $ is a staircase that represents a character class of $B_{N}(n,m)$.
The character classes of Brauer algebra are reviewed in Appendix \ref%
{appendix brauer algebra}. The coefficient $g(\delta ,\gamma ^{-};\pi )$ is
the Littlewood-Richardson coefficient. The above formula tells us that $\chi
^{\gamma }(\zeta )\neq 0$ only if $h\leq l$.

For $l=0$, the character $\chi ^{\gamma }$ can be further simplified. In
this case, $\chi ^{\gamma }(\zeta )\neq 0$ only if $h=0$, which means that $%
\zeta $ represents an element $b\in S_{n}\times S_{m},$ where $\mathbb{C}%
[S_{n}\times S_{m}]$ $\subset B_{N}(n,m)$, therefore we have

\begin{equation}
\chi ^{\gamma }(b)=%
\begin{cases}
0,~~b\notin S_{n}\times S_{m} \\
\chi _{S_{n}}^{\gamma ^{+}}(b_{1})\chi _{S_{m}}^{\gamma
^{-}}(b_{2}),~~b=b_{1}\otimes b_{2}\in S_{n}\times S_{m}%
\end{cases}%
.
\end{equation}

Putting these together, we have%
\begin{eqnarray}
&&\frac{1}{N^{n+m}}\langle \text{tr}(\alpha Z^{\otimes n}\otimes Y^{\otimes
m})^{\dagger }O^{\gamma }(Z,Y)\rangle  \notag \\
&=&\tilde{t}_{\gamma }n!m!\sum_{b\in S_{n}\times S_{m}}\chi ^{\gamma
}(\Sigma ^{-1}(\Sigma (b)^{-1}))\frac{1}{N^{n+m}}\text{tr}(\alpha
^{-1}\Sigma (b))(1+O(1/N))  \notag \\
&=&\tilde{t}_{\gamma }n!m!\sum_{b_{1}\in S_{n}}\sum_{b_{2}\in S_{m}}\chi
^{\gamma }(b_{1}^{-1}\otimes b_{2}^{-1})\frac{1}{N^{n+m}}\text{tr}(\alpha
^{-1}b_{1}\otimes b_{2}^{-1})(1+O(1/N))  \notag \\
&=&\tilde{t}_{\gamma }n!m!\sum_{b_{1}\in S_{n}}\chi ^{\gamma
^{+}}(b_{1}^{-1})\sum_{b_{2}\in S_{m}}\chi ^{\gamma
^{-}}(b_{2}^{-1})N^{C(\alpha ^{-1}b_{1}\otimes b_{2}^{-1})-(n+m)}(1+O(1/N)).
\notag \\
&&
\end{eqnarray}%
In the above formula, we have used the properties of $\Sigma $ that
\begin{equation}
\Sigma (\sigma \otimes \tau )=\sigma \otimes \tau ^{-1},\quad \text{ for }%
\sigma \otimes \tau \in S_{n}\times S_{m}.
\end{equation}%
This gives us:
\begin{equation}
\frac{1}{N^{n+m}}\langle \text{tr}(\alpha Z^{\otimes n}\otimes Y^{\otimes
m})^{\dagger }O^{\gamma }(Z,Y)\rangle =%
\begin{cases}
O(1/N),~~~~~~\alpha \notin S_{n}\times S_{m} \\
\tilde{t}_{\gamma }n!m!\chi ^{\gamma ^{+}}(\alpha _{1})\chi ^{\gamma
^{-}}(\alpha _{2})(1+O(1/N)),~~~\alpha =\alpha _{1}\otimes \alpha _{2}\in
S_{n}\times S_{m}%
\end{cases}%
\end{equation}%
where we have used the fact that for the symmetric group $S_{n}$ a
permutation $\sigma $ and its inverse $\sigma ^{-1}$ are in the same
conjugacy class, or in other words, $\chi ^{\gamma ^{+}}(\alpha _{1})=\chi
^{\gamma ^{+}}(\alpha _{1}^{-1})$.

We can write the Brauer state by%
\begin{equation}
\rvert \gamma \rangle \leftrightarrow O^{\gamma }(\frac{Z}{\sqrt{N}},\frac{Y%
}{\sqrt{N}})\rvert 0\rangle .
\end{equation}%
Using our previous notation, the trace product state is%
\begin{equation}
\rvert \lbrack \alpha ]\rangle \leftrightarrow \text{tr}(\alpha \left( \frac{%
Z}{\sqrt{N}}\right) ^{\otimes n}\otimes \left( \frac{Y}{\sqrt{N}}\right)
^{\otimes m}).
\end{equation}%
Hence we can equivalently write our above results as%
\begin{eqnarray}
\langle \lbrack \alpha ]\lvert (0,\gamma ^{+},\gamma ^{-})\rangle &=&%
\begin{cases}
O(1/N),~~~~\alpha \neq \alpha _{1}\otimes \alpha _{2} \\
d_{\gamma ^{+}}d_{\gamma ^{-}}\chi ^{\gamma ^{+}}(\alpha _{1})\chi ^{\gamma
^{-}}(\alpha _{2})(1+O(1/N)),~~~~\alpha =\alpha _{1}\otimes \alpha _{2}%
\end{cases}
\\
\lvert (0,\gamma ^{+},\gamma ^{-})\rangle &=&d_{\gamma ^{+}}d_{\gamma
^{-}}\rvert \gamma ^{+}\rangle \otimes \rvert \gamma ^{-}\rangle +O(1/N).
\end{eqnarray}

Our above derivation is performed for $l=0$, but the analysis of $t_{\gamma
} $ and $\Omega _{n+m}$ is useful for general case. To consider the
situation for $l\neq 0$, we only need to figure out the behavior of $\chi
^{\gamma }$, which is analyzed in the next subsection.

\subsection{Results for $l\neq 0$ Brauer states}

In the $l\neq 0$ case, we need to consider a normalization for the Brauer
state. We take $\rvert (l,\gamma ^{+},\gamma ^{-})\rangle \leftrightarrow
N^{l}O^{\gamma }(\frac{Z}{\sqrt{N}},\frac{Y}{\sqrt{N}})$. The meaning of the
factor $N^{l}$ will be clear when we later discuss the character of Brauer
algebra. Then we consider the overlap
\begin{equation}
\frac{1}{N^{n+m}}\langle \text{tr}(\alpha Z^{\otimes n}\otimes Y^{\otimes
m})^{\dagger }N^{l}O^{\gamma }(Z,Y)\rangle .
\end{equation}%
And we already have the expression (\ref{Brauer_trace_product_02})%
\begin{equation}
\langle \text{tr}(\alpha Z^{\otimes n}\otimes Y^{\otimes m})^{\dagger
}O^{\gamma }(Z,Y)\rangle =n!m!\text{tr}(\alpha ^{-1}\Sigma (P^{\gamma })),
\end{equation}%
where $P^{\gamma }=t_{\gamma }\frac{1}{N^{n+m}}\sum_{b}\chi ^{\gamma
}(\Sigma ^{-1}(\Omega _{n+m}^{-1}(\Sigma (b))^{-1}))b$. We have leading
order results for $\Omega _{n+m}^{-1}$ and $t_{\gamma }$,%
\begin{equation}
\Omega _{n+m}^{-1}=1+O(1/N),~~~~~~t_{\gamma }=\tilde{t}_{\gamma
}N^{n+m-2l}(1+O(1/N)).
\end{equation}%
Therefore we have%
\begin{eqnarray}
&&\frac{1}{N^{n+m}}\langle \text{tr}(\alpha Z^{\otimes n}\otimes Y^{\otimes
m})^{\dagger }N^{l}O^{\gamma }(Z,Y)\rangle  \notag \\
&=&n!m!\tilde{t}_{\gamma }\sum_{b}\frac{1}{N^{l}}\chi ^{\gamma }(\Sigma
^{-1}(\Sigma (b)^{-1}))\frac{\text{tr}(\alpha ^{-1}\Sigma (b))}{N^{n+m}}%
(1+O(1/N))  \notag \\
&=&n!m!\tilde{t}_{\gamma }\frac{1}{N^{l}}\chi ^{\gamma }(\Sigma ^{-1}(\alpha
^{-1}))(1+O(1/N))
\end{eqnarray}%
where in the third line we have used that $\text{tr}(\alpha ^{-1}\Sigma
(b))=N^{n+m}$ only for $\Sigma (b)=\alpha $. Write $b_{\alpha }=\Sigma
^{-1}(\alpha ^{-1})$, we need to find the order $N^{l}$ value of $\chi
^{\gamma }(b_{\alpha })$.

In Appendix \ref{appendix brauer algebra}, we give a rather detailed
introduction to the characters of representations of Brauer algebra. The
main idea is that although we don't have the notion of conjugacy class, we
can define a notion of character class which plays a similar role as
conjugacy class. We summarize some main results here. A character class is
represented by a staircase $\zeta =(h,\zeta ^{+},\zeta ^{-})$. And for an
element $d$, its character class is determined by its cycle type. The
characters of elements in the same character class have the same value up to
a factor of exponent of $N$. In the equation (\ref{character}) of the
character, we have a sum over $\delta \vdash (l-h)$, which tells us that the
character $\chi ^{\gamma }(\zeta )=0$ for $h>l$. Besides, the character is
of order $N^{z(d)}$, where $z(d)$ by definition is the number of zero cycles
in $d$. And we must have $z(d)\leq h(d)$. Therefore we have an inequality $%
z(d)\leq h\leq l$. So we have$~\frac{1}{N^{l}}\chi ^{\gamma }(d)\geq
O(N^{0}) $ if and only if $z(d)=h=l$. Furthermore, this condition $z(d)=h=l$
would mean that $d$ has exactly $l$ zero cycles, when each zero cycle only
contains one vertex in each side of the wall, and all other cycles that is
not a zero cycle must be completely contained in only one side of the wall.
Take the diagram (\ref{Example_diagram}) as an example, it does not satisfy
this condition since it has a cycle $6,4^{\prime },3^{\prime }$, which is
not a zero cycle and contains vertices from both side of the wall.

From our previous discussion, we know that $\frac{1}{N^{n+m}}\langle \text{tr%
}(\alpha Z^{\otimes n}\otimes Y^{\otimes m})^{\dagger }N^{l}O^{\gamma
}\rangle \geq O(N^{0})$ if and only if $b_{\alpha }=\Sigma ^{-1}(\alpha
^{-1})$ has $l$ zero cycles, with each zero cycle only containing one vertex
in each side of the wall, and each non zero cycle is contained completely in
one side of the wall. On the other hand, $\text{tr}(\alpha Z^{\otimes
n}\otimes Y^{\otimes m})$ is determined by the $S_{n}\times S_{m}$
equivalence class of $\alpha $ which we denoted by $[\alpha ]$. Therefore we
need to know how to start from the class $[\alpha ]$ to obtain the cycle
type of $b_{\alpha }$.

We know that the class $[\alpha ]$ is described by a sequence $\{w_{[\vec{k}%
]}\}_{[\vec{k}]\in \tilde{K}}$. And we will describe a general procedure to
construct the cycle type of $b_{\alpha }$ from the sequence $\{w_{[\vec{k}%
]}\}_{[\vec{k}]\in \tilde{K}}$. We need to translate the above condition on $%
b_{\alpha }$ into a set of conditions on the sequence $\{w_{[\vec{k}]}\}.$

We obtain the following lemma and give its proof.

\begin{lemma}
The condition that $b_{\alpha }$ has $l$ zero cycles, and each zero cycle
only contains one vertex in each side of the wall, while each non zero cycle
is contained completely in one side of the wall, can be equivalently
described by the condition on the sequence $\{w_{[\vec{k}]}\}_{[\vec{k}]\in
\tilde{K}}$ (or on $\{w_{[f]}\}_{[f]\in \tilde{V}}$) associated to the class
$[\alpha ]$. The condition is

\begin{enumerate}
\item[(1)] $w_{(1,1)} = l$

\item[(2)] $w_{[\vec{k}]} = 0$ for $[\vec{k}] \notin \tilde{K}_0\bigcup
\{(1,1)\} $
\end{enumerate}

\label{cycle_condition}

or equivalently

\begin{enumerate}
\item[(1')] $w_{[\uparrow \downarrow]} = l$

\item[(2')] $w_{[f]}=0$ for $[f]\notin \bigcup_{n}\tilde{V}_{n,0}\bigcup_{m}%
\tilde{V}_{0,m}\bigcup \tilde{V}_{1,1}$

where $\uparrow \downarrow $ is defined to be the map in $V_{1,1}$ that
takes $1$ to $\uparrow $ and $2$ to $\downarrow $.
\end{enumerate}
\end{lemma}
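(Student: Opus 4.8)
The plan is to reduce the statement to an analysis of a \emph{single cycle} of $\alpha$ and then to assemble the answer cycle by cycle. Recall that the $S_{n}\times S_{m}$ equivalence class $[\alpha]$ is completely encoded by the sequence $\{w_{[\vec{k}]}\}$, where $w_{[\vec{k}]}$ counts the cycles of $\alpha$ of type $[\vec{k}]$, and that the single-trace factor attached to such a cycle is $\text{tr}(Z^{k_{1}}Y^{k_{2}}\cdots)$. On the Brauer side, recall from Appendix \ref{appendix brauer algebra} that the condition imposed on $b_{\alpha}=\Sigma^{-1}(\alpha^{-1})$ in the text preceding the lemma is equivalent to the two equalities $z(b_{\alpha})=h(b_{\alpha})=l$, where $h(b_{\alpha})$ is the number of cycles of the closed diagram of $b_{\alpha}$ that cross the wall and $z(b_{\alpha})\le h(b_{\alpha})$ is the number of those that are \emph{zero cycles} (exactly one vertex on each side), while cycles not crossing the wall automatically lie on one side. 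Since the supports of distinct cycles of $\alpha$ are disjoint subsets of $\{1,\dots,n+m\}$, and the folding/unfolding that defines $\Sigma$ is local, $b_{\alpha}$ is block-diagonal with respect to this decomposition; hence $h(b_{\alpha})$, $z(b_{\alpha})$, and the contributions to the partitions $\zeta^{\pm}$ are all sums of the contributions of the individual cycles. It therefore suffices to compute, for a single cycle $c$ of $\alpha$ of a given type, the cycle structure of $b_{c}:=\Sigma^{-1}(c^{-1})$.

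The key step is this single-cycle dictionary. If $c$ has pure type $(k,0)$ (resp.\ $(0,k)$), then $c$ involves only $Z$-positions (resp.\ only $Y$-positions), $\Sigma$ acts on it at most by an inversion inside $S_{n}\times S_{m}$, and $b_{c}$ is a single $k$-cycle lying entirely on one side of the wall; it contributes $k$ to $\zeta^{+}$ (resp.\ $\zeta^{-}$) and $0$ to both $h$ and $z$. If $c$ has mixed type $(k_{1},\dots,k_{2p})$ with $p\ge 1$, then the attached single trace $\text{tr}(W_{c_{1}}\cdots W_{c_{r}})$ is one cyclic chain of index contractions; at the $p$ junctions where a $Z$-block meets a $Y$-block and the $p$ where a $Y$-block meets a $Z$-block, the transpose identity (\ref{Sigma map}) turns the shared index into a bottom, resp.\ top, contraction, so $b_{c}$ is a walled diagram with $p$ top and $p$ bottom contractions. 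Closing that diagram turns the one index loop of the trace into exactly one wall-crossing loop of the closed diagram, whose left-vertices are precisely the $\sum_{i}k_{2i-1}$ $Z$-positions of $c$ and whose right-vertices are the $\sum_{i}k_{2i}$ $Y$-positions; this loop therefore contributes $1$ to $h(b_{\alpha})$, and it is a zero cycle, contributing $1$ to $z(b_{\alpha})$, if and only if $\sum_{i}k_{2i-1}=\sum_{i}k_{2i}=1$, i.e.\ if and only if the type of $c$ is exactly $(1,1)$. One can confirm the mechanism directly on $\text{tr}(ZY)$, $\text{tr}(ZYZY)$, $\text{tr}(Z^{2}Y^{2})$ and $\text{tr}(ZY^{2})$, where writing down the top and bottom contractions of $b_{\alpha}$ and then closing up exhibits in each case a single wall-crossing loop through all the $Z$- and $Y$-vertices of the trace, minimal only in the first case.

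Assembling the contributions over all cycles of $\alpha$ gives $h(b_{\alpha})=\sum_{[\vec{k}]\ \text{mixed}}w_{[\vec{k}]}$ and $z(b_{\alpha})=w_{(1,1)}$. Consequently $z(b_{\alpha})=h(b_{\alpha})=l$ holds if and only if $w_{(1,1)}=l$ and $w_{[\vec{k}]}=0$ for every mixed $[\vec{k}]\neq(1,1)$; since ``mixed and $\neq(1,1)$'' means precisely $[\vec{k}]\notin\tilde{K}_{0}\cup\{(1,1)\}$, this is exactly conditions (1)--(2). The equivalence with (1$'$)--(2$'$) is then immediate from the bijection $\varphi:\tilde{K}^{(n,m)}\xrightarrow{\sim}\tilde{V}_{n,m}$ of Section 2.3, which carries $(1,1)$ to $[\uparrow\downarrow]$ and the pure types $\tilde{K}_{0}$ to $\bigcup_{n}\tilde{V}_{n,0}\cup\bigcup_{m}\tilde{V}_{0,m}$, and under which $w_{[\vec{k}]}$ becomes $w_{[f]}$.

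The main obstacle is the mixed case of the single-cycle dictionary: one must unwind the definition of $\Sigma$ from \cite{Kimura:2007wy} (its eq.\ (3.19), together with the trace identity (\ref{Sigma map})) to see precisely how $\Sigma^{-1}$ converts a cyclic word in $Z$ and $Y$ into a pattern of horizontal (contraction) and vertical (through) strands, and then verify that closing that pattern produces a single wall-crossing loop whose left/right vertex split is $\bigl(\sum_{i}k_{2i-1},\sum_{i}k_{2i}\bigr)$. Once this ``one index loop of the trace $\leftrightarrow$ one loop of the closed walled Brauer diagram'' correspondence is in place, together with the observation (already used in deriving (\ref{Brauer_trace_product_02}) and the formula (\ref{character Brauer})) that a wall-crossing loop is a zero cycle exactly when the trace contains a single $Z$ and a single $Y$, the remainder is bookkeeping over the cycles of $\alpha$.
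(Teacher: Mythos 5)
Your proof is correct and follows essentially the same route as the paper's: decompose $\alpha$ into cycles, observe that inversion and $\Sigma^{-1}$ preserve each cycle as a single loop of $b_{\alpha}$ (you phrase this via the index contractions of the single trace, the paper via flipping the right half of the diagram while keeping all edges), and note that such a loop crosses the wall iff the cycle is mixed, and is a one-vertex-per-side zero cycle iff its type is $(1,1)$. The only caveat is that your $h(b_{\alpha})$ and $z(b_{\alpha})$ are not Halverson's $h$ and $z$ from the appendix (e.g.\ a type-$(2,2)$ cycle is a zero cycle in the standard sense but not in your count); since you use them only as internal bookkeeping for the lemma's verbatim condition, the argument is unaffected.
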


\begin{proof}
Just as $d \in \mathcal{D}_{n,m}$ can be drawn as a diagram, elements of $%
S_{n+m}$ can also be drawn as diagrams. For example
\begin{equation}
\begin{tikzpicture} \node at (0,2) {$\alpha = $}; \node [above] at (1,3)
{$1$}; \node [above] at (2,3) {$2$}; \node [above] at (3,3) {$3$}; \node
[above] at (4,3) {$4$}; \node [above] at (5,3) {$5$}; \node [above] at (6,3)
{$6$}; \node [above] at (7,3) {$7$}; \node [above] at (8,3) {$8$}; \node
[above] at (9,3) {$9$}; \node [above] at (10,3) {$10$}; \node [above] at
(11,3) {$11$}; \node [above] at (12,3) {$12$}; \draw[fill] (1,1) circle
[radius=0.05]; \draw[fill] (2,1) circle [radius=0.05]; \draw[fill] (3,1)
circle [radius=0.05]; \draw[fill] (4,1) circle [radius=0.05]; \draw[fill]
(5,1) circle [radius=0.05]; \draw[fill] (6,1) circle [radius=0.05];
\draw[fill] (1,3) circle [radius=0.05]; \draw[fill] (2,3) circle
[radius=0.05]; \draw[fill] (3,3) circle [radius=0.05]; \draw[fill] (4,3)
circle [radius=0.05]; \draw[fill] (5,3) circle [radius=0.05]; \draw[fill]
(6,3) circle [radius=0.05]; \draw[fill] (7,1) circle [radius=0.05];
\draw[fill] (8,1) circle [radius=0.05]; \draw[fill] (9,1) circle
[radius=0.05]; \draw[fill] (10,1) circle [radius=0.05]; \draw[fill] (11,1)
circle [radius=0.05]; \draw[fill] (12,1) circle [radius=0.05]; \draw[fill]
(7,3) circle [radius=0.05]; \draw[fill] (8,3) circle [radius=0.05];
\draw[fill] (9,3) circle [radius=0.05]; \draw[fill] (10,3) circle
[radius=0.05]; \draw[fill] (11,3) circle [radius=0.05]; \draw[fill] (12,3)
circle [radius=0.05]; \draw[very thick] (6.5,1) to (6.5,3); \draw (2,3) to
(4,1); \draw (3,3) to (1,1); \draw (4,3) to (5,1); \draw (5,3) to (2,1);
\draw (1,3) to (7,1); \draw (6,3) to (10,1); \draw (7,3) to (8,1); \draw
(8,3) to (3,1); \draw (9,3) to (6,1); \draw (10,3) to (9,1); \draw (11,3) to
(12,1); \draw (12,3) to (11,1); \end{tikzpicture}  \label{Example_diagram_01}
\end{equation}

We follow the procedure
\begin{enumerate}
\item[(1)] Start with vertex $t^L_1(\alpha)$ if it exists.

\item[(2)] Follow the edge connected to this vertex. Upon reaching the other
side of the edge, jump to the vertex directly above it if we are in $%
b(\alpha)$ or to the vertex below it if we are in $t(d)$, and continue
following the edge connected to that vertex.

\item[(3)] Following the above procedure, we will end by returning to the
starting vertex and complete a cycle in $\alpha$. We denote such a cycle $%
c_1 $.

\item[(4)] We start from another vertex that has not been visited and repeat
the above process. Each time we finish the above process we will get a cycle
$c_i$ in $\alpha$. And we end the process if we visited all vertices of $%
\alpha $.
\end{enumerate}

In this way, we decompose $\alpha$ into disjoint cycles. For example in the
above diagram (\ref{Example_diagram_01}), we have 4 disjoint cycles. The first is
on vertices $1,7,8,3$, the second on vertices $2,4,5$, the third on $6,10,9$
and the fourth on $11,12$. Note that this decomposition is just the cycle
decomposition of $\alpha$. In the example, the permutation is just $\alpha =
(3,1,7,8)(2,4,5)(6,10,9)(11,12)$.

In each cycle, there are vertex on the left hand side of the wall and vertex
on the right hand side of the wall, depend on $i \leq n$ or $i > n$. We
label a cycle $(i_1,i_2,i_3,\dots)$ by a sequence $(k_1,k_2,k_3,k_4,\dots)$
in such a way that there are first $k_1$ vertex $i_1,i_2,\dots i_{k_1}$ on
the left hand side of the wall and followed by $k_2$ vertex on the right
hand side of the wall $i_{k_1 + 1}, \dots ,i_{k_1+k_2}$ and so on. For
example, cycle $(3,1,7,8)$ is labeled by $(k_1,k_2) = (2,2)$ and $(2,4,5)$
is labeled by $k_1 = 3$. In this way, each cycle is labeled by a $[\vec{k}]
\in \tilde{K}$. Also, we can label a cycle $(i_1,i_2,i_3,\dots)$ by a $[f]
\in \tilde{V}$, where the corresponding $[f]$ is defined in the following
way. We let $f(j) = \uparrow$ if $i_j$ is on the left hand side of the wall
and $f(j) = \downarrow$ if $i_j$ is on the right hand side of the wall. If a
permutation $\alpha$ has the properties that it has $w_{[\vec{k}]}$ cycles
labeled by $[\vec{k}] \in \tilde{K}$ or has $w_{[f]}$ cycles labeled by $[f]$%
, then the operator satisfies
\begin{equation}
\text{tr}(\alpha Z^{\otimes n}\bigotimes Y^{\otimes m}) = \prod_{[\vec{k}%
]\in \tilde{K}}\text{tr}(Z^{k_1}Y^{k_2}Z^{k_3}Y^{k_4} \cdots)^{w_{[\vec{k}%
]}} = \prod_{[f]\in \tilde{V}} O_{[f]}^{w_{[f]}}.
\end{equation}
In the above example
\begin{equation}
\text{tr}(\alpha Z^{\otimes n}\bigotimes Y^{\otimes m}) = \text{tr}(Z^2Y^2)%
\text{tr}(Z^3)\text{tr}(ZY^2)\text{tr}(Y^2).
\end{equation}

Recall that we defined $b_{\alpha} = \Sigma^{-1}(\alpha^{-1})$. We first
describe the $\alpha^{-1}$. In diagram representation, the diagram of $%
\alpha^{-1}$ is obtained from the diagram by interchange $t(\alpha)$ and $%
b(\alpha)$ and keep the edge. In our example (\ref{Example_diagram_01}), the
inverse is
\begin{equation}
\begin{tikzpicture} \node at (0,2) {$\alpha^{-1} = $}; \node [above] at
(1,3) {$1$}; \node [above] at (2,3) {$2$}; \node [above] at (3,3) {$3$};
\node [above] at (4,3) {$4$}; \node [above] at (5,3) {$5$}; \node [above] at
(6,3) {$6$}; \node [above] at (7,3) {$7$}; \node [above] at (8,3) {$8$};
\node [above] at (9,3) {$9$}; \node [above] at (10,3) {$10$}; \node [above]
at (11,3) {$11$}; \node [above] at (12,3) {$12$}; \draw[fill] (1,1) circle
[radius=0.05]; \draw[fill] (2,1) circle [radius=0.05]; \draw[fill] (3,1)
circle [radius=0.05]; \draw[fill] (4,1) circle [radius=0.05]; \draw[fill]
(5,1) circle [radius=0.05]; \draw[fill] (6,1) circle [radius=0.05];
\draw[fill] (1,3) circle [radius=0.05]; \draw[fill] (2,3) circle
[radius=0.05]; \draw[fill] (3,3) circle [radius=0.05]; \draw[fill] (4,3)
circle [radius=0.05]; \draw[fill] (5,3) circle [radius=0.05]; \draw[fill]
(6,3) circle [radius=0.05]; \draw[fill] (7,1) circle [radius=0.05];
\draw[fill] (8,1) circle [radius=0.05]; \draw[fill] (9,1) circle
[radius=0.05]; \draw[fill] (10,1) circle [radius=0.05]; \draw[fill] (11,1)
circle [radius=0.05]; \draw[fill] (12,1) circle [radius=0.05]; \draw[fill]
(7,3) circle [radius=0.05]; \draw[fill] (8,3) circle [radius=0.05];
\draw[fill] (9,3) circle [radius=0.05]; \draw[fill] (10,3) circle
[radius=0.05]; \draw[fill] (11,3) circle [radius=0.05]; \draw[fill] (12,3)
circle [radius=0.05]; \draw[very thick] (6.5,1) to (6.5,3); \draw (2,1) to
(4,3); \draw (3,1) to (1,3); \draw (4,1) to (5,3); \draw (5,1) to (2,3);
\draw (1,1) to (7,3); \draw (6,1) to (10,3); \draw (7,1) to (8,3); \draw
(8,1) to (3,3); \draw (9,1) to (6,3); \draw (10,1) to (9,3); \draw (11,1) to
(12,3); \draw (12,1) to (11,3); \end{tikzpicture}
\end{equation}
After the inverse, a cycle of type $[\vec{k}] = (k_1,k_2,\dots,k_{2r -
1},k_{2r})$ becomes a cycle of type $[\vec{k}^{\prime}]= (k_1,k_{2r},k_{2r -
1},k_{2r - 2},\dots,k_3,k_2)$. We thus define a map $\phi: K \to K$ by
\begin{equation}
\phi(k_1,k_2,\dots,k_{2r - 1},k_{2r}) = (k_1,k_{2r},k_{2r - 1},k_{2r -
2},\dots,k_3,k_2).
\end{equation}
That is, a cycle of $\alpha $ labeled by $[\vec{k}]$ becomes a cycle of $%
\alpha ^{-1}$ labeled by $[\phi (\vec{k})]$. Or using the notation of $[f]$,
we define a map $ \phi ^{\prime }: \bigcup_{n,m \geq 0}V_{n,m} \to \bigcup_{n,m \geq 0}V_{n,m}$
by%
\begin{equation}
\phi ^{\prime }(f)(i)=f(m+n+1-i).
\end{equation}%
Then a cycle of $\alpha $ labeled by $[f]$ becomes a cycle of $\alpha ^{-1}$
labeled by $[\phi^{\prime}(f)]$. Note that $\phi(k_1,0) = (k_1,0)$, $\phi(0,k_2) =
(0,k_2)$, $\phi(k_1,k_2) = (k_1,k_2)$. And $%
\phi^{\prime}(\uparrow^n) = \uparrow^n$, $\phi^{\prime}(\downarrow^m) =
\downarrow^m$, $[\phi^{\prime}(\uparrow\downarrow)] = [\uparrow\downarrow]$.

Now we describe $\Sigma^{-1}$. By definition, $\Sigma^{-1}$ interchange $%
t^R(\alpha)$ and $b^{R}(\alpha)$ and keep the edge. In our example
\begin{equation}
\begin{tikzpicture} \node at (0,2) {$\Sigma^{-1}(\alpha) = $}; \node [above]
at (1,3) {$1$}; \node [above] at (2,3) {$2$}; \node [above] at (3,3) {$3$};
\node [above] at (4,3) {$4$}; \node [above] at (5,3) {$5$}; \node [above] at
(6,3) {$6$}; \node [above] at (7,3) {$7$}; \node [above] at (8,3) {$8$};
\node [above] at (9,3) {$9$}; \node [above] at (10,3) {$10$}; \node [above]
at (11,3) {$11$}; \node [above] at (12,3) {$12$}; \draw[fill] (1,1) circle
[radius=0.05]; \draw[fill] (2,1) circle [radius=0.05]; \draw[fill] (3,1)
circle [radius=0.05]; \draw[fill] (4,1) circle [radius=0.05]; \draw[fill]
(5,1) circle [radius=0.05]; \draw[fill] (6,1) circle [radius=0.05];
\draw[fill] (1,3) circle [radius=0.05]; \draw[fill] (2,3) circle
[radius=0.05]; \draw[fill] (3,3) circle [radius=0.05]; \draw[fill] (4,3)
circle [radius=0.05]; \draw[fill] (5,3) circle [radius=0.05]; \draw[fill]
(6,3) circle [radius=0.05]; \draw[fill] (7,1) circle [radius=0.05];
\draw[fill] (8,1) circle [radius=0.05]; \draw[fill] (9,1) circle
[radius=0.05]; \draw[fill] (10,1) circle [radius=0.05]; \draw[fill] (11,1)
circle [radius=0.05]; \draw[fill] (12,1) circle [radius=0.05]; \draw[fill]
(7,3) circle [radius=0.05]; \draw[fill] (8,3) circle [radius=0.05];
\draw[fill] (9,3) circle [radius=0.05]; \draw[fill] (10,3) circle
[radius=0.05]; \draw[fill] (11,3) circle [radius=0.05]; \draw[fill] (12,3)
circle [radius=0.05]; \draw[very thick] (6.5,1) to (6.5,3); \draw (2,3) to
(4,1); \draw (3,3) to (1,1); \draw (4,3) to (5,1); \draw (5,3) to (2,1);
\draw (8,3) to (7,1); \draw (9,3) to (10,1); \draw (11,3) to (12,1); \draw
(12,3) to (11,1); \draw (1,3) to [out=-20,in=200] (7,3); \draw (6,3) to
[out=-20,in=200] (10,3); \draw (3,1) to [out=20,in=160] (8,1); \draw (6,1)
to [out=20,in=160] (9,1); \end{tikzpicture}
\end{equation}
Since $\Sigma^{-1}$ keeps every edge, each cycle $c$ in $\alpha$ is still a
cycle $c$ in $\Sigma^{-1}(\alpha)$. And for a cycle $c$ labeled by $%
(k_1,k_2,\dots)$, the corresponding $type(c)$ can be calculated as
\begin{equation}
type(c) = \sum_{i \geq 1}k_{2i-1} - \sum_{i \geq 1}k_{2i}.
\end{equation}
For the same reason, the type for a cycle $c$ labeled by $[f] \in \tilde{V}_{n,m}$ is just $n
- m$.

Therefore if a cycle in $\Sigma(\alpha)$ is a zero cycle, then $\sum_{i \geq
1}k_{2i-1} - \sum_{i \geq 1}k_{2i} = 0$ (or $n = m $). And if a cycle $c$ in
$\Sigma(\alpha)$ only contains one vertex in each side of the wall, then this
cycle is labeled by $[\vec{k}] = [(1,1)]$ (or [$\uparrow\downarrow$]).

Therefore the condition that $b_{\alpha }=\Sigma ^{-1}(\alpha ^{-1})$ has $l$
zero cycles, and each zero cycle only contains one vertex in each side of the
wall, while each non zero cycle is contained completely in one side of the wall,
is equivalent to
\begin{equation}
w_{\phi (1,1)}=w_{(1,1)}=l,\quad w_{[\vec{k}]}=0\text{ for }[\vec{k}]\notin
\tilde{K}_{0}\bigcup \{(1,1)\},
\end{equation}
or equivalently as
\begin{equation}
w_{[\phi^{\prime}(\uparrow\downarrow)]} = w_{[\uparrow\downarrow]} = l,\quad
w_{[f]} = 0 \text{ for } [f] \notin \bigcup_n\tilde{V}_{n,0}\bigcup_m \tilde{%
V}_{0,m}\bigcup \tilde{V}_{1,1}.
\end{equation}

This is just the condition in the statement of the lemma.

\end{proof}

Therefore we know that a trace product basis state $\prod_{[\vec{k}]\in
K}a_{[\vec{k}]}^{\dagger w_{[\vec{k}]}}\rvert 0\rangle $ has nonzero overlap
with Brauer state $\rvert (l,\gamma ^{+},\gamma ^{-})\rangle $ if and only
if the trace product basis state is of the form
\begin{equation}
\prod_{k\geq 1}a_{k}^{\dagger w_{k}}\prod_{\bar{k}\geq 1}a_{\bar{k}%
}^{\dagger w_{\bar{k}}}a_{(1,1)}^{\dagger l}\rvert 0\rangle ,
\end{equation}%
with the constraints%
\begin{equation}
\quad \sum_{k}kw_{k}=n-l,\quad \sum_{\bar{k}}\bar{k}w_{\bar{k}}=m-l.
\end{equation}

The above two constraints defined two partitions of $n-l,m-l$ respectively,
and we let $[\alpha _{1}],[\alpha _{2}]$ to represent the two partitions
respectively, then we have
\begin{equation}
\langle 0\lvert \prod_{k\geq 1}a_{k}^{w_{k}}\prod_{\bar{k}\geq 1}a_{\bar{k}%
}^{w_{\bar{k}}}a_{(1,1)}^{l}\rvert (l,\gamma ^{+},\gamma ^{-})\rangle =n!m!%
\tilde{t}_{\gamma }\chi ^{\gamma ^{+}}([\alpha _{1}])\chi ^{\gamma
^{-}}([\alpha _{2}])(1+O(1/N)).
\end{equation}%
Using this results, we can write the Brauer state as

\begin{equation}
\rvert (l,\gamma ^{+},\gamma ^{-})\rangle =\frac{n!m!d_{\gamma
^{+}}d_{\gamma ^{-}}}{l!(n-l)!(m-l)!}\rvert \gamma ^{+}\rangle \otimes
\rvert \gamma ^{-}\rangle \otimes a_{(1,1)}^{\dagger l}\rvert 0\rangle
+O(1/N),  \label{Brauer expression}
\end{equation}%
where $\rvert \gamma ^{+}\rangle $ is the Young tableau state associated
with $a_{k}^{\dagger }$, and $\rvert \gamma ^{-}\rangle $ is the Young
tableau state associated with $a_{\bar{k}}^{\dagger }$. Our analysis in the
large $N$ limit shows that to leading order, the Brauer operators can be
expressed as the product of Young tableau operators with an extra operator
mixing two matrices.

We can equivalently write%
\begin{equation}
\langle \lbrack \alpha ]\lvert (l,\gamma ^{+},\gamma ^{-})\rangle =%
\begin{cases}
O(1/N),~~~~\alpha \neq \alpha _{1}\otimes \alpha _{2} \\
\frac{n!m!d_{\gamma ^{+}}d_{\gamma ^{-}}}{l!(n-l)!(m-l)!}\chi ^{\gamma
^{+}}(\alpha _{1})\chi ^{\gamma ^{-}}(\alpha _{2})(1+O(1/N)),~~~~~\alpha
=\alpha _{1}\otimes \alpha _{2}%
\end{cases}%
.
\end{equation}

From equation (\ref{Brauer expression}), we see that the Brauer states $%
\rvert (l,\gamma ^{+},\gamma ^{-})\rangle $ lie in the kernel of dilatation
operator (\ref{D_2_01}) in the infinite $N$ limit. In this case $%
a_{(n,0)}^{\dagger },a_{(0,m)}^{\dagger }$ and $a_{(1,1)}^{\dagger }$ all
commute with the dilatation operator. For non-planar correction of the
action of dilatation operator on more general Brauer operators $%
O_{A;i,j}^{\gamma }$, see \cite{Kimura:2010tx,Kimura:2013fqa}.

\vspace{1pt}

\subsection{Coherent states and their overlaps with Brauer states}

In the rest of this Section, we work in the infinite $N$ limit. We now
consider the overlap between coherent states and Brauer states. We first
consider general coherent states and then quarter BPS coherent states.

First, the Brauer state $\rvert \gamma \rangle $ takes the form
\begin{equation}
\rvert (l,\gamma ^{+},\gamma ^{-})\rangle =\frac{n!m!d_{\gamma
^{+}}d_{\gamma ^{-}}}{l!(n-l)!(m-l)!}\rvert \gamma ^{+}\rangle \otimes
\rvert \gamma ^{-}\rangle \otimes a_{(1,1)}^{\dagger l}\rvert 0\rangle .
\label{Brauer_state}
\end{equation}%
On the other hand, we consider a general coherent state which takes the form
\begin{equation}
\rvert Coh\rangle =\exp (\sum_{[\vec{k}]\in \tilde{K}}c_{[\vec{k}]}a_{[\vec{k%
}]}^{\dagger })\rvert 0\rangle .
\end{equation}%
As in Section 2, we denote
\begin{equation}
c_{k}=c_{(k,0)},\quad c_{\bar{k}}=c_{(0,k)}.
\end{equation}%
Then the coherent state can be factorized into different modes and we mainly
separate the following three parts
\begin{equation}
\exp (\sum_{k}c_{k}a_{k}^{\dagger })=(\sum_{\vec{w}}\prod_{k}\frac{%
(c_{k}a_{k}^{\dagger })^{w_{k}}}{w_{k}!}),
\end{equation}%
\begin{equation}
\exp (\sum_{\bar{k}}c_{\bar{k}}a_{\bar{k}}^{\dagger })=(\sum_{\vec{w}}\prod_{%
\bar{k}}\frac{(c_{\bar{k}}a_{\bar{k}}^{\dagger })^{w_{\bar{k}}}}{w_{\bar{k}}!%
}),
\end{equation}%
and $\exp (c_{(1,1)}a_{(1,1)}^{\dagger })=\sum_{l}\frac{%
(c_{(1,1)}a_{(1,1)}^{\dagger })^{l}}{l!}$.

In the following, we take a specific class of coherent states by letting
\begin{equation}
c_{k}=\frac{\Lambda _{k}}{k}=\frac{x_{1}^{k}+x_{2}^{k}+\dots }{k},\quad c_{%
\bar{k}}=\frac{\Lambda _{\bar{k}}}{\bar{k}}=\frac{y_{1}^{\bar{k}}+y_{2}^{%
\bar{k}}+\dots }{\bar{k}}.
\end{equation}%
Then the overlap between the coherent state and Brauer state can be
calculated as follows:
\begin{eqnarray}
\langle (l,\gamma ^{+},\gamma ^{-})\lvert Coh\rangle  &=&\frac{n!m!d_{\gamma
^{+}}d_{\gamma ^{-}}}{l!(n-l)!(m-l)!}(\langle \gamma ^{+}\lvert \otimes
\langle \gamma ^{-}\lvert \otimes \langle 0\lvert a_{(1,1)}^{l})\rvert
Coh\rangle   \notag \\
&=&\frac{n!m!d_{\gamma ^{+}}d_{\gamma ^{-}}}{l!(n-l)!(m-l)!}\langle \gamma
^{+}\lvert \sum_{\vec{w}}\prod_{k}\frac{1}{w_{k}!}(\frac{\Lambda
_{k}a_{k}^{\dagger }}{k})^{w_{k}}\rvert 0\rangle \langle \gamma ^{-}\lvert
\sum_{\vec{u}}\prod_{\bar{k}}\frac{1}{u_{\bar{k}}!}(\frac{\Lambda _{\bar{k}%
}a_{\bar{k}}^{\dagger }}{\bar{k}})^{u_{\bar{k}}}\rvert 0\rangle   \notag \\
&~&\quad \quad \times \langle 0\lvert a_{(1,1)}^{l}\sum_{j}\frac{%
(c_{(1,1)}a_{(1,1)}^{\dagger })^{j}}{j!}\rvert 0\rangle .
\end{eqnarray}%
In the above derivation, we only need to consider the three parts since we
have
\begin{equation}
\langle (l,\gamma ^{+},\gamma ^{-})\lvert a_{[\vec{k}]}^{\dagger l_{[\vec{k}%
]}}\rvert 0\rangle =0,\text{ for }[\vec{k}]\notin K_{0}\bigcup K_{2}.
\end{equation}%
This means that the value of the overlap between the coherent state and an $%
l=0$ Brauer state only depends on the value of $c_{[\vec{k}]}$ for $[\vec{k}%
]\in K_{0}$.

To further simplify the above formula, we can use results from our previous
paper \cite{Lin:2017dnz} that
\begin{equation}
\langle \gamma ^{+}\lvert \sum_{\vec{w}}\prod_{k}\frac{1}{w_{k}!}(\frac{%
\Lambda _{k}}{k})^{w_{k}}\rvert t_{k}^{w_{k}}\rangle =s_{\gamma
^{+}}(x_{1},x_{2},\dots ),
\end{equation}%
where $\rvert t_{k}^{w_{k}}\rangle =a_{k}^{\dagger w_{k}}\rvert 0\rangle ~$%
and $s_{\gamma ^{+}}$ is the Schur function corresponding to the Young
tableau $\gamma ^{+}$. For more about Schur functions, see \cite{Sagan}.
Similarly, we denote $\rvert s_{\bar{k}}^{u_{\bar{k}}}\rangle =a_{\bar{k}%
}^{\dagger u_{\bar{k}}}\rvert 0\rangle $ and we have
\begin{equation}
\langle \gamma ^{-}\lvert \sum_{\vec{u}}\prod_{\bar{k}}\frac{1}{u_{\bar{k}}!}%
(\frac{\Lambda _{\bar{k}}}{\bar{k}})^{u_{\bar{k}}}\rvert s_{\bar{k}}^{u_{%
\bar{k}}}\rangle =s_{\gamma ^{-}}(y_{1},y_{2},\dots ).
\end{equation}%
Therefore we have
\begin{equation}
\langle (l,\gamma ^{+},\gamma ^{-})\lvert Coh\rangle =\frac{n!m!d_{\gamma
^{+}}d_{\gamma ^{-}}}{l!(n-l)!(m-l)!}c_{(1,1)}^{l}s_{\gamma
^{+}}(x_{1},\dots )s_{\gamma ^{-}}(y_{1},\dots ).
\end{equation}%
The above results may be regarded as a generalization of our previous
results for the overlap between Young tableau states and half BPS coherent
states.

\vspace{1pt}\vspace{1pt}As a special case, we consider $\gamma =(0,\gamma
^{+},\emptyset )$. In this case, we are considering the Brauer algebra $%
B_{N}(n,0)=\mathbb{C}[S_{n}]$. The character of Brauer algebra just becomes
the character of symmetric group. The Brauer state then becomes:
\begin{equation}
\rvert (0,\gamma ^{+},\emptyset )\rangle =d_{\gamma ^{+}}\rvert \gamma
^{+}\rangle ,
\end{equation}%
which is just the Young tableau state. The overlap between the coherent
state and the Brauer state becomes
\begin{equation}
\langle (0,\gamma ^{+},\emptyset )\lvert Coh\rangle =d_{\gamma
^{+}}s_{\gamma ^{+}}(x_{1},x_{2},\dots ).
\end{equation}%
This is just our previous result for the half BPS case \cite{Lin:2017dnz}.

We consider another special case with $l\neq 0$, $\gamma ^{+}=\emptyset
,\gamma ^{-}=\emptyset .~$ In this special case, we first give a different
derivation of the form of the Brauer state $\rvert (l,\emptyset ,\emptyset
)\rangle $. We see that in this case, $l=n=m$. We have expression for the
projector $P^{\gamma }$(see (28) in \cite{Kimura:2010tx}):
\begin{equation}
P^{\gamma }=\frac{1}{N^{l}}\Omega _{l}^{-1}C_{(l)},
\end{equation}%
where $C_{(l)}$ is defined to be
\begin{equation}
C_{(l)}=\sum_{\sigma \in S_{l}}C_{\sigma (1)\bar{1}}C_{\sigma (2)\bar{2}%
}\cdots C_{\sigma (l)\bar{l}}.
\end{equation}%
And we have
\begin{equation}
\text{tr}(C_{\sigma (1)\bar{1}}C_{\sigma (2)\bar{2}}\cdots C_{\sigma (l)\bar{%
l}}Z^{\otimes n}\otimes Y^{\otimes m})=\text{tr}(ZY)^{l}.
\end{equation}%
From our previous results, $\Omega _{l}^{-1}=1+O(1/N)$, we have that $%
N^{l}O^{\gamma }(Z,Y)=l!$tr$(ZY)^{l}+O(1/N)$.\ Then in the infinite $N$
limit,
\begin{equation}
\rvert (l,\emptyset ,\emptyset )\rangle =l!a_{(1,1)}^{\dagger l}\rvert
0\rangle .  \label{Special_case_2}
\end{equation}%
This coincides with our equation (\ref{Brauer_state}) after taking $\gamma
^{+}=\gamma ^{-}=\emptyset $. And the overlap with coherent state is
\begin{equation}
\langle 0\lvert a_{(1,1)}^{l}\rvert Coh\rangle =l!c_{(1,1)}^{l}.
\end{equation}

Another special case is provided for $l\neq 0$, $\gamma ^{-}=\emptyset .~$In
this case, $m-l=0$, we have
\begin{equation}
\rvert (l,\gamma ^{+},\emptyset )\rangle =\frac{n!d_{\gamma ^{+}}}{(n-l)!}%
\rvert \gamma ^{+}\rangle \otimes a_{(1,1)}^{\dagger l}\rvert 0\rangle .
\end{equation}%
And the overlap with coherent state is
\begin{equation}
\langle (l,\gamma ^{+},\emptyset )\lvert Coh\rangle =\frac{n!d_{\gamma ^{+}}%
}{(n-l)!}c_{(1,1)}^{l}s_{\gamma ^{+}}(x_{1},\dots ).
\end{equation}

Now we move on to consider the quarter BPS coherent states $\rvert Coh^{%
\frac{1}{4}}\rangle $, which can be written as
\begin{equation}
\rvert Coh^{\frac{1}{4}}\rangle =\exp (\sum_{n,m\geq
0}c_{(n,m)}a_{(n,m)}^{\dagger })\rvert 0\rangle .
\end{equation}%
\vspace{1pt} The symmetrized operator $a_{(1,1)}^{\dagger }$ is the same as
the operator $a_{[\vec{k}]}^{\dagger }$ for $[\vec{k}]=(1,1)$. Therefore the
overlap between a Brauer state $\rvert \gamma \rangle $ and a quarter BPS
coherent state still takes the form:
\begin{equation}
\langle (l,\gamma ^{+},\gamma ^{-})\lvert Coh^{\frac{1}{4}}\rangle =\frac{%
n!m!d_{\gamma ^{+}}d_{\gamma ^{-}}}{l!(n-l)!(m-l)!}c_{(1,1)}^{l}s_{\gamma
^{+}}(x_{1},x_{2},\dots )s_{\gamma ^{-}}(y_{1},y_{2},\dots ),
\label{Overlap_Coh_Brauer}
\end{equation}%
where
\begin{equation}
c_{(n,0)}=\frac{\Lambda _{n}}{n}=\frac{x_{1}^{n}+x_{2}^{n}+\dots }{n},\quad
c_{(0,m)}=\frac{\Lambda _{m}^{\prime }}{m}=\frac{y_{1}^{m}+y_{2}^{m}+\dots }{%
m}.
\end{equation}

We have considered the inner products $\langle \gamma \lvert Coh^{\frac{1}{4}%
}\rangle $ and $\langle Coh^{\frac{1}{4}}\lvert Coh^{\frac{1}{4}}\rangle $.
We considered similar quantity in our previous work since their quotient
gives us normalized value of the overlap. Therefore, the normalized version
of the overlap $\langle \gamma \lvert Coh^{\frac{1}{4}}\rangle ~$should be
\begin{equation}
\frac{\langle \gamma \lvert Coh^{\frac{1}{4}}\rangle }{\sqrt{\langle \gamma
\lvert \gamma \rangle \langle Coh^{\frac{1}{4}}\rvert Coh^{\frac{1}{4}%
}\rangle }}.
\end{equation}%
And we find%
\begin{eqnarray}
&&\frac{|\langle \gamma \lvert Coh^{\frac{1}{4}}\rangle |^{2}}{\langle
\gamma \lvert \gamma \rangle \langle Coh^{\frac{1}{4}}\rvert Coh^{\frac{1}{4}%
}\rangle }  \notag \\
&=&\frac{1}{l!}\prod_{i,j}(1-x_{i}x_{j}^{\ast
})\prod_{i,j}(1-y_{i}y_{j}^{\ast })\exp (-|c|^{2})|c|^{2l}\left\vert
s_{\gamma ^{+}}(x_{1},\dots )s_{\gamma ^{-}}(y_{1},\dots )\right\vert ^{2}.
\notag \\
&&
\end{eqnarray}

In the half BPS case, $l=0$ and $\gamma ^{-}=\emptyset $, we have the
normalized overlap
\begin{equation}
\prod_{i,j}(1-x_{i}x_{j}^{\ast })\left\vert s_{\gamma ^{+}}(x_{1},\dots
)\right\vert ^{2},
\end{equation}%
which is the same as our previous results in \cite{Lin:2017dnz}, where we
further analyzed the case with rectangular tableaux $\gamma ^{+}=\Box _{LM}$.

Although Brauer state of the form $\rvert (l,\gamma ^{+},\gamma ^{-})\rangle
$ does not span the whole Hilbert space, we can consider the truncated
subspace spanned by $a_{(n,0)}^{\dagger },a_{(0,m)}^{\dagger
},a_{(1,1)}^{\dagger }$. For this reason, we can also consider a subclass of
coherent states defined by
\begin{equation}
\rvert Coh^{\frac{1}{4}}\rangle =\exp (\sum_{n}\frac{\Lambda _{n}}{n}%
a_{(n,0)}^{\dagger }+\sum_{m}\frac{\Lambda _{m}^{\prime }}{m}%
a_{(0,m)}^{\dagger }+ca_{(1,1)}^{\dagger })\rvert 0\rangle .
\end{equation}%
And we use notation $\rvert Coh^{\frac{1}{4}}(x,y,c)\rangle $ by requiring
that
\begin{equation}
\Lambda _{n}=x_{1}^{n}+x_{2}^{n}+\cdots ,\quad \Lambda _{m}^{\prime
}=y_{1}^{m}+y_{2}^{m}+\cdots .
\end{equation}

The advantage of considering the truncated quarter BPS coherent state can be
seen from the following proposition.

\begin{prop}
The Brauer state $\rvert \gamma \rangle =\rvert (l,\gamma ^{+},\gamma
^{-})\rangle $ and the coherent state $\rvert Coh^{\frac{1}{4}%
}(x,y,c)\rangle $ can be transformed into each other by the following
formulas

\begin{equation}
\rvert Coh^{\frac{1}{4}}(x,y,c)\rangle =\sum_{\gamma }\frac{(n-l)!(m-l)!}{%
n!m!d_{\gamma ^{+}}d_{\gamma ^{-}}}c^{l}s_{\gamma ^{+}}(x)s_{\gamma
^{-}}(y)\rvert (l,\gamma ^{+},\gamma ^{-})\rangle  \label{Brauer_to_Coh}
\end{equation}%
and
\begin{eqnarray}
\rvert (l,\gamma ^{+},\gamma ^{-})\rangle &=&\frac{n!m!d_{\gamma
^{+}}d_{\gamma ^{-}}}{(n-l)!(m-l)!}\frac{1}{M_{+}!M_{-}!}\oint \frac{dc}{%
2\pi ic}\prod_{j=1}^{M_{+}}\frac{dx_{j}}{2\pi ix_{j}}\prod_{j=1}^{M_{-}}%
\frac{dy_{j}}{2\pi iy_{j}}  \notag \\
&&\times c^{-l}\prod_{1\leq i<j\leq M_{+}}\left\vert 1-\frac{x_{i}}{x_{j}}%
\right\vert ^{2}\prod_{1\leq i<j\leq M_{-}}\left\vert 1-\frac{y_{i}}{y_{j}}%
\right\vert ^{2}s_{\gamma ^{+}}(x^{-1})s_{\gamma ^{-}}(y^{-1})\rvert Coh^{%
\frac{1}{4}}(x,y,c)\rangle  \notag \\
&&  \label{Coh_to_Brauer}
\end{eqnarray}%
where $M_{+}$ is the number of rows of $\gamma ^{+}$ and $M_{-}$ is the
number of rows of $\gamma ^{-}$. The $x^{-1}$ here is a short hand for $%
(x_{1}^{-1},x_{2}^{-1},\dots )$. And the integration $\oint $ is alone the
circular paths defined by $|c|=1,\;|x_{j}|=1,\;|y_{j}|=1.$
\end{prop}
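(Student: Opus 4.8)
The plan is to obtain the first formula (\ref{Brauer_to_Coh}) directly, by multiplying out the three commuting modes of the truncated coherent state and re-expressing the building blocks through (\ref{Brauer expression}), and then to obtain the second formula (\ref{Coh_to_Brauer}) by inverting (\ref{Brauer_to_Coh}) with the orthogonality of Schur polynomials. Throughout we work in the infinite $N$ limit, so (\ref{Brauer expression}) holds as an exact identity.

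For (\ref{Brauer_to_Coh}), I would first factor
\begin{equation}
\rvert Coh^{\frac{1}{4}}(x,y,c)\rangle=\exp\Big(\sum_{n}\tfrac{\Lambda_{n}}{n}a_{(n,0)}^{\dagger}\Big)\rvert 0\rangle\otimes\exp\Big(\sum_{m}\tfrac{\Lambda_{m}^{\prime}}{m}a_{(0,m)}^{\dagger}\Big)\rvert 0\rangle\otimes\exp\big(c\,a_{(1,1)}^{\dagger}\big)\rvert 0\rangle .
\end{equation}
Each of the first two factors is a half BPS coherent state built from a single matrix, and by the half BPS result of \cite{Lin:2017dnz} (the Schur-function identities quoted in Section 3.4) these expand in Young tableau states as $\sum_{\gamma^{+}}s_{\gamma^{+}}(x)\rvert\gamma^{+}\rangle$ and $\sum_{\gamma^{-}}s_{\gamma^{-}}(y)\rvert\gamma^{-}\rangle$, while the third factor is $\sum_{l}\frac{c^{l}}{l!}a_{(1,1)}^{\dagger l}\rvert 0\rangle$. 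Taking the product yields $\rvert Coh^{\frac{1}{4}}(x,y,c)\rangle$ as a sum over triples $(l,\gamma^{+},\gamma^{-})$ of $s_{\gamma^{+}}(x)s_{\gamma^{-}}(y)\frac{c^{l}}{l!}\,\rvert\gamma^{+}\rangle\otimes\rvert\gamma^{-}\rangle\otimes a_{(1,1)}^{\dagger l}\rvert 0\rangle$. Solving (\ref{Brauer expression}) for $\rvert\gamma^{+}\rangle\otimes\rvert\gamma^{-}\rangle\otimes a_{(1,1)}^{\dagger l}\rvert 0\rangle$, with $n=|\gamma^{+}|+l$ and $m=|\gamma^{-}|+l$ read off from the label, and substituting, the factor $\frac{l!(n-l)!(m-l)!}{n!m!d_{\gamma^{+}}d_{\gamma^{-}}}$ combines with $\frac{c^{l}}{l!}$ to produce exactly the coefficient in (\ref{Brauer_to_Coh}).

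For (\ref{Coh_to_Brauer}), I would substitute (\ref{Brauer_to_Coh}) into its right-hand side and show that the contour integrals collapse the sum. Writing the dummy Brauer label as $(l',\mu^{+},\mu^{-})$, the $c$-integral gives $\oint\frac{dc}{2\pi ic}\,c^{-l}c^{l'}=\delta_{l,l'}$, and the $x$- and $y$-integrals are evaluated by the orthogonality of Schur polynomials in finitely many variables,
\begin{equation}
\frac{1}{M!}\oint\prod_{j=1}^{M}\frac{dx_{j}}{2\pi ix_{j}}\prod_{1\le i<j\le M}\Big|1-\tfrac{x_{i}}{x_{j}}\Big|^{2}s_{\lambda}(x^{-1})s_{\mu}(x)=\delta_{\lambda\mu},
\end{equation}
valid for partitions with at most $M$ rows (with $s_{\mu}\equiv 0$ in $M$ variables when $\mu$ has more than $M$ rows); this is the Weyl integration formula for $U(M)$, the same identity underlying the half BPS computations of \cite{Lin:2017dnz}. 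Choosing $M=M_{+}$ and $M=M_{-}$ equal to the number of rows of $\gamma^{+}$ and $\gamma^{-}$, the $x$-integral forces $\mu^{+}=\gamma^{+}$ and the $y$-integral forces $\mu^{-}=\gamma^{-}$, so only the term $(l,\gamma^{+},\gamma^{-})$ survives; its coefficient $\frac{(n-l)!(m-l)!}{n!m!d_{\gamma^{+}}d_{\gamma^{-}}}$ from (\ref{Brauer_to_Coh}) is cancelled by the prefactor $\frac{n!m!d_{\gamma^{+}}d_{\gamma^{-}}}{(n-l)!(m-l)!}$ in (\ref{Coh_to_Brauer}), leaving $\rvert(l,\gamma^{+},\gamma^{-})\rangle$.

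The only non-routine ingredient is the Schur orthogonality relation above; everything else is bookkeeping with the normalization constants inherited from (\ref{Brauer expression}) and (\ref{Overlap_Coh_Brauer}). It is also worth recording, so that (\ref{Brauer_to_Coh}) is manifestly a genuine expansion, that the truncated Brauer states $\rvert(l,\gamma^{+},\gamma^{-})\rangle$ are linearly independent: by (\ref{Brauer expression}) each is a nonzero multiple of $\rvert\gamma^{+}\rangle\otimes\rvert\gamma^{-}\rangle\otimes a_{(1,1)}^{\dagger l}\rvert 0\rangle$, the Young tableau states form bases of their sectors, and the $a_{(1,1)}^{\dagger l}\rvert 0\rangle$ are independent for distinct $l$; hence these states span precisely the truncated subspace in which $\rvert Coh^{\frac{1}{4}}(x,y,c)\rangle$ lies.
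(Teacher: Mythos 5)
Your proposal is correct and takes essentially the same route as the paper: the paper obtains the first formula by orthogonally projecting $\rvert Coh^{\frac{1}{4}}(x,y,c)\rangle$ onto the Brauer basis via $\sum_{\gamma}\langle\gamma\lvert Coh^{\frac{1}{4}}\rangle/\langle\gamma\lvert\gamma\rangle\,\rvert\gamma\rangle$ together with the overlap (\ref{Overlap_Coh_Brauer}), whereas you expand the three exponential factors directly with the half BPS Schur identity and substitute (\ref{Brauer expression}) --- an equivalent bookkeeping yielding the same coefficient. The inversion by the $c$-contour integral and the $U(M)$ Schur orthogonality is exactly the paper's argument, and your added remark that $s_{\mu}$ vanishes in $M$ variables when $\mu$ has more than $M$ rows is a correct clarification the paper leaves implicit.
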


\begin{proof}
	
	For the first formula, we need to note that Brauer states of the form $\rvert \gamma\rangle$ form an orthogonal basis of the truncated subspace. This is because Young tableau states $\rvert \gamma^+\rangle$ for all possible Young tableaux $\gamma^+$ form an orthonormal basis of the subspace $\bigcup_n\mathcal{H}_{(n,0)}$. And similarly for $\rvert \gamma^-\rangle$. Also, $\{a_{(1,1)}^{\dagger l }\rvert 0 \rangle\}_{l\geq 0}$ form an orthogonal basis of the subspace $\mathcal{H}_{(1,1)}$. So we have
	\begin{equation}
	\rvert Coh^{\frac{1}{4}}(x,y,c) \rangle = \sum_{\gamma}\frac{\langle \gamma\lvert Coh^{\frac{1}{4}}(x,y,c) \rangle }{\langle \gamma \lvert \gamma \rangle} \rvert \gamma\rangle.
	\end{equation}
Then, using the formula (\ref{Overlap_Coh_Brauer}) for the overlap, we can derive equation (\ref{Brauer_to_Coh}).

The derivation of the second formula can be considered as performing inverse Fourier transform to the first one. We need to use the following results:
\begin{eqnarray}
\oint \frac{dc}{2\pi i c}c^{-l}c^{l'} = \delta_{ll'}, \\
\frac{1}{M!}\oint \prod_{j = 1}^{M} \frac{dx_j}{2\pi i x_j}\prod_{1\leq i < j \leq M}\left|1 - \frac{x_i}{x_j}\right|^2s_{\lambda}(x^{-1})s_{\mu}(x) = \delta_{\lambda\mu}.
\end{eqnarray}
The first formula is obvious by calculating the residue, and the second one is from \cite{Prob}, see also \cite{Lin:2017dnz}. We then multiply the three factors $\frac{dc}{2\pi i c}c^{l'}$,  $\prod_{j = 1}^{M} \frac{dx_j}{2\pi i x_j}\prod_{1\leq i < j \leq M}\left|1 - \frac{x_i}{x_j}\right|^2s_{\gamma^{+\prime}}(x^{-1})$ and $\prod_{j = 1}^{M} \frac{dy_j}{2\pi i y_j}\prod_{1\leq i < j \leq M}\left|1 - \frac{y_i}{y_j}\right|^2s_{\gamma^{-\prime}}(y^{-1})$ to both sides of equation (\ref{Brauer_to_Coh}). After integration of $c,x_i,y_i$ over the contour, we can find equation (\ref{Coh_to_Brauer}).
\end{proof}

\vspace{1pt}

\vspace{1pt}

\vspace{1pt}

\vspace{1pt}

\subsection{Entanglement entropy of Brauer states}

From our above results, we see that the Brauer states $\rvert \gamma \rangle
~$span a subspace of the Hilbert space:
\begin{equation}
\mathcal{H}_{\text{Brauer}\{\gamma \}}=\left( \bigotimes_{[k]\in K_{0}}%
\mathcal{H}_{[\vec{k}]}\right) \otimes \mathcal{H}_{(1,1)}=\left(
\bigotimes_{k\geq 1}\mathcal{H}_{(k,0)}\otimes \mathcal{H}_{(0,k)}\right)
\otimes \mathcal{H}_{(1,1)}.
\end{equation}%
In the following, we always assume that $\gamma =(l,\gamma ^{+},\gamma ^{-})$%
. And to simplify the notation, we write $\mathcal{H}_{k}=\mathcal{H}%
_{(k,0)},\;\mathcal{H}_{\bar{k}}=\mathcal{H}_{(0,k)}$. The subspaces $%
\mathcal{H}_{k},\mathcal{H}_{\bar{k}}$ are generated by $t_{k}$ and $s_{k}$
respectively, where we identify
\begin{equation}
t_{k}\leftrightarrow \text{tr}\left( \frac{Z}{\sqrt{N}}\right) ^{k},\quad
s_{k}\leftrightarrow \text{tr}\left( \frac{Y}{\sqrt{N}}\right) ^{k}.
\end{equation}

Similar to the previous work of \cite{Berenstein:2017abm,Lin:2017dnz}, we
define traces
\begin{equation}
\text{tr}_{j}=\text{tr}_{\otimes _{k\neq j}\mathcal{H}_{k}\bigotimes \otimes
_{\bar{k}}\mathcal{H}_{\bar{k}}\otimes \mathcal{H}_{(1,1)}},\quad \text{tr}_{%
\bar{j}}=\text{tr}_{\otimes _{k}\mathcal{H}_{k}\bigotimes \otimes _{\bar{k}%
\neq \bar{j}}\mathcal{H}_{\bar{k}}\otimes \mathcal{H}_{(1,1)}},\quad \text{tr%
}_{(1,1)}=\text{tr}_{\otimes _{k}\mathcal{H}_{k}\bigotimes \otimes _{\bar{k}}%
\mathcal{H}_{\bar{k}}}.
\end{equation}%
The above notation should be distinguished from the trace $\text{tr}_{%
\mathcal{H}_{j}},\text{tr}_{\mathcal{H}_{\bar{j}}}$. We then consider the
entanglement spectrum and entanglement entropy of the Brauer state. After
normalization, the Brauer state is $\frac{1}{\sqrt{l!}}\rvert \gamma
^{+}\rangle \otimes \rvert \gamma ^{-}\rangle \otimes a_{(1,1)}^{\dagger
l}\rvert 0\rangle $. And we write $\rvert l\rangle _{(1,1)}=\frac{1}{\sqrt{l!%
}}a_{(1,1)}^{\dagger l}\rvert 0\rangle $. Then we can calculate the density
operator of a mode of a Brauer state as
\begin{eqnarray}
\hat{\rho}_{j}(\gamma ) &=&\text{tr}_{j}(\rvert \gamma ^{+}\rangle \otimes
\rvert \gamma ^{-}\rangle \otimes \rvert l\rangle _{(1,1)}\langle \gamma
^{+}\lvert \otimes \langle \gamma ^{-}\lvert \otimes \langle l\lvert
_{(1,1)}),  \notag \\
\hat{\rho}_{\bar{j}}(\gamma ) &=&\text{tr}_{\bar{j}}(\rvert \gamma
^{+}\rangle \otimes \rvert \gamma ^{-}\rangle \otimes \rvert l\rangle
_{(1,1)}\langle \gamma ^{+}\lvert \otimes \langle \gamma ^{-}\lvert \otimes
\langle l\lvert _{(1,1)}),  \notag \\
\hat{\rho}_{(1,1)}(\gamma ) &=&\text{tr}_{(1,1)}(\rvert \gamma ^{+}\rangle
\otimes \rvert \gamma ^{-}\rangle \otimes \rvert l\rangle _{(1,1)}\langle
\gamma ^{+}\lvert \otimes \langle \gamma ^{-}\lvert \otimes \langle l\lvert
_{(1,1)}).
\end{eqnarray}%
Then we have
\begin{eqnarray}
\hat{\rho}_{j}(\gamma ) &=&\text{tr}_{j}(\rvert \gamma ^{+}\rangle \otimes
\rvert \gamma ^{-}\rangle \otimes \rvert l\rangle _{(1,1)}\langle \gamma
^{+}\lvert \otimes \langle \gamma ^{-}\lvert \otimes \langle l\lvert
_{(1,1)})  \notag \\
&=&\text{tr}_{j}(\rvert \gamma ^{+}\rangle \langle \gamma ^{+}\lvert
)\langle \gamma ^{-}\lvert \gamma ^{-}\rangle \langle l\rvert l\rangle
_{(1,1)}  \notag \\
&=&\text{tr}_{j}(\rvert \gamma ^{+}\rangle \langle \gamma ^{+}\lvert )
\notag \\
&=&\hat{\rho}_{j}(\gamma ^{+}),
\end{eqnarray}%
where $\hat{\rho}_{j}(\gamma ^{+})$ is the density matrix for the Young
tableau state $\rvert \gamma ^{+}\rangle $. Similarly
\begin{equation}
\hat{\rho}_{\bar{j}}(\gamma )=\hat{\rho}_{\bar{j}}(\gamma ^{-}).
\end{equation}%
And for the $(1,1)$ mode, we have $\hat{\rho}_{(1,1)}(\gamma )=\rvert
l\rangle _{(1,1)}\langle l\lvert _{(1,1)}$, which is a pure state density
matrix.

Then the calculation of entanglement entropy \cite{Horodecki:2009zz} is
straightforward:
\begin{equation}
s_{j}(\gamma )=-\text{tr}_{\mathcal{H}_{j}}(\hat{\rho}_{j}\log (\hat{\rho}%
_{j})),\;s_{\bar{j}}(\gamma )=-\text{tr}_{\mathcal{H}_{\bar{j}}}(\hat{\rho}_{%
\bar{j}}\log (\hat{\rho}_{\bar{j}})),\;s_{(1,1)}(\gamma )=-\text{tr}_{%
\mathcal{H}_{((1,1))}}(\hat{\rho}_{(1,1)}\log (\hat{\rho}_{(1,1)})).
\end{equation}%
We have
\begin{equation}
s_{j}(\gamma )=s_{j}(\gamma ^{+}),\quad s_{\bar{j}}(\gamma )=s_{\bar{j}%
}(\gamma ^{-}),\quad s_{(1,1)}(\gamma )=0,
\end{equation}%
where $s_{j}(\gamma ^{+})$ is the entanglement entropy of the Young tableau
state $\gamma ^{+}$ for mode $j$ and $s_{\bar{j}}(\gamma ^{-})$ is the
entanglement entropy of the Young tableau state $\gamma ^{-}$ for mode $\bar{%
j}$. Detailed analysis of the entanglement entropy $s_{j}(\gamma ^{+})$ has
been given in \cite{Berenstein:2017abm,Lin:2017dnz}. And since $\hat{\rho}%
_{(1,1)}(\gamma )$ is a pure state density matrix, the entanglement entropy
for the mode $(1,1)$ is zero.

\section{Squeezed states and their relation to Brauer states}

\renewcommand{\theequation}{4.\arabic{equation}} \setcounter{equation}{0} %
\renewcommand{\thethm}{4.\arabic{thm}} \setcounter{thm}{0}

Motivated by our previous work \cite{Lin:2017dnz}, we define the squeezed
state as follows
\begin{equation}
\rvert Squ_{n,m;n^{\prime },m^{\prime }}\rangle =\exp (\mu
(a_{(n,m)}^{\dagger }a_{(n^{\prime }m^{\prime })}^{\dagger
}-a_{(n,m)}a_{(n^{\prime }m^{\prime })}))\rvert 0\rangle .
\end{equation}

In the special case where $m=0$ and $m^{\prime }=0$, we can write $%
a_{(n,0)}=a_{n},a_{(n,0)}^{\dagger }=a_{n}^{\dagger }$. And the above
definition gives the half BPS squeezed state (6.1) defined in our previous
work \cite{Lin:2017dnz},
\begin{equation}
\rvert Squ_{nn^{\prime }}\rangle =\exp (\mu (a_{n}^{\dagger }a_{n^{\prime
}}^{\dagger }-a_{n}a_{n^{\prime }}))\rvert 0\rangle .
\label{half BPS squeezed}
\end{equation}%
Since the case for $m=m^{\prime }=0$ has been discussed in our previous
work, here we only consider the case $m,m^{\prime },n,n^{\prime }\geq 1$. In
this case, we use the commutation relation
\begin{equation}
\lbrack a_{(n,m)},a_{(n^{\prime },m^{\prime })}^{\dagger }]=\delta
_{nn^{\prime }}\delta _{mm^{\prime }}\sum_{[f]\in \tilde{V}_{n,m}}\left(
\frac{n!m!}{(n+m)!}\frac{(n+m)}{|Stab_{f}(\mathbb{Z}_{n+m})|}\right) ^{2}.
\end{equation}%
To simplify the calculation, we write the above formula as
\begin{equation}
\lbrack \frac{1}{\kappa _{(n,m)}}a_{(n,m)},\frac{1}{\kappa _{(n^{\prime
},m^{\prime })}}a_{(n^{\prime },m^{\prime })}^{\dagger }]=\delta
_{nn^{\prime }}\delta _{mm^{\prime }},
\end{equation}%
where
\begin{equation}
\kappa _{(n,m)}=\sqrt{\sum_{[f]\in \tilde{V}_{n,m}}\left( \frac{n!m!}{(n+m)!}%
\frac{(n+m)}{|Stab_{f}(\mathbb{Z}_{n+m})|}\right) ^{2}}.
\end{equation}%
Then for $(n,m)\neq (n^{\prime },m^{\prime })$, the squeezed state can be
expanded as follows:
\begin{equation}
\rvert Squ\rangle =\left( 1-\tanh ^{2}(\mu \kappa _{(n,m)}\kappa
_{(n^{\prime },m^{\prime })})\right) ^{\frac{1}{2}}\sum_{j=0}^{\infty }\frac{%
1}{j!}\left( \frac{\tanh (\mu \kappa _{(n,m)}\kappa _{(n^{\prime }m^{\prime
})})}{\kappa _{(n,m)}\kappa _{(n^{\prime },m^{\prime })}}\right)
^{j}a_{(n,m)}^{\dagger j}a_{(n^{\prime },m^{\prime })}^{\dagger j}\rvert
0\rangle .  \label{squeezed state expansion}
\end{equation}

As a remark, we mention that the above formula also applies to the case when
some of $n,m,n^{\prime },m^{\prime }$ equal to zero. We only need to define
\begin{equation}
\kappa _{(n,0)}=\sqrt{n},\quad \kappa _{(0,m)}=\sqrt{m}.
\end{equation}%
And for the case $m=m^{\prime }=0$, the above results give the expansion of
the half BPS squeezed state (\ref{half BPS squeezed}). It's also easy to see
from the above expansion that squeezed states have inner products given by
\begin{equation}
\langle Squ_{n_{1},m_{1};n_{2},m_{2}}\rvert
Squ_{n_{3},m_{3};n_{4},m_{4}}\rangle =\delta _{n_{1}n_{3}}\delta
_{n_{2}n_{4}}\delta _{m_{1}m_{3}}\delta _{m_{2}m_{4}}+\delta
_{n_{1}n_{4}}\delta _{n_{2}n_{3}}\delta _{m_{1}m_{4}}\delta _{m_{2}m_{3}}.
\end{equation}

One of the motivations for us to consider the squeezed state is that we can
take a limit of the squeezed states to obtain EPR states. First define a new
parameter
\begin{equation}
q=\tanh (\mu \kappa _{(n,m)}\kappa _{(n^{\prime },m^{\prime })}).
\end{equation}%
Then the squeezed state can be written as:
\begin{equation}
\rvert Squ_{n,m;n^{\prime },m^{\prime }}\rangle =\left( 1-q^{2}\right) ^{%
\frac{1}{2}}\sum_{j=0}^{\infty }\frac{1}{j!}\left( \frac{q}{\kappa
_{(n,m)}\kappa _{(n^{\prime },m^{\prime })}}\right) ^{j}a_{(n,m)}^{\dagger
j}a_{(n^{\prime },m^{\prime })}^{\dagger j}\rvert 0\rangle .
\end{equation}%
And we can write the corresponding EPR limit as follows:
\begin{equation}
\rvert EPR_{n,m;n^{\prime },m^{\prime }}\rangle =\lim_{q\rightarrow 1}\rvert
Squ_{n,m;n^{\prime },m^{\prime }}\rangle =\mathcal{N}^{-\frac{1}{2}%
}\sum_{j=0}^{\infty }\frac{1}{j!}\left( \frac{1}{\kappa _{(n,m)}\kappa
_{(n^{\prime },m^{\prime })}}\right) ^{j}a_{(n,m)}^{\dagger j}a_{(n^{\prime
},m^{\prime })}^{\dagger j}\rvert 0\rangle ,
\end{equation}%
where $\mathcal{N}^{-\frac{1}{2}}$ is a normalization factor that tends to
infinity as $q\rightarrow 1$. One can take an infinitesimal positive cutoff $%
\epsilon \rightarrow 0$, such that $1-q=\epsilon $ and $\mathcal{N}=\frac{1}{%
2\epsilon }$.

It is interesting to consider the overlap between a squeezed state and a
Brauer state. We can see from the expansion (\ref{squeezed state expansion})
that the overlap is zero when $n,m,n^{\prime },m^{\prime }>1$. And it is not
zero only for the following situations:

\begin{enumerate}
\item[1.] $(n,m)=(1,1),$ $m^{\prime }=0$ and $\gamma =(l,\gamma
^{+},\emptyset )$ with $\gamma ^{+}\vdash ln^{\prime }.$

In this situation,
\begin{eqnarray}
\langle (l,\gamma ^{+},\emptyset )\rvert Squ_{1,1;n^{\prime },0}\rangle
=\left( 1-\tanh ^{2}(\mu \sqrt{n^{\prime }})\right) ^{\frac{1}{2}}\left(
\frac{\tanh (\mu \sqrt{n^{\prime }})}{\sqrt{n^{\prime }}}\right) ^{l}\frac{%
(n^{\prime }l+l)!}{(n^{\prime }l)!}d_{\gamma ^{+}}\chi _{\gamma ^{+}}(\vec{w}%
)|_{\substack{ w_{k}=0\text{ except}  \\ w_{n^{\prime }}=l}}  \notag \\
\end{eqnarray}%
Similarly we can consider $(n,m)=(1,1)$, $n^{\prime }=0$ and $\gamma
=(l,\emptyset ,\gamma ^{-})$ with $\gamma ^{-}\vdash lm^{\prime }$.

\item[2.] $m=0$, $m^{\prime }=0$, and $\gamma =(0,\gamma ^{+},\emptyset )$
with $\gamma ^{+}\vdash j(n+n^{\prime }).$

In this situation, we have that,%
\begin{eqnarray}
\langle (0,\gamma ^{+},\emptyset )\rvert Squ_{n,0;n^{\prime },0}\rangle
=\left( 1-\tanh ^{2}(\mu \sqrt{nn^{\prime }})\right) ^{\frac{1}{2}}\frac{1}{%
j!}\left( \frac{\tanh (\mu \sqrt{nn^{\prime }})}{\sqrt{nn^{\prime }}}\right)
^{j}d_{\gamma ^{+}}\chi _{\gamma ^{+}}(\vec{w})|_{\substack{ w_{k}=0\text{
except}  \\ w_{n}=w_{n^{\prime }}=j}}  \notag \\
\end{eqnarray}%
This situation is the same as (6.7) in \cite{Lin:2017dnz} except that the
normalization of the Brauer state is different.

Similarly we can consider $n=0$, $n^{\prime }=0$, and $\gamma =(0,\emptyset
,\gamma ^{-})$.

\item[3.] $m=0$, $n^{\prime }=0$, and $\gamma =(0,\gamma ^{+},\gamma ^{-})~$%
with $\gamma ^{+}\vdash jn,\gamma ^{-}\vdash jm^{\prime }.$

We have that,
\begin{eqnarray}
&&\langle (0,\gamma ^{+},\emptyset )\rvert Squ_{n,0;0,m^{\prime }}\rangle
\notag \\
&=&\left( 1-\tanh ^{2}(\mu \sqrt{nm^{\prime }})\right) ^{\frac{1}{2}}\frac{1%
}{j!}\left( \frac{\tanh (\mu \sqrt{nm^{\prime }})}{\sqrt{nm^{\prime }}}%
\right) ^{j}d_{\gamma ^{+}}d_{\gamma ^{-}}\chi _{\gamma ^{+}}(\vec{w})|
_{\substack{ w_{k}=0\text{ except}  \\ w_{n}=j}}\chi _{\gamma ^{-}}(\vec{w}%
)| _{\substack{ w_{k}=0\text{ except}  \\ w_{m^{\prime }}=j}}  \notag \\
&&
\end{eqnarray}
\end{enumerate}

\section{Discussion}

\label{sec_Discussion}

\vspace{1pt}

In this paper, we constructed quarter BPS coherent states. The construction
starts with a general construction of the Hilbert space of two-matrix gauge
invariant operators. Then we consider the one-loop dilation operator. In our
case, we care about the kernel of the dilatation operator, and this gives us
the quarter BPS operators. Then the construction of quarter BPS coherent
states generalize the construction of half BPS coherent states by taking
exponential of the creation operators. These quarter BPS coherent states are
also the eigenstates of the annihilation operators. We also computed the
inner products of the quarter BPS coherent states.

\vspace{1pt}

The Brauer operators are also explored in this paper. The construction of
Brauer operators involves characters of irreducible representations of
Brauer algebra \cite{Kimura:2007wy,Kimura:2008ac}. And we calculated the
inner product between Brauer operators and trace product operators. The
construction of Brauer operators and the analysis of them have been carried
out in many previous works, see \cite{Kimura:2007wy,Kimura:2008ac}, and the
explicit form of these operators are known in special cases, see for example
\cite{Kimura:2009ur,Kimura:2010tx,Kimura:2011df}. We see that the Brauer
operators are in some sense the generalization of Young tableau operators in
quarter BPS case. This observation is also closely related to the dual
gravity interpretation \cite{Kimura:2011df}, where the droplet configuration
of the dual gravity solution is described by the two Young tableaux in the
Brauer basis. We also calculated the entanglement entropy of the Brauer
states, and the results are very similar to the Young tableau states.

\vspace{1pt}

One of the motivations of constructing the coherent states is that they are
important ingredients in the study of superposition-induced topology change
in quantum gravity \cite{Berenstein:2017abm,Berenstein:2016pcx}. With our
previous work of a superposition formula that gives a Young tableau state by
superposing half BPS coherent states, we considered here similar
superposition formulas involving Brauer states and quarter BPS coherent
states. We show that one can superpose quarter BPS coherent states to obtain
Brauer states. Conversely, our superposition formulas show that one can also
superpose Brauer states to obtain quarter BPS coherent states. The ideas of
superposition of states on the gravity side have also been considered in
\cite%
{Berenstein:2017abm,Almheiri:2016blp,Nomura:2016aww,Berenstein:2017rrx,Lin:2017dnz}%
. It is useful to explore these ideas with the setup of this paper. Also,
inspired by previous works \cite{Berenstein:2017abm,Lin:2017dnz}, it is very
interesting to further study the relation between entanglement and the dual
spacetime geometry.

\vspace{1pt}

We also generalized the squeezed states from our previous half BPS case \cite%
{Lin:2017dnz} to quarter BPS case. The squeezed states itself can be
regarded as a generalization of the coherent states, since they both satisfy
the property that they can saturate the uncertainty principle. Moreover,
taking certain limit of the squeezed state can give us a EPR pair state,
which is important in the quantum information theory and quantum optics. And
in our setup, it is interesting to study their entanglement properties and
the dual geometric picture.

\vspace{1pt}

In the context of gauge/gravity correspondence, coherent states have gravity
dual descriptions in terms of semiclassical geometries, and this has been
studied in detail in the half BPS case. These coherent states, in the dual
gravity side, correspond to creating deformations \cite%
{Berenstein:2005aa,Grant:2005qc,Mandal:2005wv,Skenderis:2007yb,Takayama:2005yq}
on the vacuum geometry. Some classes of these geometries can be reduced to
lower dimensions and viewed as geometries in lower dimensional gravity \cite%
{Lin:2004nb,Chong:2004ce,Chen:2007du,Liu:2007xj}. Geometries in lower
dimensional gravity that are dual to coherent states have also been
considered in \cite{Gentle:2013fma}. As similar to the half BPS case, there
are smooth spacetime geometries dual to quarter BPS states, see e.g. \cite%
{Chong:2004ce,Chen:2007du,Lunin:2008tf,Gauntlett:2006ns,Gava:2007qs} and
related discussions. These quarter BPS states include the quarter BPS
coherent states that we describe in this paper. It would also be interesting
to explore the gravity dual of the BPS coherent states further.

\vspace{1pt}

Our results may provide further insights into emergent spacetime geometry
and other interesting phenomena in gauge/gravity correspondence. Various
other similar spacetime geometries in the context of string theory and
quantum gravity have been analyzed, see for example \cite{Mathur:2005ai}$-$%
\cite{Fareghbal:2008ar} and their related discussions. Our methods and
discussions may also be related to 2d Yang-Mills \cite{Dijkgraaf:2005bp} and
to fuzzball proposal \cite{Mathur:2005ai}. It would also be good to
understand more the relation to proposals of emergent spacetime geometries.

\vspace{1pt}

We know that the dynamics of half BPS sector of $\mathcal{N}=4$ SYM is
described by a single matrix quantum mechanical model with harmonic
oscillator potential, which itself is equivalent to the dynamics of $N$ free
fermions. And the dynamics of quarter and eighth BPS sector are investigated
in, for example \cite{Berenstein:2005aa}. Therefore, our discussions are
also related to the matrix model approach and other approaches for several
matrix fields \cite%
{Berenstein:2005aa,Masuku:2009qf,Berenstein:2008eg,Honda:2013nfa,deMelloKoch:2016whh,BenGeloun:2017vwn,Harmark:2014mpa}%
.

\vspace{1pt}

We take the approach that first includes both BPS states and non-BPS states.
Although we mainly studied the BPS states, it is also very interesting to
consider other non-BPS states in this system, such as \cite%
{Carlson:2011hy,deMelloKoch:2012ck,Koch:2011hb}. There are restricted Schur
basis and flavor symmetry basis for example \cite{Brown:2007xh}, which have
their own distinct properties and can be transformed into each other.
Therefore we can also study their relation to our setup.

\vspace{1pt}

\section*{Acknowledgments}

We would like to thank D. Berenstein, B. Chen, R. de Mello Koch, Q. T. Li,
J. Maldacena, S. Ramgoolam, J. Shock, J. Simon, N. Su, N. Wallach, J. Wu,
S.-T. Yau, and P. Zhao for discussions and communications. The work was
supported in part by Yau Mathematical Sciences Center and Tsinghua
University.

%\vspace{1pt}

%%%%%%%%%%%%%%%%%%%%%%%%%%%%%%%%%%%%%%%%%%%%%%%%%%%%%%%%%%%%%%%%%%%%%%%
%%%%%%%%%%%%%%%%%%%%%%%%%%%%%%%%%%%%%%%%%%%%%%%%%%%%%%%%%%%%%%%%%%%%%%%
%%%%%%%%%%%%%%%%%%%%%%%%%%%%%%%%%%%%%%%%%%%%%%%%%%%%%%%%%%%%%%%%%%%%%%%

\appendix

\section{Orthogonality relation of Brauer states}

\label{Appendix A}

\renewcommand{\theequation}{A.\arabic{equation}} \setcounter{equation}{0} %
\renewcommand{\thethm}{A.\arabic{thm}} \setcounter{thm}{0}

\vspace{1pt}For general $l\geq 0$, we take a normalization of Brauer state $%
\rvert \gamma \rangle \leftrightarrow N^{l}O^{\gamma }(\frac{Z}{\sqrt{N}},%
\frac{Y}{\sqrt{N}})$, where $\gamma =(l,\gamma ^{+},\gamma ^{-})$. We can
write down the orthogonality relation. According to formula (7.15) in \cite%
{Kimura:2007wy}%
\begin{equation}
\langle O^{\gamma _{1}}(Y,Z)^{\dagger }O^{\gamma _{2}}(Z,Y)\rangle
=m!n!\delta _{\gamma _{1},\gamma _{2}}d_{\gamma _{1}}\dim \gamma _{1}.
\end{equation}%
We write the above formula in terms of notation $\rvert \gamma _{1}\rangle
=\rvert (l_{1},\gamma _{1}^{+},\gamma _{1}^{-})\rangle ,\rvert \gamma
_{2}\rangle =\rvert (l_{2},\gamma _{2}^{+},\gamma _{2}^{-})\rangle ,$
\begin{eqnarray}
\langle \gamma _{1}\lvert \gamma _{2}\rangle &=&\frac{1}{N^{n+m}}\langle
N^{l_{1}}O^{\gamma _{1}}(Y,Z)^{\dagger }N^{l_{2}}O^{\gamma _{2}}(Z,Y)\rangle
\notag \\
&=&\frac{1}{N^{n+m-l_{1}-l_{2}}}m!n!\delta _{\gamma _{1},\gamma
_{2}}d_{\gamma _{1}}\dim \gamma _{1}  \notag \\
&=&m!n!\delta _{\gamma _{1},\gamma _{2}}d_{\gamma _{1}}\frac{\dim \gamma _{1}%
}{N^{n+m-2l_{1}}},
\end{eqnarray}%
where $\delta _{\gamma _{1},\gamma _{2}}=\delta _{\gamma _{1}^{+},\gamma
_{2}^{+}}\delta _{\gamma _{1}^{-},\gamma _{2}^{-}}\delta _{l_{1},l_{2}}$. We
can use the formula for calculating $d_{\gamma }~$(equation (3.12) in \cite%
{Kimura:2007wy})
\begin{equation}
d_{\gamma }=\frac{m!n!}{l!(m-l)!(n-l)!}d_{\gamma ^{+}}d_{\gamma ^{-}},\quad
\text{ for }\gamma =(l,\gamma ^{+},\gamma ^{-}).
\end{equation}%
And we use the definition
\begin{equation}
\tilde{t}_{\gamma }=\lim_{N\rightarrow \infty }\frac{\dim \gamma }{N^{n+m-2l}%
}.
\end{equation}%
For general $l$, we also have an expression for $\tilde{t}_{\gamma }$
derived in Sec. 3.1, which is
\begin{equation}
\tilde{t}_{\gamma }=\frac{d_{\gamma ^{+}}d_{\gamma ^{-}}}{(n-l)!(m-l)!}.
\end{equation}

\vspace{1pt}Then we have
\begin{equation}
\langle \gamma _{1}\lvert \gamma _{2}\rangle =\delta _{\gamma _{1},\gamma
_{2}}\frac{m!^{2}n!^{2}}{l_{1}!(m-l_{1})!(n-l_{1})!}d_{\gamma
_{1}^{+}}d_{\gamma _{1}^{-}}\tilde{t}_{\gamma _{1}}(1+O(1/N)).
\end{equation}%
As a special case, for $l_{1}=0$, we have $\tilde{t}_{\gamma _{1}}=\frac{%
d_{\gamma _{1}^{+}}d_{\gamma _{1}^{-}}}{n!m!}$. Insert this in the above
formula we have%
\begin{equation}
\langle \gamma _{1}\lvert \gamma _{2}\rangle =\delta _{\gamma _{1},\gamma
_{2}}d_{\gamma _{1}^{+}}^{2}d_{\gamma _{1}^{-}}^{2}(1+O(1/N)).
\end{equation}

More generally, we have more Brauer states $O_{A,ij}^{\gamma }(Z,Y)$ which
defined previously. We can may identify
\begin{equation}
\rvert \gamma ;A,ij\rangle \leftrightarrow N^{l}O_{A,ij}^{\gamma }(\frac{Z}{%
\sqrt{N}},\frac{Y}{\sqrt{N}}).
\end{equation}%
According to (7.12) in \cite{Kimura:2007wy}
\begin{equation}
\langle O_{A_{1},i_{1}j_{1}}^{\gamma _{1}}(Z,Y)^{\dagger
}O_{A_{2},i_{2}j_{2}}^{\gamma _{2}}(Z,Y)\rangle =\delta _{\gamma _{1},\gamma
_{2}}\delta _{A_{1}A_{2}}\delta _{i_{1}i_{2}}\delta
_{j_{1}j_{2}}d_{A_{1}}\dim \gamma _{1}.
\end{equation}%
Therefore we have
\begin{eqnarray}
\langle \gamma _{1};A_{1},i_{1}j_{1}\lvert \gamma
_{2};A_{2},i_{2}j_{2}\rangle &=&\frac{1}{N^{n+m-l_{1}-l_{2}}}\delta _{\gamma
_{1},\gamma _{2}}\delta _{A_{1}A_{2}}\delta _{i_{1}i_{2}}\delta
_{j_{1}j_{2}}d_{A_{1}}\dim \gamma _{1}  \notag \\
&=&\delta _{\gamma _{1},\gamma _{2}}\delta _{A_{1}A_{2}}\delta
_{i_{1}i_{2}}\delta _{j_{1}j_{2}}d_{A_{1}}\frac{\dim \gamma _{1}}{%
N^{n+m-2l_{1}}}.
\end{eqnarray}

\section{Characters of Brauer algebra}

\renewcommand{\theequation}{B.\arabic{equation}} \setcounter{equation}{0} %
\renewcommand{\thethm}{B.\arabic{thm}} \setcounter{thm}{0}

\label{appendix brauer algebra}

In this section, we present some results about the characters of Brauer
algebra that will be useful in the study of Brauer states. We mainly refer
to \cite{Halverson:1996} in this section. Brauer algebras are extensively
explored also in e.g. \cite{Ram:1995}$-$\cite{Stoll:2016}.

Remember that the Brauer algebra $B_{N}(n,m)$ is not a group, therefore some
familiar results in group representation theory may not hold in this case.
For example, there is no notion of conjugacy class, since there is no
inverse for every element in the algebra. For this reason, we cannot say
that the character takes the same value on a conjugacy class for the Brauer
algebra. However, we have an analogous notion of conjugacy class which \cite%
{Halverson:1996} calls character class that shares similar feature of
conjugacy class in group representation theory.

First we introduce some basic results and fix some notations related to the
Brauer algebra. The Brauer algebra $B_{N}(n,m)$ has a basis given by $(n,m)$
diagrams $d$. We write $\mathcal{D}_{n,m}$ for the set of all $(n,m)$
diagrams. A $d\in \mathcal{D}_{n,m}$ is defined to be a diagram with a
vertical wall between the $n$th and $(n+1)$th vertices such that vertical
edges never cross the wall and horizontal edges always begin and end on
opposite side of the wall. For example, the diagram below is a $(6,6)$
diagram

\begin{equation}
\begin{tikzpicture} \node at (0,2) {$d = $}; \node [above] at (1,3) {$1$};
\node [above] at (2,3) {$2$}; \node [above] at (3,3) {$3$}; \node [above] at
(4,3) {$4$}; \node [above] at (5,3) {$5$}; \node [above] at (6,3) {$6$};
\node [above] at (7,3) {$1'$}; \node [above] at (8,3) {$2'$}; \node [above]
at (9,3) {$3'$}; \node [above] at (10,3) {$4'$}; \node [above] at (11,3)
{$5'$}; \node [above] at (12,3) {$6'$}; \draw[fill] (1,1) circle
[radius=0.05]; \draw[fill] (2,1) circle [radius=0.05]; \draw[fill] (3,1)
circle [radius=0.05]; \draw[fill] (4,1) circle [radius=0.05]; \draw[fill]
(5,1) circle [radius=0.05]; \draw[fill] (6,1) circle [radius=0.05];
\draw[fill] (1,3) circle [radius=0.05]; \draw[fill] (2,3) circle
[radius=0.05]; \draw[fill] (3,3) circle [radius=0.05]; \draw[fill] (4,3)
circle [radius=0.05]; \draw[fill] (5,3) circle [radius=0.05]; \draw[fill]
(6,3) circle [radius=0.05]; \draw[fill] (7,1) circle [radius=0.05];
\draw[fill] (8,1) circle [radius=0.05]; \draw[fill] (9,1) circle
[radius=0.05]; \draw[fill] (10,1) circle [radius=0.05]; \draw[fill] (11,1)
circle [radius=0.05]; \draw[fill] (12,1) circle [radius=0.05]; \draw[fill]
(7,3) circle [radius=0.05]; \draw[fill] (8,3) circle [radius=0.05];
\draw[fill] (9,3) circle [radius=0.05]; \draw[fill] (10,3) circle
[radius=0.05]; \draw[fill] (11,3) circle [radius=0.05]; \draw[fill] (12,3)
circle [radius=0.05]; \draw[very thick] (6.5,1) to (6.5,3); \draw (2,3) to
(4,1); \draw (3,3) to (1,1); \draw (4,3) to (5,1); \draw (5,3) to (2,1);
\draw (8,3) to (7,1); \draw (9,3) to (10,1); \draw (11,3) to (12,1); \draw
(12,3) to (11,1); \draw (1,3) to [out=-20,in=200] (7,3); \draw (6,3) to
[out=-20,in=200] (10,3); \draw (3,1) to [out=20,in=160] (8,1); \draw (6,1)
to [out=20,in=160] (9,1); \end{tikzpicture}  \label{Example_diagram}
\end{equation}
We let $t^L_i(d),t^R_j(d)$ denote the $i$th and $j$th vertices in the top on
the right and left side of the wall respectively, as denoted in the above
diagram. And we let $b^L_i(d),b^R_j(d)$ denote the $i$th and $j$th vertices
in the bottom on the right and left side of the wall respectively. We denote
$t(d)$ the set of vertices in the top of the diagram, and $b(d)$ the set of
vertices in the bottom of the diagram.

We then define a cycle type of a diagram $d$ through traversing the diagram $%
d$ as follows:

\begin{enumerate}
\item[(1)] Start with vertex $t^L_1(d)$ if it exists; otherwise start with $%
b_1^R(d)$.

\item[(2)] Follow the edge connected to this vertex. Upon reaching the other
side of the edge, jump to the vertex directly above it if we are in $b(d)$
or to the vertex below it if we are in $t(d)$, and continue following the
edge connected to that vertex.

\item[(3)] Following the above procedure, we will end by returning to the
starting vertex and complete a cycle in $d$. We denote such a cycle $c_1$.

\item[(4)] We start from another vertex that has not been visited and repeat
the above process. Each time we finish the above process we will get a cycle
$c_{i}$ in $d$. And we end the process if we visited all vertices of $d$.
\end{enumerate}

In this way, we decompose $d$ into disjoint cycles. For example in the above
diagram (\ref{Example_diagram}), we have 4 disjoint cycles. The first is on
vertices $1,1^{\prime },2^{\prime },3$, the second on vertices $2,4,5$, the
third on $6,4^{\prime },3^{\prime }$, and the fourth on $5^{\prime
},6^{\prime }$.

For each cycle $c$ in $d$, we define $type(c)$ to be the the number of
vertical edges in $c$ on the left side of the wall minus the the number of
vertical edges in $c$ on the right side of the wall. The integer $type(c)$
is called the cycle type of $c$. We can always reorder all cycles in $d~$in
such a way that
\begin{equation}
type(c_{1})\geq type(c_{2})\geq \cdots \geq type(c_{s}).
\label{cycle_ordered}
\end{equation}%
For example, in the above case (\ref{Example_diagram}), the cycle type of
each cycle in $d$ is
\begin{equation}
type(2,4,5)=3\geq type(1,1^{\prime },2^{\prime },3)=0\geq type(6,4^{\prime
},3^{\prime })=-1\geq type(5^{\prime },6^{\prime })=-2.
\end{equation}%
We then associate with $d\in \mathcal{D}_{n,m}$ a $(n+m)$-staircase $\zeta
(d)$ obtained from (\ref{cycle_ordered}) by inserting $(n+m-s)$ zeros
between the positive values and negative values. That is to say that $\zeta
(d)=(k,\zeta ^{+},\zeta ^{-})$ with $\zeta ^{+}$ the same as the positive
part of $type(c_{i})$ and $\zeta ^{-}$ the same as the negative part of $%
type(c_{i})$. And we call $\zeta (d)$ the cycle type of $d$. For example, in
the above example, $\zeta (d)=(3,0^{9},-1,-2)$. Zero cycles contain the same
number of vertices on each side of the wall. Thus there exists an integer $%
h(d)$ satisfying $\zeta (d)^{+}\vdash (n-h(d))$ and $\zeta (d)^{-}\vdash
(m-h(d))$. In our above example, $n=m=6$, and $h(d)=3$.

The above procedure gives us a way to assign each $d\in \mathcal{D}_{n,m}$ a
$(n,m)$-staircase $\zeta (d)$. We then describe a way to assign each $(n,m)$%
-staircase $\zeta $ a element $d_{\zeta }\in \mathcal{D}_{n,m}$. First for
each $k\in \mathbb{Z}-\{0\}$, we define an element
\begin{equation}
\begin{tikzpicture} \node at (-0.2,1.5) {$d_k = $}; \node [above] at (0.5,2)
{$1$}; \node [above] at (1.5,2) {$2$}; \node [above] at (3,2) {$k-1$}; \node
[above] at (4,2) {$k$}; \draw[fill] (0.5,1) circle [radius=0.05];
\draw[fill] (1.5,1) circle [radius=0.05]; \draw[fill] (2.5,1) circle
[radius=0.05]; \draw[fill] (4,1) circle [radius=0.05]; \draw[fill] (0.5,2)
circle [radius=0.05]; \draw[fill] (1.5,2) circle [radius=0.05]; \draw[fill]
(3,2) circle [radius=0.05]; \draw[fill] (4,2) circle [radius=0.05]; \draw
(0.5,2) to (1.5,1); \draw (1.5,2) to (2.5,1); \draw (3,2) to (4,1); \draw
(4,2) to (0.5,1); \draw [dotted] (2.5,1) to (4,1); \draw [dotted] (1.5,2) to
(3,2); \draw[very thick] (4.5,1) to (4.5,2); \node at (5.5,1.5) {if $k > 0$,
}; \node at (7,1.5) {$d_k = $}; \draw[very thick] (7.5,1) to (7.5,2); \node
[above] at (8,2) {$1'$}; \node [above] at (9,2) {$2'$}; \node [above] at
(10.5,2) {$k^{\prime}-1$}; \node [above] at (11.5,2) {$k'$}; \draw[fill]
(8,1) circle [radius=0.05]; \draw[fill] (9.5,1) circle [radius=0.05];
\draw[fill] (10.5,1) circle [radius=0.05]; \draw[fill] (11.5,1) circle
[radius=0.05]; \draw[fill] (8,2) circle [radius=0.05]; \draw[fill] (9,2)
circle [radius=0.05]; \draw[fill] (10.5,2) circle [radius=0.05]; \draw[fill]
(11.5,2) circle [radius=0.05]; \draw [dotted] (8,1) to (9.5,1); \draw
[dotted] (9,2) to (10.5,2); \node at (12.5,1.5) {if $k^{\prime}=-k>0$.};
\draw (8,1) to (9,2); \draw (9.5,1) to (10.5,2); \draw (10.5,1) to (11.5,2);
\draw (8,2) to (11.5,1); \end{tikzpicture}
\end{equation}%
And we also define element $e$:
\begin{equation}
\begin{tikzpicture} \node at (0,1.5) {$e = $}; \draw[fill] (1,1) circle
[radius=0.05]; \draw[fill] (2,1) circle [radius=0.05]; \draw[fill] (1,2)
circle [radius=0.05]; \draw[fill] (2,2) circle [radius=0.05]; \draw[very
thick] (1.5,1) to (1.5,2); \draw (1,1) to [out=20,in=160] (2,1); \draw (1,2)
to [out=-20,in=200] (2,2); \end{tikzpicture}
\end{equation}%
Now for a $(n,m)$-staircase $\zeta =(\zeta _{1},\zeta _{2},\dots ,\zeta
_{n+m})$, which could also be written as $\zeta =(k,\zeta ^{+},\zeta ^{-})$.
There exist a integer $h(\zeta )$ that $\zeta ^{+}\vdash (n-h(\zeta )),\zeta
^{-}\vdash (m-h(\zeta ))$. And assume that the length of positive and
negative part of $\zeta $ are $l(\zeta ^{+})=i$ and $l(\zeta ^{-})=j$. Then
we define
\begin{equation}
d_{\zeta ^{+}}=d_{\zeta _{1}}\otimes d_{\zeta _{2}}\otimes \cdots \otimes
d_{\zeta _{i}},\quad d_{\zeta ^{+}}=d_{\zeta _{m+n-j}}\otimes \cdots \otimes
d_{\zeta _{m+n-1}}\otimes d_{\zeta _{m+n}}.
\end{equation}%
And we define $d_{\zeta }\in B_{N}(n,m)$ to be
\begin{equation}
d_{\zeta }=d_{\zeta ^{+}}\otimes e^{\otimes h(\zeta )}\otimes d_{\zeta ^{-}}.
\end{equation}%
As an example, let $\zeta =(3,0^{9},-1,-2)$ as a $(6,6)$ staircase. $h(\zeta
)=3$, $\zeta ^{+}=(3)$ and $\zeta ^{-}=(2,1)$. In this case, we have
\begin{equation}
\begin{tikzpicture} \node at (0,2) {$d_{\zeta} = $}; \node [above] at (1,3)
{$1$}; \node [above] at (2,3) {$2$}; \node [above] at (3,3) {$3$}; \node
[above] at (4,3) {$4$}; \node [above] at (5,3) {$5$}; \node [above] at (6,3)
{$6$}; \node [above] at (7,3) {$1'$}; \node [above] at (8,3) {$2'$}; \node
[above] at (9,3) {$3'$}; \node [above] at (10,3) {$4'$}; \node [above] at
(11,3) {$5'$}; \node [above] at (12,3) {$6'$}; \draw[fill] (1,1) circle
[radius=0.05]; \draw[fill] (2,1) circle [radius=0.05]; \draw[fill] (3,1)
circle [radius=0.05]; \draw[fill] (4,1) circle [radius=0.05]; \draw[fill]
(5,1) circle [radius=0.05]; \draw[fill] (6,1) circle [radius=0.05];
\draw[fill] (1,3) circle [radius=0.05]; \draw[fill] (2,3) circle
[radius=0.05]; \draw[fill] (3,3) circle [radius=0.05]; \draw[fill] (4,3)
circle [radius=0.05]; \draw[fill] (5,3) circle [radius=0.05]; \draw[fill]
(6,3) circle [radius=0.05]; \draw[fill] (7,1) circle [radius=0.05];
\draw[fill] (8,1) circle [radius=0.05]; \draw[fill] (9,1) circle
[radius=0.05]; \draw[fill] (10,1) circle [radius=0.05]; \draw[fill] (11,1)
circle [radius=0.05]; \draw[fill] (12,1) circle [radius=0.05]; \draw[fill]
(7,3) circle [radius=0.05]; \draw[fill] (8,3) circle [radius=0.05];
\draw[fill] (9,3) circle [radius=0.05]; \draw[fill] (10,3) circle
[radius=0.05]; \draw[fill] (11,3) circle [radius=0.05]; \draw[fill] (12,3)
circle [radius=0.05]; \draw[very thick] (6.5,1) to (6.5,3); \draw (1,3) to
(2,1); \draw (2,3) to (3,1); \draw (3,3) to (1,1); \draw (10,3) to (11,1);
\draw (11,3) to (10,1); \draw (12,3) to (12,1); \draw (4,1) to
[out=20,in=160] (7,1); \draw (4,3) to [out=-20,in=200] (7,3); \draw (5,1) to
[out=20,in=160] (8,1); \draw (5,3) to [out=-20,in=200] (8,3); \draw (6,1) to
[out=20,in=160] (9,1); \draw (6,3) to [out=-20,in=200] (9,3);
\end{tikzpicture}
\end{equation}

Then the following results tells us that the character of a $d \in \mathcal{D%
}_{n,m}$ of a certain type $\zeta$ is related to the character of the
standard diagram $d_{\zeta}$.

\begin{thm}
{\text{\cite{Halverson:1996}}} Let $d\in \mathcal{D}_{n,m}$ with $\zeta
=\zeta (d)$ and $h=h(d)$. Then for any character $\chi _{B}$ of the Brauer
algebra $B_{N}(n,m)$, we have:
\begin{equation}
\chi _{B}(d)=N^{z(d)-h(d)}\chi _{B}(d_{\zeta }),
\end{equation}%
where $z(d)~\text{is the number of zero-cycles in d.}$
\end{thm}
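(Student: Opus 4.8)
The plan is to derive the identity from two structural facts about $B_N(n,m)$ together with a combinatorial normalization of diagrams. The first fact is that $\chi_B$ is a \emph{trace}: $\chi_B(xy)=\chi_B(yx)$ for all $x,y\in B_N(n,m)$, so $\chi_B$ is constant on conjugates by any invertible element; in particular $\chi_B(\sigma d\sigma^{-1})=\chi_B(d)$ for every wall-respecting permutation $\sigma\in S_n\times S_m\subset B_N(n,m)$, and such conjugations visibly leave the cycle type $\zeta(d)$ and the number of zero cycles $z(d)$ unchanged. The second fact is the contraction relation $e^2=Ne$ (together with the elementary relations $s_ie_i=e_i=e_is_i$ obtained by stacking diagrams), which is the only place a power of $N$ is produced. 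The core reduction is the following: \emph{if $d,d'\in\mathcal D_{n,m}$ have the same cycle type, $\zeta(d)=\zeta(d')$, then $\chi_B(d)=N^{z(d)-z(d')}\chi_B(d')$.}

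Granting this, the theorem is immediate. Indeed $d_\zeta=d_{\zeta^+}\otimes e^{\otimes h}\otimes d_{\zeta^-}$, with $h=h(\zeta)=h(d)$, has cycle type $\zeta$, and running the traversal procedure on it shows that its zero cycles are exactly the $h$ copies of $e$, while the blocks $d_{\zeta^{\pm}}$ (products of the cyclic diagrams $d_k$ with $k\neq 0$) contribute only cycles of strictly positive, resp. strictly negative, type. Hence $z(d_\zeta)=h(d)$, and the reduction applied with $d'=d_\zeta$ gives
\[
\chi_B(d)=N^{\,z(d)-z(d_\zeta)}\chi_B(d_\zeta)=N^{\,z(d)-h(d)}\chi_B(d_\zeta).
\]

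To prove the reduction I would connect $d$ and $d'$ by a finite chain of diagrams in which each step is either (i) conjugation by some $\sigma\in S_n\times S_m$ — which changes nothing on $\zeta$, $z$, or $\chi_B$; or (ii) a single use of $e^2=Ne$ in a suitable position — which preserves the multiset of cycle types $\zeta$, changes $z$ by exactly one, and multiplies $\chi_B$ by $N^{\pm1}$ to match. Concretely: using moves of type (i) one first disentangles the cycles of $d$ so that the positive-type cycles sit on a fixed block of strands in the standard shape $d_{\zeta^+}$, the negative-type cycles on $d_{\zeta^-}$, and all zero cycles on the remaining $h$ left and $h$ right strands, forming a sub-diagram $P\in\mathcal D_{h,h}$ with $z(P)=z(d)$ all of whose cycles are zero cycles. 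Since $d_{\zeta^{\pm}}$ are invertible, by the trace property only the zero block $P$ can affect the $N$-power, so one is reduced to showing that $\chi_B$ on such a $P$ (tensored with the fixed invertible blocks) equals $N^{\,z(P)-h}$ times its value on $d_\zeta$. This goes by induction on $h-z(P)$: if $z(P)<h$ then some zero cycle of $P$ meets at least two left strands (and, being balanced, at least two right strands), and after conjugating two consecutive strands of that cycle into the position of an $e$ one can split it into a shorter zero cycle plus one fresh copy of $e$; by $e^2=Ne$ and the trace property this multiplies $\chi_B$ by $N$ and raises $z$ by one, and iterating reaches $d_\zeta$.

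The main obstacle is precisely this combinatorial connectivity: that ``same cycle type'' is the correct equivalence, that conjugation by $S_n\times S_m$ can genuinely standardize each individual cycle (the non-zero cycles to the blocks $d_{\zeta^{\pm}}$ and the zero cycles to the right positions), that the type-(ii) splitting move never alters $\zeta$ but only $z$, and that there is always ``enough room'' to perform it whenever $z(d)<h(d)$ — the last point resting on the inequality $z(d)\le h(d)$ recorded earlier. Everything else — invariance under (i), the scalar $N$ produced by (ii), and the cycle count $z(d_\zeta)=h(d)$ — is a routine inspection of diagrams. A more representation-theoretic alternative would evaluate $\chi_B$ through Schur--Weyl duality as a trace on $V^{\otimes n}\otimes\bar V^{\otimes m}$ twisted by diagonal $GL(N)$ elements, where each zero cycle contributes a trace of the identity and hence a factor of $N$; but the diagrammatic reduction above keeps the bookkeeping of the $N$-powers most transparent.
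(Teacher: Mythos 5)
First, a point of reference: the paper does not prove this theorem. It is quoted verbatim from \cite{Halverson:1996} in Appendix B, so there is no in-paper argument to compare yours against. Your outline does capture the skeleton of the standard proof (Halverson's, and Ram's for the ordinary Brauer algebra): the trace property of $\chi_B$, the loop relation $e^2=Ne$ as the sole source of powers of $N$, the reduction of an arbitrary diagram to the standard diagram $d_\zeta$, and the correct counts $z(d_\zeta)=h(d)$ and $z(d)\le h(d)$.

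There is, however, a genuine gap, and it sits exactly at the step you flag as the "main obstacle": the claim that conjugation by $S_n\times S_m$ standardizes the non-zero cycles onto the blocks $d_{\zeta^{\pm}}$, so that the entire $N$-power is produced inside an $h\times h$ zero block $P$. That claim is false, not merely unproven. Conjugation $d\mapsto\sigma d\sigma^{-1}$ preserves the number of horizontal edges and the structure of each individual cycle, but a cycle of \emph{non-zero} type may contain horizontal (contraction) edges; it then occupies strictly more strands than the standard $d_k$ and cannot be conjugated onto it. Such cycles contribute to $h(d)$ without contributing to $z(d)$, and they account for part of the factor $N^{z(d)-h(d)}$. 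Concretely, in $B_N(2,1)$ let $d'$ be the diagram with the vertical edge joining top-left vertex $1$ to bottom-left vertex $2$, a top horizontal edge joining top-left vertex $2$ to the top-right vertex, and a bottom horizontal edge joining bottom-left vertex $1$ to the bottom-right vertex. The traversal gives a single cycle of type $+1$, so $\zeta(d')=(1,(1),\emptyset)$, $z(d')=0$, $h(d')=1$, and $d_{\zeta}=d_1\otimes e$. Writing $s$ for the transposition of the two left strands, one checks $d'=d_\zeta s$ and $d'd_\zeta=d_\zeta$ with no closed loop formed; hence, using $d_\zeta^2=Nd_\zeta$ and the trace property, $\chi_B(d_\zeta)=\chi_B(d_\zeta s\, d_\zeta)=\chi_B(s\,d_\zeta^2)=N\chi_B(d_\zeta s)=N\chi_B(d')$, i.e.\ $\chi_B(d')=N^{-1}\chi_B(d_\zeta)$. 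This is a nontrivial power of $N$ arising from a diagram with no zero cycles at all, which your induction on $h-z(P)$ can never produce, since for this $d'$ your "disentangling" step would have to place a three-strand cycle onto the one-strand block $d_1$. The repair is to apply your splitting move uniformly: every copy of $e$ that must be extracted to reach $d_\zeta$ — one for each unit of $h$ beyond the first contributed by a zero cycle, \emph{and} one for each contraction pair sitting inside a non-zero cycle — raises $z$ by one and multiplies $\chi_B$ by $N$. Your bookkeeping rule per move is right; the architecture that confines all such moves to a zero block is what fails.
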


The above formula tells us that if two $d,d^{\prime }\in \mathcal{D}_{n,m}$
have the same cycle type $\zeta (d)=\zeta (d^{\prime })=\zeta $. Then any
character of Brauer algebra evaluated on the two elements are the same up to
a constant that depends on $N$. For this reason we call the class labeled by
$(n,m)$-staircase $\zeta $ character class.

Now we come to the irreducible representations of Brauer algebra. The
irreducible representation of Brauer algebra is also labeled by $(n,m)$%
-staircase. We have the following formula:
\begin{equation}
\chi ^{\gamma }(d_{\zeta })=N^{h(\zeta )}\sum_{\substack{ \lambda \vdash
n^{\prime }  \\ \pi \vdash m^{\prime }}}\left( \sum_{\delta \vdash
(l-h)}g(\delta ,\gamma ^{+};\lambda )g(\delta ,\gamma ^{-};\pi )\right) \chi
_{S_{n^{\prime }}}^{\lambda }(\zeta ^{+})\chi _{S_{m^{\prime }}}^{\pi
}(\zeta ^{-}),
\end{equation}%
where $\gamma =(l,\gamma ^{+},\gamma ^{-})$, $\zeta =(h,\zeta ^{+},\zeta
^{-})$. And $g$ in the above formula is the Littlewood-Richardson
coefficient. Therefore, for arbitrary $d\in \mathcal{D}_{n,m}$, we have
\begin{equation}
\chi ^{\gamma }(d)=N^{z(d)}\sum_{\substack{ \lambda \vdash n^{\prime }  \\ %
\pi \vdash m^{\prime }}}\left( \sum_{\delta \vdash (l-h)}g(\delta ,\gamma
^{+};\lambda )g(\delta ,\gamma ^{-};\pi )\right) \chi _{S_{n^{\prime
}}}^{\lambda }(\zeta ^{+})\chi _{S_{m^{\prime }}}^{\pi }(\zeta ^{-}),
\label{character}
\end{equation}%
where $\zeta =\zeta (d)$.

\end{document}